\crefname{equation}{eq.}{eqs.}
\Crefname{equation}{Eq.}{Eqs.}
\newcommand*{\addFileDependency}[1]{%
  \typeout{(#1)}
  \@addtofilelist{#1}
  \IfFileExists{#1}{}{\typeout{No file #1.}}
}
\DeclareMathOperator*{\argmin}{arg\,min}
\newcommand{\norm}[1]{\left\lVert#1\right\rVert}
\newcommand{\pde}[1]{\frac{\partial}{\partial #1}}
\newcommand{\distas}[1]{\mathbin{\overset{#1}{\kern\z@\sim}}}%
\newtheorem{theorem}{Theorem}
\newtheorem{definition}{Definition}
\newtheorem{lemma}{Lemma}
\newtheorem{remark}{Remark}
\newtheorem{proposition}{Proposition}
\newcommand{\bas}[1]{\begin{align*}#1\end{align*}}
\newcommand{\ba}[1]{\begin{align}#1\end{align}}
\newcommand{\bE}{\mathbb{E}}
\newcommand{\bR}{\mathbb{R}}
\newcommand{\cA}{\mathcal{A}}
\newcommand{\cL}{\mathcal{L}}
\newcommand{\cO}{\mathcal{O}}
\newcommand{\cN}{\mathcal{N}}
\newcommand{\kl}{\text{KL}}
\newcommand{\beqs}{\vspace{0mm}\begin{eqnarray}}
\newcommand{\eeqs}{\vspace{0mm}\end{eqnarray}}
\newcommand{\barr}{\begin{array}}
\newcommand{\earr}{\end{array}}
\newcommand{\Amat}[0]{{{\bf A}}}
\newcommand{\Imat}{{\bf I}}
\newcommand{\Xmat}[0]{{{\bf X}}}
\newcommand{\gv}[0]{{\boldsymbol{g}} }
\newcommand{\uv}{\boldsymbol{u}}
\newcommand{\xv}{\boldsymbol{x}}
\newcommand{\yv}{\boldsymbol{y}}
\newcommand{\zv}{\boldsymbol{z}}
\newcommand{\cdotv}{\boldsymbol{\cdot}}
\newcommand{\expfunc}{\ensuremath{f}}
\newcommand{\Sigmamat}[0]{{\boldsymbol{\Sigma}}}
\newcommand{\alphav}{\boldsymbol{\alpha}}
\newcommand{\betav}[0]{{\boldsymbol{\beta}}}
\newcommand{\epsilonv}{\boldsymbol{\epsilon}}
\newcommand{\thetav}{\boldsymbol{\theta}}
\newcommand{\piv}{\boldsymbol{\pi}}
\newcommand{\phiv}{\boldsymbol{\phi}}
\newcommand{\E}{\mathbb{E}}
\newcommand{\cZ}{\mathcal{Z}}
\newcommand{\brackets}[1]{\left[ #1 \right]}
\newcommand{\parenth}[1]{\left( #1 \right)}
\newcommand{\given}{\,|\,}
\newcommand{\deltaf}{\Delta_{\zv,k} f}
\DeclareSymbolFont{symbolsC}{U}{txsyc}{m}{n}
\DeclareMathSymbol{\notniFromTxfonts}{\mathrel}{symbolsC}{61}
\definecolor{alizarin}{rgb}{0.82, 0.1, 0.26}
\definecolor{applegreen}{rgb}{0.55, 0.71, 0.0}
\newcommand{\var}{\text{var}}
\newcommand{\half}{\frac{1}{2}}
\newcommand{\hpless}{\mathbf{1}_{[u<\sigma(\phi)]}}
\newcommand{\hpgreater}{\mathbf{1}_{[u>1-\sigma(\phi)]}}
\newcommand{\hplessv}{\mathbf{1}_{[\uv<\sigma(\phiv)]}}
\newcommand{\hpgreaterv}{\mathbf{1}_{[\uv>1-\sigma(\phiv)]}}
\definecolor{ymz}{RGB}{191, 87, 0}
\definecolor{bowei}{RGB}{255,128,0}
\def\arrvline{\hfil\kern\arraycolsep\vline\kern-\arraycolsep\hfilneg}
\newcommand*\samethanks[1][\value{footnote}]{\footnotemark[#1]}
\def\spacingset#1{\renewcommand{\baselinestretch}%
{#1}\small\normalsize} \spacingset{1}
\renewenvironment{abstract}%
{%
  \vskip 0.075in%
  \centerline%
  {\large\bf Abstract}%
  \vspace{0.5ex}%
  \begin{quote}%
}
{
  \par%
  \end{quote}%
  \vskip 1ex%
}
\renewcommand{\baselinestretch}{1.0}
\title{Probabilistic Best Subset Selection via \\ Gradient-Based Optimization}
\author{\name Mingzhang Yin \email mingzhang.yin@columbia.edu  \\
       \addr Data Science Institute\\
       Columbia University
       \AND
       \name Nhat Ho \email minhnhat@utexas.edu   \\
       \addr Department of Statistics and Data Science\\
       The University of Texas at Austin
       \AND 
       \name Bowei Yan \email yanbowei@gmail.com \\
       \addr Independent Researcher
       \AND 
       \name Xiaoning Qian \email xqian@ece.tamu.edu \\
       \addr Department of Electrical and Computer Engineering, \\ Department of Computer Science and Engineering, \\
       Texas A\&M University;  \\
       \addr Computational Science Initiative, 
       Brookhaven National Laboratory
       \AND 
       \name Mingyuan Zhou \email mingyuan.zhou@mccombs.utexas.edu \\
       \addr Department of Information, Risk, and Operations Management\\
       The University of Texas at Austin}
\author{
  Mingzhang Yin  \thanks{Columbia University}\\
   \texttt{mingzhang.yin@columbia.edu} \\
  \and
  Nhat Ho \thanks{The University of Texas at Austin}\\
   \texttt{minhnhat@utexas.edu} \\
  \and
 Bowei Yan\thanks{Independent Researcher}\\
  \texttt{yanbowei@gmail.com } \\
  \and
Xiaoning Qian \thanks{Texas A\&M University}\\
  \texttt{xqian@ece.tamu.edu } \\
  \and
  Mingyuan Zhou \samethanks[2]\\
  \texttt{mingyuan.zhou@mccombs.utexas.edu} 
}
\begin{document}

\maketitle

\begin{abstract}
In high-dimensional statistics, variable selection recovers the latent sparse patterns from all possible covariate combinations. This paper proposes a novel optimization method to solve the exact $L_0$-regularized regression problem, which is also known as the best subset selection. 
We reformulate the optimization problem from a discrete space to a continuous one via probabilistic reparameterization. The new objective function is differentiable but its gradient often cannot be computed in a closed form. Then we propose a family of unbiased gradient estimators to optimize the best subset selection objectives by the stochastic gradient descent. Within this family, we identify the estimator with uniformly minimum variance. Theoretically, we study the general conditions under which the method is guaranteed to converge to the ground truth in expectation. The proposed method can find the true regression model from thousands of covariates in seconds. In a wide variety of synthetic and semi-synthetic data, the proposed method outperforms existing variable selection tools based on the relaxed penalties, coordinate descent, and mixed integer optimization in both sparse pattern recovery and out-of-sample prediction. 
\end{abstract}

\section{Introduction}\label{sec:introduction}

Variable selection by regularized regression is widely applied to uncover sparse structures in high dimensional data. 
It is a natural approach to solving a regression problem with the constraints on the $L_0$-norm of the coefficients, as it directly regularizes the number of variables included in the regression model.  This is known as the best subset selection problem~\citep{friedman2001elements, fan2010selective}. 
In this paper, we study $L_0$-regularized regression with the following optimization objective function
\ba{
\min_{\betav \in \bR^p} \Big\{ \frac{1}{n}\norm{\yv - \Xmat\betav}^2_2+ \lambda \|\betav\|_0 \Big\}, 
\label{eq:bss_regular}
}
where $\yv=(y_1,\ldots,y_n)^{\top} \in \bR^{n}$ is a vector of observed response variables for n units, $\Xmat=(\xv_1,\ldots,\xv_n)^{\top} \in \bR^{n\times p}$ is the design matrix, and $\betav \in \bR^p$ are the regression coefficients. The $L_0$ regularization in \Cref{eq:bss_regular} is defined as $\norm{\betav}_0: =\sum_{j=1}^p \mathbf{1}_{[ \beta_j \neq 0 ]}$ where $\mathbf{1}_{[\cdotv]}$ is an indicator function that equals to one if the condition is true and zero otherwise. This regularization counts the number of nonzero elements.  We consider the high-dimensional regime, where the number of covariates $p$ exceeds the sample size $n$, and can potentially 
grow with~$n$. 

The best subset selection objective in \Cref{eq:bss_regular} is nonconvex and is discontinuous with respect to the coefficient $\betav$,  which is an NP-hard problem  to solve \citep{natarajan1995sparse}.  A widely used approach to improving the computational efficiency is to approximate the  $L_0$-norm with continuous regularizations. As a pioneer framework, the Bridge regression \citep{frank1993statistical, fu1998penalized} considers the $L_{q}$ penalties ($q>0$), defined as $\sum_{j=1}^p \beta_j^q$ for  the regression coefficient $ \beta_j$. When $q\geq 1$, the $L_{q}$ penalty is convex, while when $q\leq 1$, the regularization encourages sparse estimation and hence performs variable selection \citep{fan2010selective}. In the intersection of these two domains lies the widely used least absolute shrinkage and selection operator (\textsc{Lasso}). Asymptotically, \textsc{Lasso} is accurate for both sparse pattern recovery and coefficient estimation \citep{zhao2006model, candes2009near, wainwright2009sharp}. However, in the finite sample setting, it introduces downward bias due to the shrinkage effect of the $L_1$ norm and might select an excessively large subset based on the cross validation in practice \citep{bertsimas2016best,hastie2020best}. 

In this paper, we solve the best subset selection problem directly. Compared to continuous approximations, the solution of the exact best subset selection enjoys superior statistical properties, such as the unbiasedness of regression coefficients, known as the oracle property \citep{greenshtein2006best, zhang2012general, belloni2013least}, and the low in-sample risk \citep{foster1994risk}. For example, \citet{johnson2015risk} shows that the predictive risk of the $L_1$-regularized linear regression cannot outperform $L_0$-regularized regression by more than a small constant factor and in some cases is infinitely worse under assumptions of the design matrix. 

Thanks to the benefits of $L_0$ penalty and the fast improvements in modern computational tools, recently there is renewed interest in solving the exact best subset selection problem.  \citet{bertsimas2016best} 
solve the constrained best subset selection problem 
\ba{
\min_{\betav \in \bR^p} \Big\{ \frac{1}{n}\norm{\yv - \Xmat\betav}^2_2\Big\} \quad \text{subject to} \quad \norm{\betav}_0 \leq \hat{S}
\label{eq:bss_constrained} 
}
with a two-stage algorithm using mixed integer optimization (MIO). 
MIO methods scale up the  number of covariates $p$ for the best subset selection from 30s to 1000s \citep{furnival1974regressions,bertsimas2016best}. However, the MIO steps rely upon a nonconvex optimization tool such as Gurobi \citep{gurobi}, which is not straightforward to generalize
beyond linear
regression problems.  Moreover, solving \Cref{eq:bss_constrained} exactly with an MIO algorithm might be computationally expensive \citep{gomez2021mixed}.  The computational efficiency of the MIO method is  improved by searching a hierarchy of local minima of the objective function in \Cref{eq:bss_regular} with coordinate descent and local combinatorial search \citep{hazimeh2018fast}.   ~\looseness=-1

This paper proceeds in the direction of solving the best subset selection by providing a gradient-based solution. We first propose a probabilistic objective for the exact $L_0$-regularized regression by casting the discrete optimization problem to an equivalent one in the continuous space. The new objective function is differentiable, but the exact gradient is infeasible to compute in practice when the covariates are high dimensional. Then, we propose a family of unbiased estimators to approximate the infeasible gradient. Within this unbiased estimator family,  we identify the estimator with minimal variance and a non-vanishing signal-to-noise ratio. By construction, the gradient estimators are computationally efficient in high dimensions. The result is an end-to-end solver for the best subset selection with the modern stochastic gradient descent (SGD) as the main workhorse. We provide theoretical insights on the conditions that guarantee the convergence of the gradient descent updates to the ground truth in expectation. ~\looseness=-1 %

The proposed gradient method has several strengths. First, it scales to practical problems with high dimensional covariates. The exact gradient can be estimated accurately with a few random samples, and the algorithm searches for the best subset in the steepest descent direction of the objective function.  
Second, the gradient method has high flexibility. Aside from
 the regularized regression objective in \Cref{eq:bss_regular}, we further apply the gradient method to solve a new variational objective for the Bayesian linear regression with the spike-and-slab prior \citep{mitchell1988bayesian}. Empirically, compared to existing variable selection methods with continuous relaxation and $L_0$ penalties, we find the gradient-based method improves the sparsity pattern recovery, the coefficient estimation, and out-of-sample prediction across a wide range of problem settings. 
 
\paragraph{Related work.}   A major paradigm in variable selection is based on continuous penalties. In addition to the Bridge regression we have discussed, there are a variety of nonconvex penalties for sparsity learning. Smoothly clipped absolute deviation (SCAD) \citep{fan2001variable} and minimax concave penalty (MCP) \citep{zhang2010nearly}, for example, approximate the hard-thresholding property of $L_0$ regularizer by the piece-wise nonconvex functions. Some work construct penalties to directly resemble the functional form of the $L_0$ penalty, known as the pseudo-$L_0$ penalties \citep{liu2007variable,shen2012likelihood,dicker2013variable}.  Since there is a rich literature on variable selection, we refer  readers to several representative publications \citep{friedman2001elements, fan2010selective, bertsimas2016best, hastie2020best} and the references therein for comprehensive reviews.

The $L_0$-regularized subset selection is closely related to the standard information theoretic methods for model selection. In particular, 
when the data is Gaussian distributed,  Eq.~\eqref{eq:bss_regular} is equivalent to the Akaike information criterion (AIC) \citep{akaike1974new, akaike1998information} for $\lambda = 1/n$, and is equivalent to  the Bayesian information criterion (BIC) \citep{schwarz1978estimating} for $\lambda = \log(n)/(2n)$. The objective function in \Cref{eq:bss_regular}  incorporates the information criterions directly into the optimization, hence performing model comparison among  a wide range of candidate models.  ~\looseness=-1

Last, the proposed gradient method is related to the growing field of discrete optimization in machine learning \citep{mohamed2020monte}. These works often consider a latent variable model with Bernoulli or Categorical latent variables, such as the actions in policy learning and the activation layers in belief networks \citep{gu2015muprop,CARSM2020}. A variety of variance reduction techniques  lie at the core of these discrete optimization methods, such as Gumbel-softmax relaxation \citep{maddison2016concrete,jang2016categorical},  control variates \citep{tucker2017rebar}, antithetic sampling \citep{yin2018arm}, and Gaussian-based continuous relaxation \citep{yamada2020feature}.  In spite of empirical success in optimizing neural networks in the areas such as deep generative models and reinforcement learning, the statistical properties of these methods are largely unknown. By studying the best subset selection problem with gradient methods, we systematically analyze the bias-variance trade-off and the convergence properties, and further propose a new discrete optimization tool. ~\looseness=-1

\paragraph{Organization.} The remainder of this paper is organized as follows. In Section~\ref{Sec:l0_reformulation}, we provide a probabilistic reformulation of the $L_0$-regularized linear regression problem. In
Section~\ref{sec:estimators}, we propose a family of unbiased gradient estimators  to solve the reformulated optimization problem. In Section~\ref{Sec:convergence}, we analyze the conditions that guarantee the convergence to the ground truth in expectation. In Section~\ref{sec:vb}, we generalize the proposed gradient methods to optimize a novel variational lower bound for the Bayesian best subset selection. Experiments and results are described in Section~\ref{sec:experiemnt}, and discussions follow in Section~\ref{sec:discussion}.

\paragraph{Notation.} We use $n$ as the sample size, $p$ as the number of covariates, and $S$ as the number of non-zero true coefficients. We use $\xv_i$ to denote the $i^{th}$ row of matrix $\Xmat$ and $X_j$ as its $j^{th}$ column. We use $\Xmat_{\mathcal{A}}$ to denote a submatrix of $\Xmat$ as $\Xmat_{\mathcal{A}} = \{X_j \}_{j\in \mathcal{A}}$, $\mathcal{A}\subseteq [p]$, where $[p]\coloneqq \{1,\ldots,p\}$. We assume a constant bias term is contained in the covariate matrix. 

\section{Reformulation of \texorpdfstring{$L_0-$Penalized}~~Regression}
\label{Sec:l0_reformulation}

The underlying assumption of best subset selection is that the response variables only depend on a subset of covariates $\Xmat_{\mathcal{A}}$, where $\mathcal{A}\subseteq\{1,\ldots,p\}$ is called the active set. The true active set $S$ is assumed to be much smaller than $p$. 
We decompose the regression coefficients as $\betav=\alphav\odot \zv$, using a spike-and-slab construction 
\citep{george1993variable},  where~$\odot$ denotes an element-wise product. The binary vector $\zv \in \{0,1\}^p$ represents the inclusion of covariates in the active set, and $\alphav \in \bR^p$ encodes the scale. With the augmented latent variables, the optimization problem \eqref{eq:bss_regular} can be equivalently expressed as
\ba{
\min_{\alphav, \zv} ~ \frac{1}{n} \norm{\yv - \Xmat (\alphav\odot \zv)}^2 + \lambda \norm{\zv}_0.
\label{eq:bss2} 
}
Similar to optimizing $\betav$ itself, optimizing $\zv$ is a combinatorial problem and remains NP-hard.

A major bottleneck in solving \Cref{eq:bss2} is  the discrete nature of variable $\zv$ that precludes computing the gradient with respect to it, which could have provided the direction of the steepest descent. This motivates us to reformulate the discrete optimization problem \Cref{eq:bss2} to an optimization problem in the continuous space. Such continuity greenlights the gradient-based methods, the workhorse behind modern machine learning. Specifically,  instead of directly optimizing $\zv$, we consider $\zv$ as a random variable with distribution $p(\zv; \piv) = \prod_{j=1}^p\text{Bern}(z_j; \pi_j)$, $\pi_j \in [0,1]$, where $\text{Bern}(z_j; \pi_j)$ is the Bernoulli distribution with parameter $\pi_j$.  Then the problem  in \Cref{eq:bss2} can be transformed to a probabilistic objective in the form of expectation. The new objective function  allows us to construct a stochastic gradient with Monte Carlo estimation. %
We have the following theorem for the reformulation. 

\begin{theorem}[Probabilistic Reformulation]
The $L_0$-regularized best subset selection problem \eqref{eq:bss_regular} is equivalent to the following problem
\ba{
\min_{\piv} ~\E_{\zv \sim p(\zv ; \piv)} \left[\min_{\alphav}\frac{1}{n}~ \norm{\yv - \Xmat (\alphav\odot \zv)}_2^2 + \lambda \norm{\zv}_0\right],
\label{eq:bss3} 
}
where $\piv = (\pi_1, \pi_2, \ldots, \pi_p) \in [0,1]^p$, $p(z_j = 1) = \pi_j,~ j \in \{1,\ldots,p\}$.
\label{thm:continuous}
\end{theorem}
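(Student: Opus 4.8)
The plan is to profile out the continuous scale $\alphav$ so that the problem depends on the binary vector $\zv$ alone, and then to show that lifting a minimization over the discrete cube $\{0,1\}^p$ to a minimization of an \emph{expectation} over product-Bernoulli measures leaves the optimal value unchanged, the reason being that the vertices of the parameter cube $[0,1]^p$ encode point masses. Concretely, I would first define, for each $\zv \in \{0,1\}^p$,
\[
g(\zv) \;:=\; \min_{\alphav}\ \frac{1}{n}\norm{\yv - \Xmat(\alphav\odot\zv)}_2^2 + \lambda\norm{\zv}_0 .
\]
Since $\zv$ is held fixed while the inner problem over $\alphav$ is solved, $g$ is a well-defined, finite, real-valued function on the finite set $\{0,1\}^p$ (the least-squares fit restricted to the active coordinates $\{j: z_j=1\}$ always attains a finite minimum). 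With this notation the inner bracket of \eqref{eq:bss3} is exactly $g(\zv)$, so \eqref{eq:bss3} reads $\min_{\piv}\,\E_{\zv\sim p(\zv;\piv)}[g(\zv)]$.

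Next I would check that $\min_{\zv\in\{0,1\}^p} g(\zv)$ coincides with the original objective \eqref{eq:bss_regular}. Given any $\betav$, setting $z_j = \mathbf{1}_{[\beta_j\neq 0]}$ and $\alpha_j = \beta_j$ yields $\alphav\odot\zv=\betav$ and $\norm{\zv}_0=\norm{\betav}_0$, so $g$ evaluated at this pair matches \eqref{eq:bss_regular} at $\betav$, giving $\min_{\zv} g \le \min \eqref{eq:bss_regular}$. Conversely, for any $(\alphav,\zv)$ the choice $\betav=\alphav\odot\zv$ has identical fit and $\norm{\betav}_0\le\norm{\zv}_0$, giving the reverse inequality; hence equality. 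This is precisely the reformulation \eqref{eq:bss2} already asserted in the text.

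The core of the argument is the expectation lift. For any $\piv\in[0,1]^p$,
\[
\E_{\zv\sim p(\zv;\piv)}[g(\zv)] = \sum_{\zv\in\{0,1\}^p} p(\zv;\piv)\,g(\zv) \;\ge\; \Big(\min_{\zv'} g(\zv')\Big)\sum_{\zv} p(\zv;\piv) \;=\; \min_{\zv'} g(\zv'),
\]
so the expectation can never fall below the discrete minimum. For achievability, I would take a minimizer $\zv^*\in\{0,1\}^p$ of $g$ and set $\piv^*=\zv^*$ coordinatewise, so each $\pi_j^*\in\{0,1\}$ and $p(\cdot\,;\piv^*)$ is the point mass at $\zv^*$; then $\E[g]=g(\zv^*)=\min_{\zv} g(\zv)$. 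Combining the two directions gives $\min_{\piv}\E[g(\zv)]=\min_{\zv} g(\zv)$, which by the previous step equals \eqref{eq:bss_regular}, establishing the claimed equivalence; the recovered $\zv^*$ together with the profiled $\alphav^*$ reconstruct an optimal $\betav^*=\alphav^*\odot\zv^*$.

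I do not anticipate a deep obstacle: the whole content is that relaxing to product-Bernoulli measures is \emph{tight} because the cube's vertices are Dirac masses. The only steps needing care are confirming that profiling out $\alphav$ inside the expectation is legitimate (the inner $\min_{\alphav}$ is taken pointwise in $\zv$, so no minimax interchange arises), and checking that a redundant active coordinate with optimal $\alpha_j=0$ does not corrupt the bookkeeping $\norm{\betav}_0=\norm{\zv}_0$ — it cannot, since dropping any such coordinate strictly lowers $\norm{\zv}_0$ without changing the fit, so it is absent from any optimum.
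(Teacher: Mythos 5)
Your proposal is correct, and it reaches the paper's conclusion by a cleaner route on the key step. The paper's own proof (Appendix~\ref{subsec:proof:thm:continuous}) only addresses the equivalence between \eqref{eq:bss2} and \eqref{eq:bss3}, and does so by analyzing an assumed optimizer $\piv^*$ of \eqref{eq:bss3}: it splits into cases according to whether $p(\zv;\piv^*)$ is a point mass, and when it is not, argues by contradiction that every point in the support of $p(\zv;\piv^*)$ must attain the same value of $f$, since otherwise concentrating the mass on the best support point would strictly decrease the expectation. You bypass any analysis of the optimizer entirely: for \emph{every} $\piv$ the expectation $\E_{\zv \sim p(\zv;\piv)}[g(\zv)] = \sum_{\zv} p(\zv;\piv)\, g(\zv)$ is a convex combination of the values of $g$, hence at least $\min_{\zv} g(\zv)$, and the bound is attained by choosing $\piv$ at the vertex $\zv^*$, where the product-Bernoulli law degenerates to a Dirac mass. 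This buys two things: it is more elementary, and it does not presuppose that the minimum over $\piv$ in \eqref{eq:bss3} is attained (the paper's contradiction argument implicitly needs an optimal $\piv^*$ to exist, which holds by compactness and continuity but is never stated); your argument instead establishes attainment constructively. What the paper's longer route buys is structural information your argument does not give: any optimal $\piv^*$ must place all of its mass on minimizers of $f$, which is what licenses the main-text remark that the optima of $\pi_j$ are always at the extremes (or mix only among equally good subsets). You also spell out the equivalence between \eqref{eq:bss_regular} and \eqref{eq:bss2} (the map $z_j = \mathbf{1}_{[\beta_j \neq 0]}$, $\alpha_j = \beta_j$ one way, $\betav = \alphav \odot \zv$ with $\|\betav\|_0 \le \|\zv\|_0$ the other), which the paper asserts without proof; including it makes your chain of equalities complete.
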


The equivalence can be proved by the fact that the optimal solution of Eq.~\eqref{eq:bss2} is a feasible solution of Eq.~\eqref{eq:bss3} that achieves the same objective value, and vice versa. Though Eq.~\eqref{eq:bss3} optimizes over a space larger than that of Eq.~\eqref{eq:bss2}, the optima of $\pi_j$ is always at its extremes, $i.e.$ either zero or one. By Eq.~\eqref{eq:bss2}, we relax the optimization from a discrete space to a continuous space. The proof is  in Appendix~\ref{subsec:proof:thm:continuous}.

The objective in Eq.~\eqref{eq:bss3} is a bi-level optimization problem, where the inner optimization for a given $\zv$ is an ordinary least square (OLS) problem on the design matrix $\Xmat_{\cZ}$, which has a closed-form solution. Denote the active set inferred by $\zv$ as $\cZ: = \{j\}_{j:z_j \neq 0}$. Since the computation of OLS has a cubic-scaling cost with the number of regressors, the computational speed of the inner optimization increases when the size of  $\mathcal{Z}$ decreases, $i.e.$, the sparser the faster. Since we assume the true active set size $S$ is small, an algorithm that accurately selects variables can solve the inner optimization efficiently. 

For computational convenience, we reparameterize $\piv=(\pi_1,\ldots,\pi_p)$ with the sigmoid function as $\pi_j = \sigma(\phi_j) = 1/(1+\exp(-\phi_j)), ~ j \in [p]$ and further relax the optimization space to an unconstrained space. Although under the sigmoid reparameterization, probability $\pi_j$ can reach~0~or~1 only when the logit $\phi_j$ goes to the infinity, it can be accurately approximated in practice when the absolute values of the logits $\phi_j$ are sufficiently large.

A naive approach guaranteed to select the best subset is to exhaust all possible subsets. The exhaustion method is often infeasible because it requires evaluating the function under expectation $2^p$ times with $p$ covariates. 
A key to improving computational efficiency is minimizing the number of function evaluations. Instead of a random search, our idea is to guide the function evaluation by the first-order information given by the stochastic gradient. In the following section, we build a family of unbiased gradient estimators in the univariate case, then generalize the estimators to high dimensions.

\section{A Family of Unbiased Gradient Estimators}
\label{sec:estimators}

Consider a general probabilistic objective
\ba{
\min_{\phiv}~ \mathcal{E}(\phiv) = \E_{\zv \sim p_{\phiv}(\zv)}[f(\zv)]
\label{eq:objective},
}
where $p_{\phiv}(\zv) = \prod_{j=1}^p \text{Bern}(\sigma(\phi_j))$. The $L_0$-regularized objective in \Cref{eq:bss3} is a specific instance of \Cref{eq:objective} with function 
\ba{
f(\zv) = \min_{\alphav} \norm{\yv - \Xmat (\alphav\odot \zv)}_2^2/n + \lambda \norm{\zv}_0.
\label{eq:fun-f}
}

Taking the gradient of  $\mathcal{E}(\phiv)$ in \Cref{eq:objective} with respect to $\phiv$, and exchanging the derivative and integral, we have 
\ba{
\nabla_{\phiv} \mathcal{E}(\phiv) = \nabla_{\phiv} \E_{\zv \sim p_{\phiv}(\zv)}[f(\zv)] %
=\E_{\zv \sim p_{\phiv}(\zv)}[f(\zv) \nabla_{\phiv} \log p_{\phiv}(\zv)].
\label{eq:reinforce}
}
The last equality of \Cref{eq:reinforce} is called the \emph{score method} in statistics \citep{serfling2009approximation} and \emph{REINFORCE} in reinforcement learning (RL) \citep{williams1992simple}. 
The expectation in Eq.~\eqref{eq:reinforce} cannot be computed analytically for high dimensional $\zv$, but it can be approximated by an unbiased Monte Carlo estimation $\frac 1 K \sum_{k=1}^K f(\zv_k) \nabla_{\phiv} \log p_{\phiv}(\zv_{k}) $, with $\zv_1,\ldots,\zv_K \stackrel{iid}\sim p_{\phiv}(\zv)$. This approximation is widely used as the policy gradient in RL and robotics \citep{peters2006policy}. ~\looseness=-1

The continuous reformulation in Eq.~\eqref{eq:bss3} and the score method in Eq.~\eqref{eq:reinforce} make it possible to solve the best subset selection with gradient descent. One advantage of approximating the gradient in Eq.~\eqref{eq:reinforce} with Monte Carlo estimation is that the number of function evaluations does not grow with the dimension of variable $\zv$. This property is essential for scaling to high dimensions. Another advantage of the score method is that computing such a gradient estimator only requires the value of function $f(\zv)$ instead of its derivatives. It is applicable to the situations where $f(\zv)$ is discontinuous or even has no explicit expression ($e.g.$, in RL, $f(\zv)$ could be the unknown reward function for action $\zv$). 

However, the score function gradient is known for high variance \citep{greensmith2004variance,jang2016categorical,tucker2017rebar}. Though the Monte Carlo estimation with $K$ samples reduces the variance in the order $\cO(1/K)$, it needs many function evaluations for an accurate gradient estimate at each step. %

To solve this problem, we develop a general framework for unbiased gradient estimation with variance reduction techniques.  We construct a family of unbiased gradient estimators and identify the estimator with minimal variance in this family.  %

\subsection{Insights from univariate gradient setting}
\label{sec:uni}
We first develop gradient estimators for a univariate latent variable and then generalize them to high dimensional variables.  When the variable $z$ in Eq.~\eqref{eq:objective} is a scalar, the analytic gradient is
\ba{
\nabla_{\phi} \bE_{z\sim \text{Bern}(\sigma(\phi))}[f(z)] = \sigma(\phi)(1-\sigma(\phi))[f(1)-f(0)].
\label{eq:truegrad}
}
The closed form of expectation above provides an explicit condition that an unbiased gradient estimator must satisfy. To cope with the best subset selection objective in Eq.~\eqref{eq:bss3}, the gradient estimator should further not rely on the continuity of function $f(\zv)$, and the number of required function evaluations does not increase with the dimension of the covariates. According to these desiderata, we define a family of gradient estimators.

\begin{definition}[Unbiased gradient family]
For an objective $\bE_{z\sim \emph{\text{Bern}}(\sigma(\phi))}[f(z)]$, suppose an estimator of the gradient with respect to $\phi$ is in the form of $g(u; \sigma(\phi))$ where $u \sim \emph{\text{Unif}}(0,1)$ is a uniform random variable. For function $f:\{0,1\} \to \bR$, the unbiased gradient family $\mathcal{G}$ consists of function $g(u; \sigma(\phi))$ that satisfies
\begin{itemize}
 \item Unbiasedness:~~ $\E_{u\sim\emph{\text{Unif}}(0,1)}[g(u; \sigma(\phi))] = \sigma(\phi)(1-\sigma(\phi))[f(1)-f(0)],$
 \item Functional form: 
 \ba{g(u; \sigma(\phi)) = a(u;\sigma(\phi))f(\hpless) + b(u;\sigma(\phi))f(\hpgreater)
 \label{eq:linear_form_estimator}
 } 
 where $a(u;\sigma(\phi))$, $b(u;\sigma(\phi))$ are independent of function $f(\cdotv)$.
\end{itemize}
\label{def:univariate}
\end{definition}

The gradient family $\mathcal{G}$ in \Cref{def:univariate} is a set of unbiased gradient estimators with a specific functional form in Eq.~\eqref{eq:linear_form_estimator}.
This functional form ensures efficiency when generalizing to the high dimensional settings, which will be discussed in detail in Section~\ref{sec:multi}. The estimator family $\mathcal{G}$ with this functional form incorporates many popular unbiased gradient estimators. Here we list several representative ones. 

The REINFORCE estimator belongs to the family $\mathcal{G}$ with
\ba{
g_{R}(u; \sigma(\phi)) = f(\hpless)(\hpless - \sigma(\phi)).
\label{eq:reinforce1}
}
A recently proposed ARM gradient \citep{yin2018arm}  is an element of $\mathcal{G}$ with
\ba{
g_{\text{ARM}}(u; \sigma(\phi)) = [f(\hpgreater) - f(\hpless)](u - \textstyle\half).
\label{eq:arm1}
}
We can add an indicator mask to the ARM gradient without changing its univariate distribution, which we call it ARM$_0$ estimator, with the expression as
\ba{
\resizebox{.92\hsize}{!}{$\textstyle g_{\text{ARM}_0}(u; \sigma(\phi)) = [f(\mathbf{1}_{[u>\sigma(-\phi)]}) - f(\hpless)](u - \half)\big|\mathbf{1}_{[u>\sigma(-\phi)]} - \hpless\big|$.}
\label{eq:arm-mask1}
}
In the univariate case, the ARM$_0$ estimator is identical to the ARM estimator, but when it comes to the multivariate case, %
ARM$_0$ can produce sparse gradients where many elements may become exact zeros.  %
This straightforward sparsification of the ARM estimator has been adopted by several recent works 
\citep{boluki2020aistats,CARSM2020,ARMhash2020}. 

In this paper, we propose a new gradient estimator that can further reduce the  variance, which is given by
\ba{
g_{\text{U2G}}(u; \sigma(\phi)) = \frac{\sigma(|\phi|)}{2}[f(\hpgreater) - f(\hpless)](\mathbf{1}_{[u>\sigma(-\phi)]} - \hpless).
\label{eq:u2g1}
}
We call it unbiased uniform gradient (U2G) estimator because it takes a constant value at the non-zero region. U2G estimator is derived by finding the minimum-variance unbiased estimator (MVUE) from the family in \Cref{def:univariate}; \Cref{sec:optimality} will discuss the derivation in detail.  U2G estimator is concurrently discovered by \cite{disarm2020} via Rao-Blackwellization over the ARM estimator and exhibits promising performance in optimizing neural network parameters of deep  generative models. We summarize the estimators above in Table~\ref{tab:estimators} in a functional form compatible with Definition~\ref{def:univariate}.
\begin{table}[t]
\centering
\resizebox{\textwidth}{!}{\begin{tabular}{c|c|c|c|c}
\toprule
&REINFORCE&ARM & ARM$_0$& U2G \\
\hline
$a(u;\sigma(\phi))$ & $\hpless - \sigma(\phi)$& $\half-u$& $( \half - u)|\mathbf{1}_{[u>\sigma(-\phi)]} - \hpless|$ & $\sigma(|\phi|)( \hpless-\mathbf{1}_{[u>\sigma(-\phi)]})/2$\\
$b(u;\sigma(\phi))$ &0&$u - \half$& $(u - \half)|\mathbf{1}_{[u>\sigma(-\phi)]} - \hpless|$& $\sigma(|\phi|)(\mathbf{1}_{[u>\sigma(-\phi)]}-\hpless)/2$\\ 
\bottomrule
\end{tabular}}
\caption{ \small Parameterization of unbiased gradient estimators according to \Cref{def:univariate}.}
\label{tab:estimators}
\end{table}

We are interested in the bias-variance behvior of the gradient estimators. The estimators of the family $\mathcal{G}$ are unbiased thus  having the same expectation value. But the  variance differs between them, depending on how the estimator is expressed as a function $g(u;\sigma(\phi))$ of a uniform variable  $u$. As an illustrative example,  Figure~\ref{fig:cf} shows the functions 
$g(u)$ and $g^2(u)$ for the  estimators in \Cref{eq:reinforce1,eq:arm1,eq:u2g1} with $f(1)=5$, $f(0)=4$, and $\pi=\sigma(\phi)=2/3$. Since the estimators are unbiased, the net signed areas under the curve of the first row in Figure~\ref{fig:cf} are the same. The variance of each gradient estimator, up to the same additive constant, is represented by the area under the curve in each subplot of the second row.

\begin{figure}[t]
\subfloat{\includegraphics[width=0.32\linewidth]{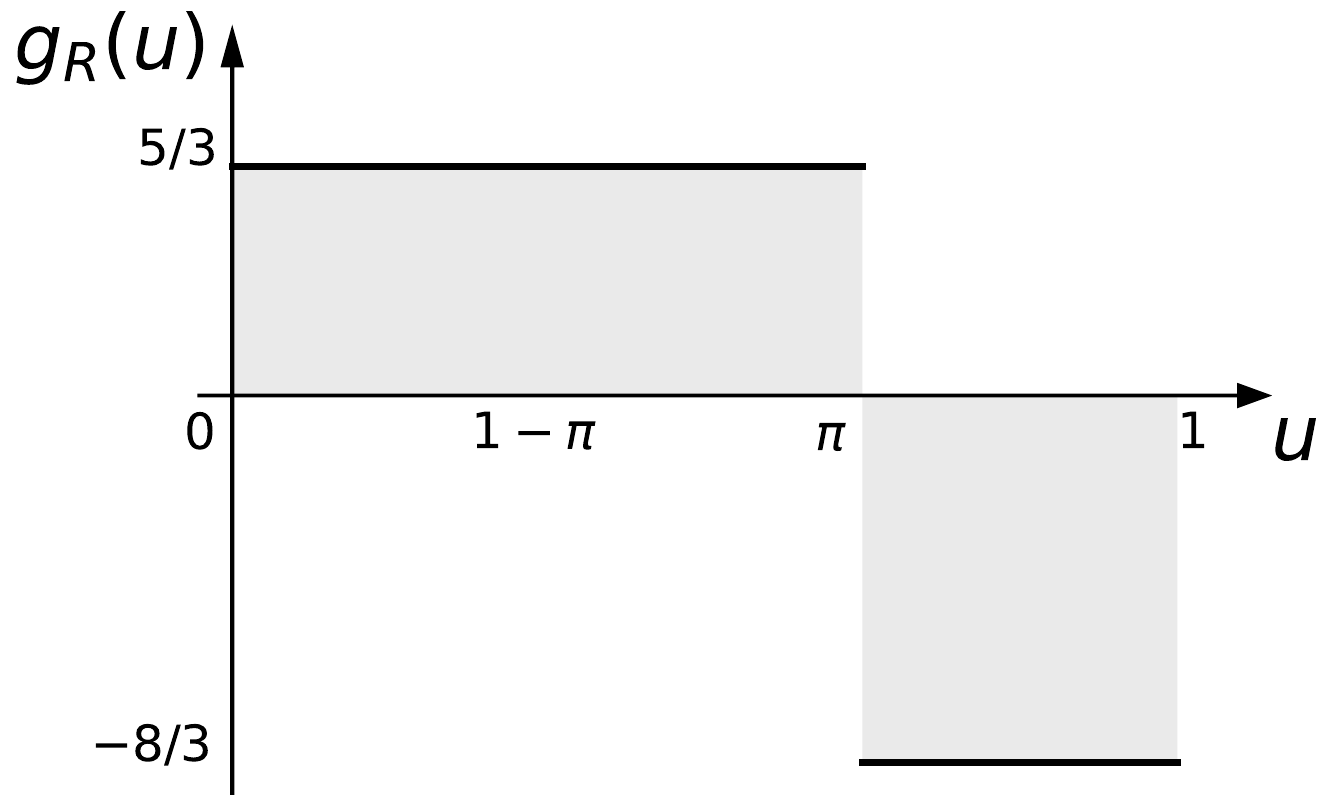}}\hfill
\subfloat{\includegraphics[width=0.32\linewidth]{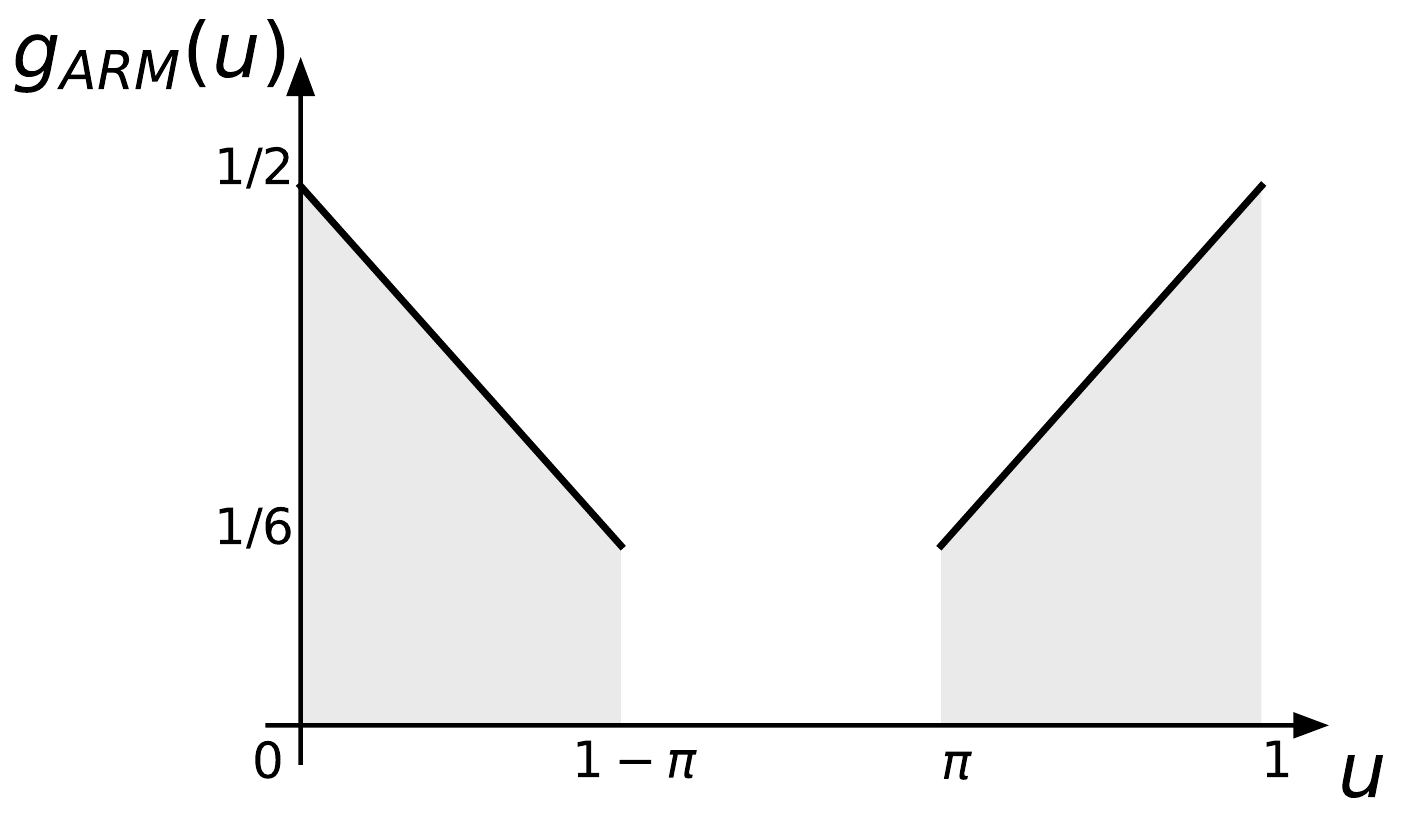}}\hfill
\subfloat{\includegraphics[width=0.32\linewidth]{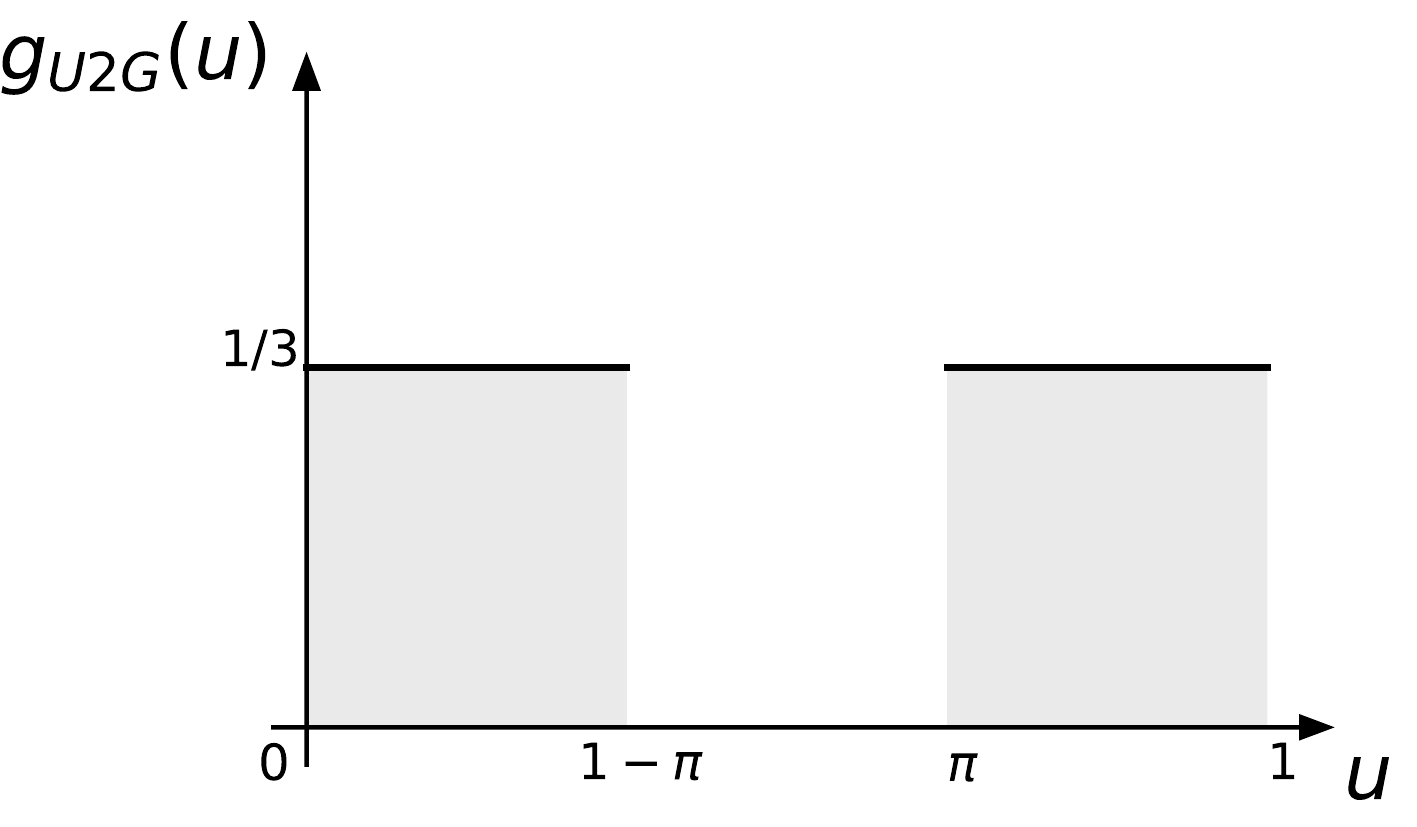}} \\
\subfloat[REINFORCE]{\includegraphics[width=0.32\linewidth]{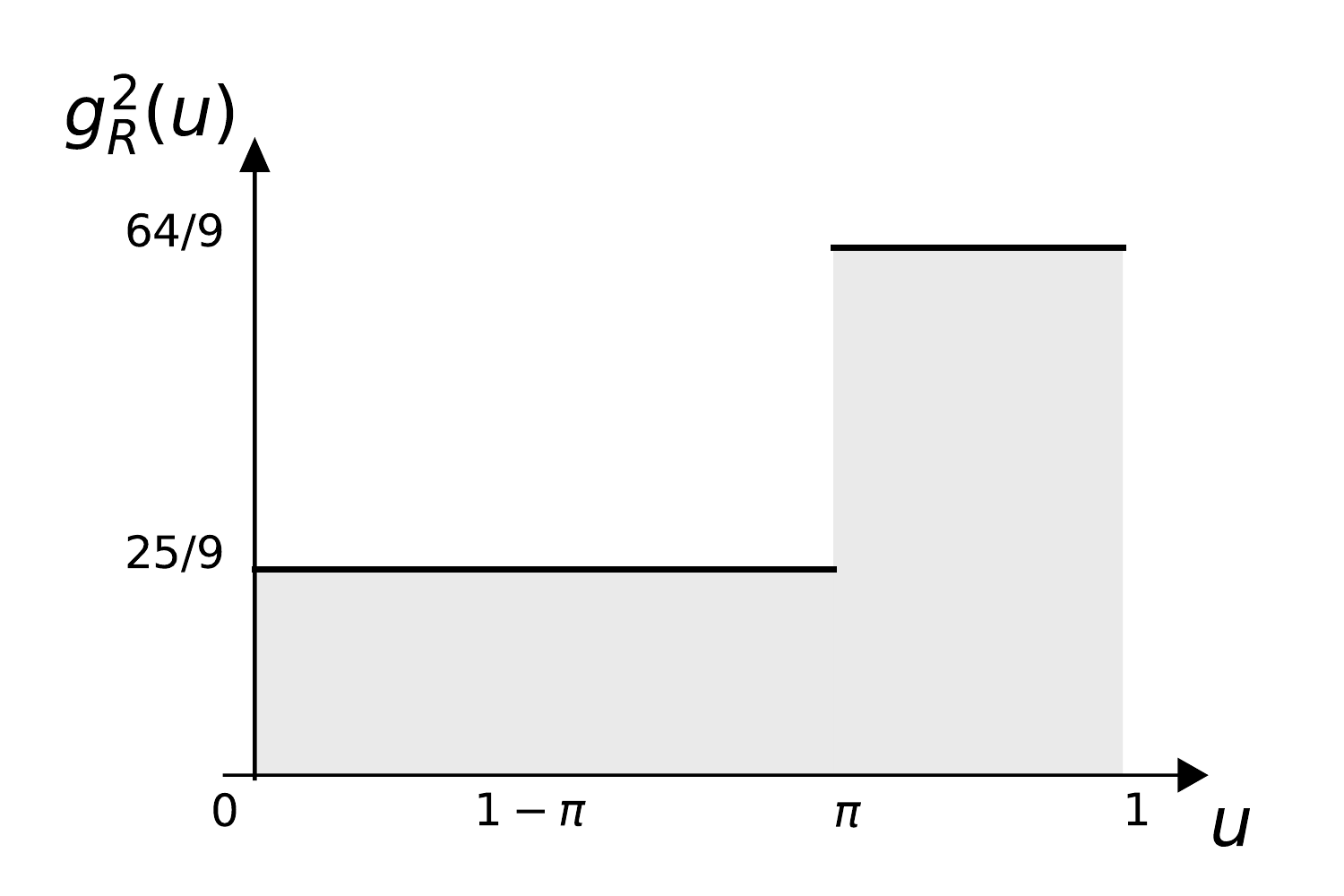}}\hfill
\subfloat[ARM]{\includegraphics[width=0.32\linewidth]{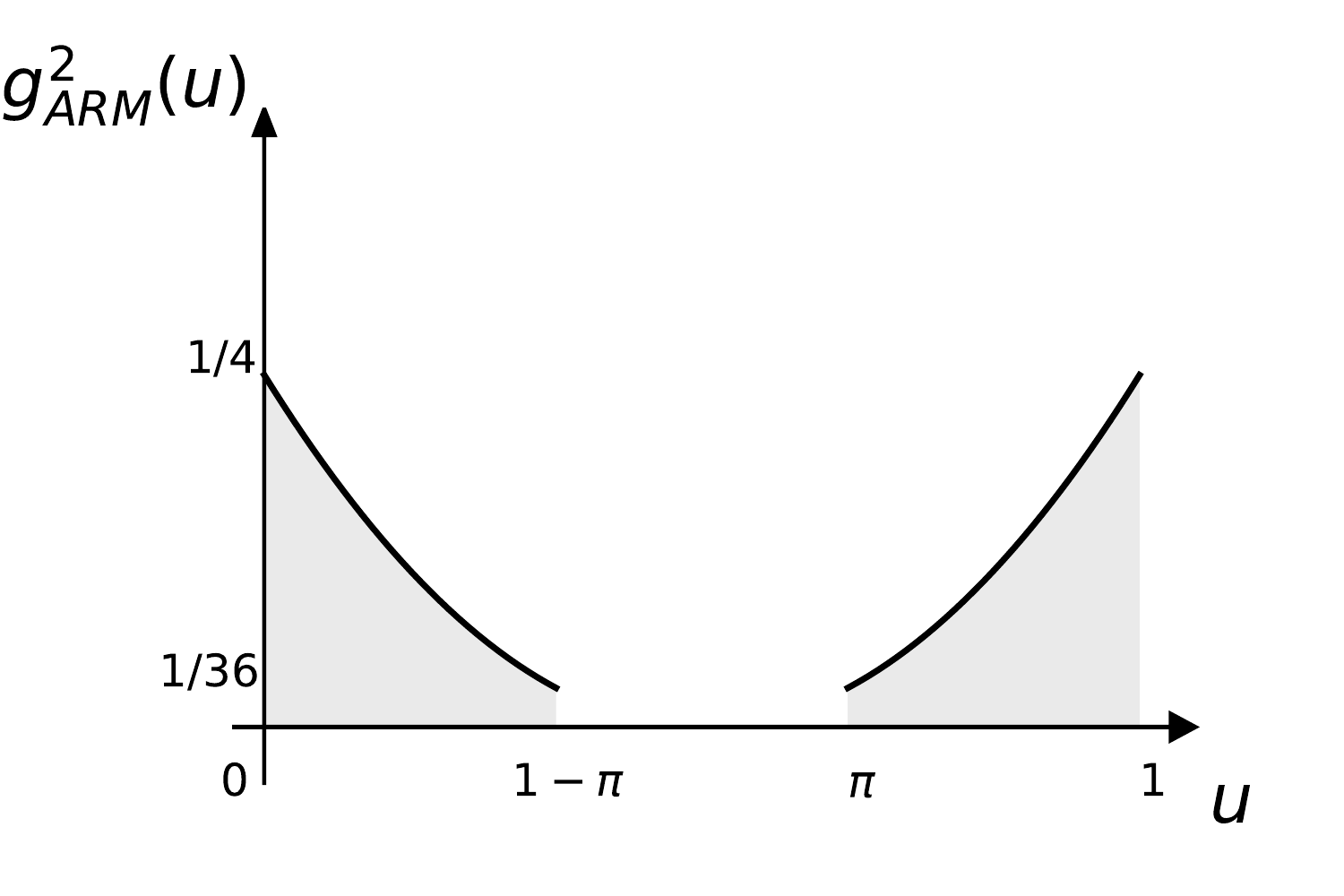}}\hfill
\subfloat[U2G]{\includegraphics[width=0.32\linewidth]{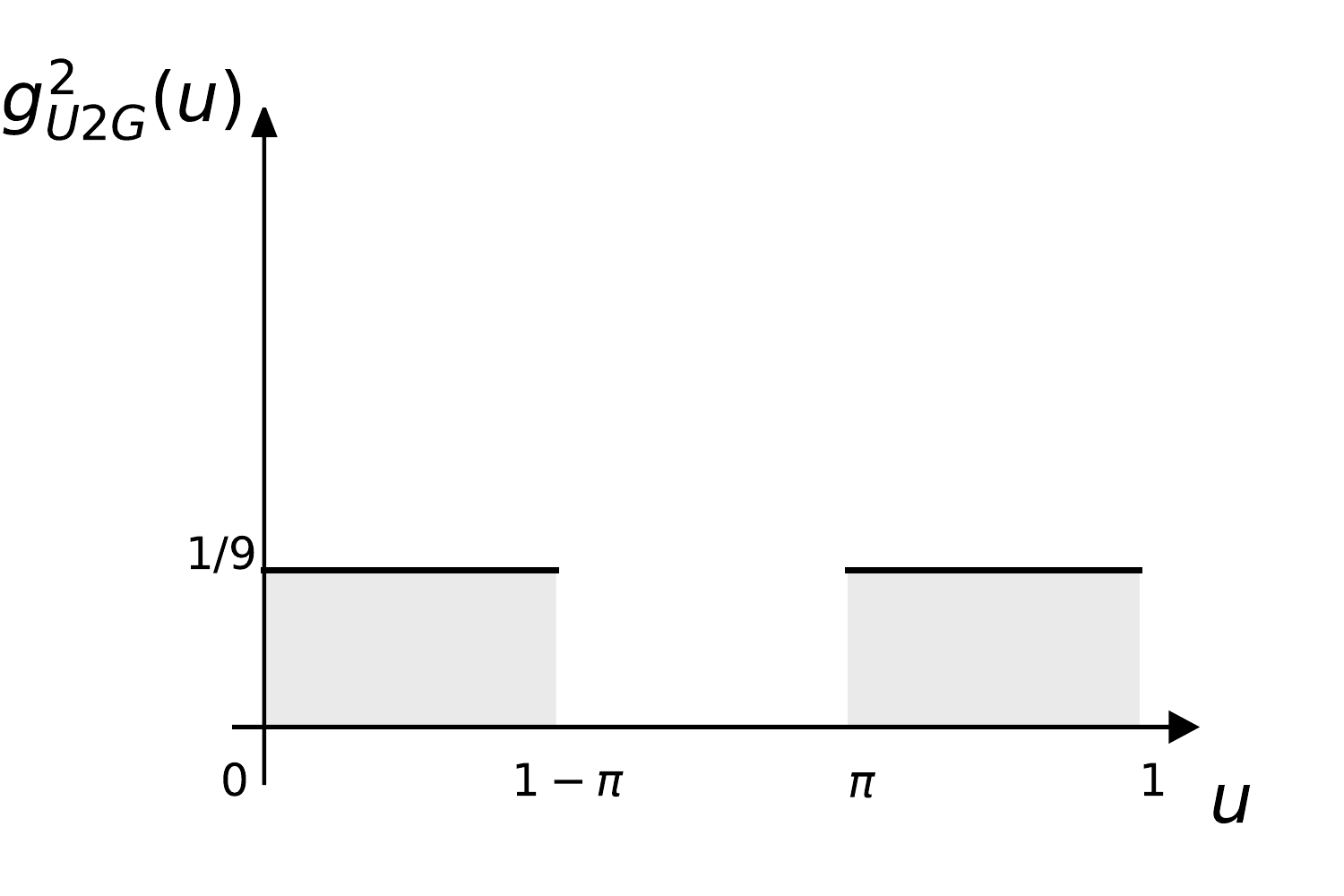}}
\caption{ \small The characteristic curves of gradient estimators. In this illustrative example, $f(1)=5$, $f(0)=4$, and $\pi=\sigma(\phi)=2/3$. The top row is the function $g(u)$ with respect to $u$; the second row is the function $g^2(u)$. %
The unbiased estimators have the same integration in the first row, but different gradient variance as shown in the second row (up to a constant). 
}
\label{fig:cf}
\end{figure}

 Compare the first column of Figure~\ref{fig:cf} to the other two columns, and compare Eq.~\eqref{eq:reinforce1} to Eqs.~\eqref{eq:arm1} and \eqref{eq:u2g1}. Intuitively, if the function $f(\cdot)$ appears in the estimator as a relative difference $f(z) - f(z')$, the scale of the gradient does not increase with the scale of $f(\cdot)$, and  the magnitude of the second moment is controlled. The relative difference $f(z) - f(z')$ echos the model comparison nature of variable selection. %
 
 Comparing the second and the third columns of Figure~\ref{fig:cf}, intuitively, we find the variance (or the second moment) is reduced if the direction and magnitude of the gradient estimator do not change with variable $u$. Formally, we have the following result.
\begin{proposition}
\label{prop:R-ARM-U2G}
For positive (or negative) function $f(z)$, $i.e.~f(z)\geq 0$, we have
\bas{
\var[g_{\emph{\text{U2G}}}] \leq \var[g_{\emph{\text{ARM}}}] \leq \var[g_{R}],
}
where the second inequality requires $|f(1) - f(0)| \leq \min\{|f(1)|, |f(0)|\}$.
\end{proposition}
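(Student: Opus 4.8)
The plan is to reduce both inequalities to comparisons of second moments and then handle them by separate mechanisms: a Rao--Blackwell (conditioning) argument for $\var[g_{\text{U2G}}] \le \var[g_{\text{ARM}}]$, and an explicit sandwiching through an intermediate quantity for $\var[g_{\text{ARM}}] \le \var[g_R]$. Since every estimator in $\mathcal{G}$ is unbiased with the common mean $\mu = \sigma(\phi)(1-\sigma(\phi))[f(1)-f(0)]$, comparing variances is the same as comparing the second moments $\E_u[g^2]$. Writing $\pi = \sigma(\phi)$ and $D = f(1)-f(0)$, I would note at the outset that all three second moments are homogeneous of degree two in $(f(0),f(1))$ and that both hypotheses are invariant under $f \mapsto -f$, so it suffices to treat the case $f \ge 0$.

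For the first inequality I would exhibit $g_{\text{U2G}}$ as the conditional expectation of $g_{\text{ARM}}$ given the antithetic pair $(\hpless,\hpgreater)$, equivalently given which of three sub-intervals of $(0,1)$ contains $u$: a left block where only $\hpless=1$, a middle block where the two indicators agree, and a right block where only $\hpgreater=1$. On the middle block both estimators vanish; on each side block $g_{\text{ARM}} = \pm D\,(u-\half)$ with $u$ conditionally uniform, and a one-line computation gives $\E[\,u-\half \mid \text{right block}\,] = \tfrac{\sigma(|\phi|)}{2}$ together with its mirror on the left, which is exactly the constant appearing in $g_{\text{U2G}}$. Hence $g_{\text{U2G}} = \E[\,g_{\text{ARM}} \mid \hpless,\hpgreater\,]$, and the law of total variance gives $\var[g_{\text{ARM}}] = \var[g_{\text{U2G}}] + \E\,\var[\,g_{\text{ARM}} \mid \hpless,\hpgreater\,] \ge \var[g_{\text{U2G}}]$. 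This step uses neither the sign condition nor the gap condition.

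For the second inequality I would route through the intermediate value $\pi(1-\pi)D^2$. A direct evaluation of the two-atom law of $g_R$ gives $\E[g_R^2] = \pi(1-\pi)\big[(1-\pi)f(1)^2 + \pi f(0)^2\big]$; the hypotheses $f \ge 0$ and $|D| \le \min\{f(1),f(0)\}$ force $f(1)^2 \ge D^2$ and $f(0)^2 \ge D^2$, so the bracket is a convex combination bounded below by $D^2$ and hence $\E[g_R^2] \ge \pi(1-\pi)D^2$. On the ARM side I would integrate over the same three blocks: $g_{\text{ARM}}^2 = D^2(u-\half)^2$ on the side blocks $(0,m)\cup(1-m,1)$, where $m := \min\{\pi,1-\pi\} = 1-\sigma(|\phi|)$, and $g_{\text{ARM}}=0$ on the middle, so by the symmetry $u \mapsto 1-u$ one gets $\E[g_{\text{ARM}}^2] = 2D^2\int_0^m (u-\half)^2\,du = \tfrac{2}{3}D^2\big[(m-\half)^3 + \tfrac18\big]$. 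It then remains to verify $\E[g_{\text{ARM}}^2] \le \pi(1-\pi)D^2 = m(1-m)D^2$, which after expansion reduces to $m\big(\tfrac12 - \tfrac23 m^2\big) \ge 0$ on $m \in (0,\tfrac12]$ and holds since $\tfrac23 m^2 \le \tfrac16$ there. Chaining $\E[g_{\text{ARM}}^2] \le \pi(1-\pi)D^2 \le \E[g_R^2]$ finishes the second inequality.

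The calculations are routine; the genuine content sits in two places. The first is identifying $g_{\text{U2G}}$ as the Rao--Blackwellization of $g_{\text{ARM}}$ over the antithetic pair, i.e.\ pinning down the correct conditioning $\sigma$-algebra and the conditional mean $\tfrac{\sigma(|\phi|)}{2}$. The second is recognizing that $\pi(1-\pi)D^2$ is the right quantity to sandwich between and that the gap condition $|f(1)-f(0)| \le \min\{|f(1)|,|f(0)|\}$ is exactly what lifts $\E[g_R^2]$ above this threshold; without it REINFORCE can have the smaller second moment when the two function values are comparable to their difference, so some such condition is genuinely necessary.
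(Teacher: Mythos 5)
Your proof is correct, and it takes a genuinely different route from the paper's. The paper proves both inequalities by direct computation of second moments: for $\var[g_{\text{ARM}}] \leq \var[g_{R}]$ it expands the difference as $s_1 f(1)^2 + s_2 f(0)^2 + s_3 f(1)f(0)$ with explicit cubic polynomials $s_i(\pi)$ and then invokes the sign and gap conditions to bound this by a nonpositive quantity; for $\var[g_{\text{U2G}}] \leq \var[g_{\text{ARM}}]$ it computes the difference exactly as $-\frac{(1-\pi)^3}{6}\,[f(1)-f(0)]^2 \leq 0$. You replace the latter computation with a Rao--Blackwell argument, $g_{\text{U2G}} = \E\left[\,g_{\text{ARM}} \mid \hpless,\hpgreater\,\right]$ plus the law of total variance, and the former with a sandwich through the intermediate quantity $\pi(1-\pi)[f(1)-f(0)]^2$. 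Both mechanisms check out: the conditional mean of $\pm(u-\half)$ on each side block is indeed $\sigma(|\phi|)/2$, and your ARM-versus-REINFORCE comparison correctly reduces to $m\left(\half - \tfrac{2}{3}m^2\right) \geq 0$ for $m = \min\{\pi,1-\pi\} \leq \half$. As for what each approach buys: the paper's exact calculation yields the quantitative variance gap $\frac{(1-\pi)^3}{6}[f(1)-f(0)]^2$ between ARM and U2G, which is information beyond the inequality itself; your conditioning argument is shorter, explains structurally why U2G dominates ARM with no hypotheses on $f$ at all, and makes rigorous the Rao--Blackwellization connection that the paper only mentions in passing when citing the concurrent DisARM work --- the paper's own proof never uses it. Your sandwich also isolates exactly where the hypothesis $|f(1)-f(0)| \leq \min\{|f(1)|,|f(0)|\}$ enters (lifting $\E[g_R^2]$ above the threshold $\pi(1-\pi)[f(1)-f(0)]^2$), whereas in the paper's polynomial manipulation its role is harder to see; and your closing remark on necessity is right --- taking $f(0)=0$, $f(1)=1$ and $\pi \to 1$ gives $\E[g_{\text{ARM}}^2] \approx \half(1-\pi)$ against $\E[g_R^2] = \pi(1-\pi)^2$, so the ordering genuinely reverses without the gap condition. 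One small observation: your argument for $\var[g_{\text{ARM}}] \leq \var[g_R]$ never actually uses positivity of $f$, since the gap condition alone gives $[f(1)-f(0)]^2 \leq f(1)^2$ and $[f(1)-f(0)]^2 \leq f(0)^2$; so you have in fact proved a marginally stronger statement than the one claimed.
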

The proof of \Cref{prop:R-ARM-U2G} is presented in
Appendix~\ref{subsec:proof:prop:R-ARM-U2G}.

\subsubsection{Optimality of U2G estimator}
\label{sec:optimality}
We derive U2G estimator by identifying the estimator with minimal variance  in the family $\mathcal{G}$ in Definition~\ref{def:univariate}. To simplify the notation, let $f_1 = f(1)$, $f_0 = f(0)$, $\pi = \sigma(\phi)$, and $\Delta = |f_0 - f_1|$. Without loss of generality, we  assume $\pi \geq 1/2$. An ideal gradient for variable selection should be able to distinguish the potential models, even when their difference is small. Accordingly, a gradient estimator should have a non-diminishing SNR, defined as $\text{SNR} \coloneqq \E[g(u)]/\sqrt{\var[g(u)]}$. ~\looseness=-1 

However, as shown in Eq.~\eqref{eq:truegrad}, the scale of the true gradient 
diminishes as the difference $\Delta$ between the potential models shrinks. 
For a non-diminishing SNR, the variance of the estimator $g(u)$ has to decrease to zero as $\Delta \to 0$,
that is 
\ba{
\lim_{\Delta \to 0}~\var[g(u;\pi)] = 0, \quad \text{for all} \ \pi.
\label{eq:var_condition}
}
This condition ensures the estimated gradient can distinguish the optimal model from the others, even when the objective values are close.  

Under the condition in \Cref{eq:var_condition}, the following proposition shows that U2G is the uniformly minimum-variance unbiased estimator (UMVUE) within the proposed estimator family $\mathcal{G}$. U2G estimator hence has   the optimal statistical efficiency. The proof is in  \Cref{sec:umvue}. ~\looseness=-1 
\begin{proposition}
Among the unbiased gradient estimators defined in Definition~\ref{def:univariate} and assume $\forall \pi,~ \lim_{|f(1)-f(0)| \to 0} \allowbreak ~\var[g(u;\pi)] = 0$, U2G has the uniformly minimum variance for all $\pi$.
\label{prop:umvue}
\end{proposition}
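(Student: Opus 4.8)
The plan is to treat \Cref{prop:umvue} as a constrained variational problem for the two coefficient functions $a(u;\pi)$ and $b(u;\pi)$ appearing in \Cref{eq:linear_form_estimator}, where I abbreviate $\pi=\sigma(\phi)$ and assume without loss of generality $\pi\ge\half$ (so that $1-\pi\le\pi$). Since every member of $\mathcal{G}$ is unbiased and therefore shares the common mean in \Cref{eq:truegrad}, minimizing $\var[g]$ is the same as minimizing the second moment $\E_{u}[g^2]$. Moreover, the defining constraints of $\mathcal{G}$ decouple across distinct values of $\pi$: the estimator need only be unbiased and of the form \eqref{eq:linear_form_estimator} at each $\phi$, so the pair $(a(\cdot;\pi),b(\cdot;\pi))$ may be optimized separately for each $\pi$. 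Consequently a minimizer obtained pointwise in $\pi$ is automatically the \emph{uniform} (in $\pi$) minimizer claimed by the proposition. I also restrict to $a,b\in L^2(0,1)$, since otherwise the variance is infinite and the estimator cannot be optimal.

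First I would partition $[0,1]$ by the two indicators $\hpless$ and $\hpgreater$. For $\pi\ge\half$ these split the unit interval into $I_1=[0,1-\pi)$, $I_2=(1-\pi,\pi)$, and $I_3=(\pi,1]$, on which the ordered pair of indicators equals $(1,0)$, $(1,1)$, and $(0,1)$ respectively. Substituting these into \Cref{eq:linear_form_estimator} expresses $g$ piecewise, and hence writes both the mean and $\E_u[g^2]$ as integrals of $a$ and $b$ over the three intervals.

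The crux is to convert the two hypotheses into a single clean constraint. Adding the two unbiasedness requirements (coefficient of $f(1)$ equal to $\pi(1-\pi)$ and coefficient of $f(0)$ equal to $-\pi(1-\pi)$) and noting that a short computation shows the sum of the two coefficients equals $\int_0^1 (a+b)\,du$ yields $\int_0^1 (a+b)\,du=0$. The variance condition \Cref{eq:var_condition} forces $a+b$ to be constant in $u$: letting $f(1)\to f(0)=c$ sends $g(u)\to (a(u)+b(u))c$ and $\var[g]\to c^2\,\var_u[a+b]$, which must vanish for every $\pi$, so $a+b$ is almost everywhere constant. A constant with zero integral is identically zero, giving $b=-a$. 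With $b=-a$ the estimator reduces to $g=a(u)[f(1)-f(0)]$ on $I_1$, $g=0$ on $I_2$, and $g=-a(u)[f(1)-f(0)]$ on $I_3$, while the remaining unbiasedness constraint becomes $\int_{I_1}a\,du-\int_{I_3}a\,du=\pi(1-\pi)$.

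Finally I would solve this reduced problem. Now $\E_u[g^2]=[f(1)-f(0)]^2\big(\int_{I_1}a^2\,du+\int_{I_3}a^2\,du\big)$, so it suffices to minimize $\int_{I_1}a^2+\int_{I_3}a^2$ under $\int_{I_1}a-\int_{I_3}a=\pi(1-\pi)$. Cauchy--Schwarz on each interval (both of length $1-\pi$) bounds $\int_{I_j}a^2\ge(\int_{I_j}a)^2/(1-\pi)$ with equality iff $a$ is constant there, and minimizing $A_1^2+A_3^2$ subject to $A_1-A_3=\pi(1-\pi)$ gives $A_1=-A_3=\pi(1-\pi)/2$. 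Hence the optimum takes $a\equiv\pi/2$ on $I_1$, $a\equiv-\pi/2$ on $I_3$ (its value on $I_2$ being irrelevant since $g\equiv0$ there); recalling $\sigma(|\phi|)=\pi$ for $\pi\ge\half$ and comparing with Table~\ref{tab:estimators}, this is precisely the U2G estimator of \Cref{eq:u2g1}. I expect the main obstacle to be the bookkeeping in the previous paragraph that distills the two hypotheses into $b=-a$: one must check that \Cref{eq:var_condition} constrains only the sum $a+b$ and that unbiasedness then forces that sum to vanish. Once $b=-a$ is in hand, the remaining step is the routine Cauchy--Schwarz optimization above.
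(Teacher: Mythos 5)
Your proposal is correct and follows essentially the same route as the paper's own proof: the same three-interval decomposition of $[0,1]$ for $\pi \geq \tfrac{1}{2}$, the same use of the variance condition \eqref{eq:var_condition} to force the estimator to vanish on the middle interval $(1-\pi,\pi]$, and the same Cauchy--Schwarz plus quadratic-minimization step on the two outer intervals, arriving at the constant value $\pi\,(f(1)-f(0))/2$ that characterizes U2G. The only substantive variation is in how the middle interval is killed: you derive the global antithetic relation $b=-a$ a.e.\ on all of $[0,1]$ (constant test functions $f(0)=f(1)=c\neq 0$ combined with the sum rule $\int_0^1 (a+b)\,du = 0$ extracted from unbiasedness), whereas the paper argues by contradiction on a positive-measure subset of $(1-\pi,\pi]$ and concludes only that $a+b=0$ a.e.\ there --- both suffice, and the remainder of the two arguments is identical.
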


Specifically, for univariate latent variable, the variance of U2G estimator is
\bas{
\var[g_{\text{U2G}}(u;\pi)] &= ~\pi |\pi - \half| (1-\pi)\max\{\pi,1-\pi\}[f(1)-f(0)]^2 \\
&\leq C [f(1)-f(0)]^2
}
with $C \approx 0.0388$. The $\text{SNR}$ for U2G estimator is 
\bas{
\text{SNR}(\pi) = \sqrt{\pi(1-\pi)/(|\pi - \half| \max\{\pi,1-\pi\})},
}
which is the same for arbitrary function $f(\cdot)$ in the objective, and only vanishes when the algorithm converges, $i.e.$ $\pi \to 0~\text{or}~1$. Similar properties hold for the ARM estimator. The variance and $\text{SNR}$ of the ARM and U2G estimators are shown in Figure~\ref{fig:var_uni}. U2G estimator has lower variance and higher SNR than ARM estimator, especially when the Bernoulli probability  the uncertainty is high ($\pi$ is close to $0.5$). 

\begin{figure}[!t]
\centering
\includegraphics[width=0.83\textwidth]{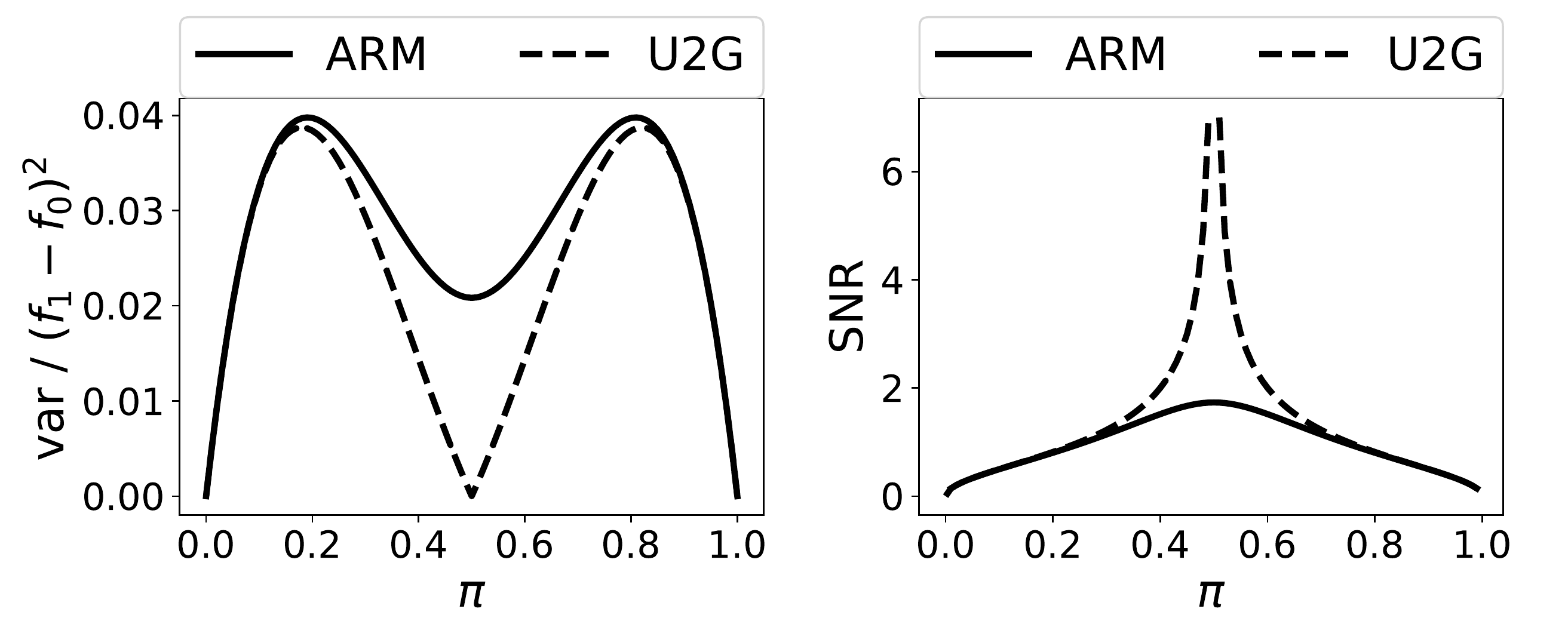}
\caption{\small Variance and $\text{SNR}$ of univariate ARM and U2G estimators.}
\label{fig:var_uni}
\end{figure}

\subsection{Multivariate Generalization}
\label{sec:multi}
We generalize the univariate gradient estimators in \Cref{sec:uni} to the setting with high dimensional variable $\zv$. When $z$ is univariate, it is unnecessary to estimate the gradient given the true gradient in \Cref{eq:truegrad}. 

However, the true gradient is not accessible in the multivariate case. An element of the true gradient vector is

\ba{
\pde{\phi_v} \mathcal{E}(\phiv) =& \pde{\phi_v} \E_{\zv \sim \prod_{j=1}^p p(z_j; \sigma(\phi_j))}[f(\zv)] \notag \\
=& \E_{\zv_{-v} }[ \sigma(\phi_v)(1-\sigma(\phi_v))(f(\zv_{-v},z_v=1) - f(\zv_{-v},z_v=0)) ].
\label{eq:true-multi}
}

Estimating the gradient vector element-wisely by \Cref{eq:true-multi} requires high computational cost. To estimate the $v$-th element with the Monte Carlo method by Eq.~\eqref{eq:true-multi}, we first sample a binary vector $\zv$. Then the function $f(\cdot)$ needs to be computed on two binary vectors with the $v$-th element as 0 and 1 respectively and other elements equal to $\zv_{-v}$. %
 This has to be done for each element of $\zv$ separately. Therefore, it requires at least $2p$ evaluations of function $f(\cdot)$  to get an unbiased gradient estimate at each step. 
 For the best subset selection, the function $f(\cdot)$ in \Cref{eq:fun-f} involves a least square regression on the support set of $\zv$. Hence, evaluating $f(\cdot)$ $2p$ times for each gradient step is  computationally intractable for the large-$p$ setting. ~\looseness=-1

 The functional form of the estimators in Definition~\ref{def:univariate} circumvent this computational problem. %
 Applying the univariate gradient estimators in Definition~\ref{def:univariate}, %
 for multivariate $\zv$, an element of the gradient vector can be computed as
\ba{
&\pde{\phi_v} \bE_{\zv \sim \prod_{j=1}^p p(z_j; \sigma(\phi_j))} [f(\zv)] \notag \\ =&~ \bE_{\zv_{-v} } \pde{\phi_v} \bE_{z_v \sim p(z_v; \sigma(\phi_v))} [f(z_v, \zv_{-v} )] \notag \\
=&~ \bE_{\zv_{-v} } \E_{u_v\sim\text{Unif}(0,1)}[a(u_v;\sigma(\phi_v))f(\mathbf{1}_{[u_v<\sigma(\phi_v)]},\zv_{-v}) + b(u_v;\sigma(\phi_v))f(\mathbf{1}_{[u_v>1-\sigma(\phi_v)]},\zv_{-v}) ] \notag \\
=&~ \E_{\uv \sim \prod_{j=1}^p\text{Unif}(0,1)}[a(u_v;\sigma(\phi_v))f(\hplessv) + b(u_v;\sigma(\phi_v))f(\hpgreaterv)].
\label{eq:multivariate}
}
The first equality is by the factorization of $p(\zv)$, the second equality is by the unbiasedness of the univariate gradient estimator form in \Cref{def:univariate}, and the last equality is by the law of the unconscious statistician (LOTUS) \citep{ross2014introduction}. \Cref{eq:multivariate} ensures that 
\ba{
\resizebox{0.92\hsize}{!}{$\gv(\uv; \sigma(\phiv)) = a(u_v;\sigma(\phi_v))f(\hplessv) + b(u_v;\sigma(\phi_v))f(\hpgreaterv), ~\uv \sim \prod_{j=1}^p\text{Unif}(0,1),$}
\label{eq:multi-estimator}
}
is an unbiased estimator for the gradient of the objective \Cref{eq:objective}. 

The estimator in \Cref{eq:multi-estimator} has a key computational advantage. We can evaluate $f(\hplessv)$ and $f(\hpgreaterv)$ as few as a single time with $\uv \sim \prod_{j=1}^p\text{Unif}(0,1)$, and share it across all the elements of the gradient vector, greatly reducing computational time.

Written in a vector form, the estimators in Section~\ref{sec:uni} have their multivariate form as
\begin{gather}
\resizebox{0.5\hsize}{!}{$\gv_{R}(\uv;\sigma(\phiv)) = f(\hplessv)(\hplessv - \sigma(\phiv))$}, \nonumber \\
\resizebox{0.92\hsize}{!}{$\gv_{\text{ARM}_0}(\uv;\sigma(\phiv))= [f(\hpgreaterv) - f(\hplessv)](\uv - \half)\odot|\hpgreaterv - \hplessv|$}, \nonumber \\
\resizebox{0.92\hsize}{!}{$ \gv_{\text{U2G}}(\uv;\sigma(\phiv)) = \frac{1}{2}[f(\hpgreaterv) - f(\hplessv)]\sigma(|\phiv|)\odot(\hpgreaterv - \hplessv)$,} \label{eq:grads-multi}
\end{gather}
where $\uv \sim \prod_{j=1}^p\text{Unif}(0,1)$, and all the operations are element-wise. Due to the indicator mask, the gradient vectors of ARM$_0$ and U2G are sparse when the probability close to the extremes such as when the algorithm is close to
the convergence. We observe in practice that the sparsity in gradient estimation, while not required to ensure unbiasedness, can improve the stability of the  convergence process. 

In practice, the gradient can be computed as the Monte-Carlo estimate of $\gv(\uv;\sigma(\phiv))$. With $\uv_k \sim \prod_{j=1}^p\text{Unif}(0,1)$, $k =1, \cdots, K$, the Monte-Carlo estimate is
\ba{
\hat{\gv}(\phiv) = \sum_{k=1}^K \gv(\uv_k;\sigma(\phiv)),
\label{eq:grads-multi-mc}
}
where $\gv(\cdot)$ is in \Cref{eq:multi-estimator} or as one of the specific estimators in Eq.~\eqref{eq:grads-multi}.

Consider the variance of the gradient estimator $\gv(\uv; \sigma(\phiv))$, $i.e.$ the diagonal of the covariance matrix. By the law of total variance, the variance of element $v$ of the gradient vector can be decomposed as
\ba{
\var_{\uv}[\gv_{v}(\uv; \sigma(\phiv))] 
= \var\{\E[\gv_{v}(\uv;\sigma(\phiv))| \uv_{-v}]\} + \E\{\var[\gv_{v}(\uv;\sigma(\phiv))| \uv_{-v}]\}.
\label{eq:multi_var} 
}
The first term on the right-hand side (RHS) of Eq.~\eqref{eq:multi_var} is the irreducible variance, shared by all unbiased gradient estimators. It can be further computed as
\bas{
\var_{\uv_{-v}}\{\E_{u_v}[\gv_{v}(\uv;\sigma(\phiv))| \uv_{-v}]\} = (\pi_v)^2(1-\pi_v)^2\var_{\uv}[\Delta_{\zv,v}f],
}
where $\zv = \hplessv$,  $\Delta_{\zv,v}f : = \expfunc (\tilde{\zv}) - \expfunc(\zv)$, and  $\tilde{\zv} \in \{0,1\}^{ p}$   differing from $\zv$ only at the v-th dimension. %
The second term of Eq.~\eqref{eq:multi_var} measures the average variance of the estimator vector in a single dimension. Given a fixed $\uv_{-v}$, as shown in the univariate case, U2G estimator has the minimal variance for all estimators in Definition~\ref{def:univariate} with non-vanishing SNR. Therefore, by averaging over all $\uv_{-v}$, the second term of U2G estimator is small. This means the total variance of U2G estimator is well controlled in the multivariate case.

\begin{algorithm}[t] \small{
\SetKwData{Left}{left}\SetKwData{This}{this}\SetKwData{Up}{up}
\SetKwFunction{Union}{Union}\SetKwFunction{FindCompress}{FindCompress}
\SetKwInOut{Input}{input}\SetKwInOut{Output}{output}
\Input{ \small
Bernoulli distribution $\{q_{\phi_j}(z_j)\}_{j\in[p]}$ with probability $\{\sigma(\phi_j)\}_{j\in[p]}$, target $\mathcal{E}(\phiv) = \E_{\zv \sim p_{\phiv}(\zv)}[f{}(\zv)]$, $\zv = (z_1,\cdots,z_p)$, $\phiv = (\phi_1,\cdots,\phi_p)$, $p_{\phiv}(\zv)=\prod_{j=1}^p p_{\phi_j}(z_j)$
}
\Output{  Maximum likelihood estimator of $p_{\phiv}(\zv)$ as $\hat{\zv} = \mathbf{1}_{[\sigma(\phiv)>1/2]}$ }
\BlankLine
 Initialize $\phiv$ randomly

\While{not converged}{
Sample $\uv_k \distas{i.i.d.} \prod_{j=1}^p\text{Unif}(0,1)$ for $k=1,\cdots,K$

Evaluate $f(\mathbf{1}_{[\uv_k>1-\sigma(\phiv)]})$ and $f(\mathbf{1}_{[\uv_k<\sigma(\phiv)]})$

Compute $\gv_k = g(\uv_k;\sigma(\phiv),f)$ by an estimator in Eq.~\eqref{eq:grads-multi}

Update $\phiv = \phiv - \frac{1}{K}\rho \sum_{k=1}^K \gv_k $ with stepsizes $\rho $
}
\caption{\small Best subset selection with probabilistic reformulation }\label{alg:M1}}
\end{algorithm}

\section{Convergence in Expectation}
\label{Sec:convergence}

In this section, we provide theoretical insights to the convergence properties of the gradient method under the expectation of data generation and gradient estimation. %
We assume that the observations $(\Xmat, \yv)$ are generated from the following model with the active set $\cA \subset \{1,2,\cdots,p\}$
\ba{
&\yv = \Xmat \betav^*  + \epsilonv, \quad \epsilonv \sim \cN(0, \sigma^2\Imat),
\label{eq:freq_model}
}
where $\beta^*_j = 0$ for $j \notin \mathcal{A}$. Let $\zv^*  \in \{0,1\}^p$ indicate the true active set where $z^*_j$ equals~$1$ if $j \in \mathcal{A}$ and $0$ otherwise. We assume a random design matrix $\Xmat = (\xv_1, \cdots, \xv_n)^{\top}$ in which $\xv_i \sim \cN(0, \Imat_{p})$ for $i \in [n]$. In order to ease the presentation, we denote
\ba{
f_{\Xmat, \yv}(\zv) =&  \min_{\alphav}  \frac{1}{n}~ \norm{\yv - \Xmat (\alphav\odot \zv)}_2^2 + \lambda \norm{\zv}_0.
\label{eq:f_freq} 
}

Here we use subscripts to make the dependency of $f$ on $(\Xmat,\yv)$ explicit. Denote $\Xmat_{\zv} \in \bR^{n\times \|\zv\|_0}$ as the  matrix consisting of $\{X_j: z_j \ne 0\}$, and $\Xmat_{-\zv}$ the complement in the design matrix. Furthermore, for any $\zv, \tilde{\zv} \in \{0,1\}^{ p}$ such that $z_k = 0$, $\tilde{z}_{k} = 1$, $z_j=\tilde{z}_j$ for all $j \ne k$, we denote $\deltaf : = \expfunc_{\Xmat, \yv} (\tilde{\zv}) - \expfunc_{\Xmat, \yv}(\zv)$. All the proof details in this section are given in the appendix.

First, we have the following lemma for the expectation of the gradient over the randomness of~$\uv$, given any fixed training data $(\Xmat, \yv)$.
\begin{lemma}
\label{lemma:expect_grad}
Consider $g_{\emph{\text{U2G}}}(\uv; \sigma(\phiv))$ as in Eq.~\eqref{eq:grads-multi}. For each $(\Xmat, \yv)$, we have
\begin{align*}
       \E_{\uv \sim\prod_{j=1}^p {\emph{\text{Unif}}}(u_j;0,1)} [\gv_{\emph{\text{U2G}}}(\uv; \sigma(\phiv))] = \piv (1-\piv)\odot\E_{\uv}[\Delta_{\zv}f ],
\end{align*}
where $\piv = (\sigma(\phi_1), \cdots, \sigma(\phi_p))$, $\zv = \hpgreaterv$, and $\Delta_{\zv}f = (\Delta_{\zv,1}f, \cdots, \Delta_{\zv,p}f)$. %
\end{lemma}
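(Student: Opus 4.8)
The plan is to prove the identity one coordinate at a time, so I fix $v\in[p]$ and compute $\E_{\uv}[\gv_{\text{U2G},v}]$ by first integrating over $u_v$ with $\uv_{-v}$ held fixed, and then averaging over $\uv_{-v}$. Writing $\pi_v=\sigma(\phi_v)$, the $v$-th entry of the estimator in Eq.~\eqref{eq:grads-multi} is
\[
\gv_{\text{U2G},v}=\tfrac12\,\sigma(|\phi_v|)\,(\hpgreaterv-\hplessv)_v\,[\,f(\hpgreaterv)-f(\hplessv)\,],
\]
where the scalar factor $(\hpgreaterv-\hplessv)_v=\mathbf{1}_{[u_v>1-\pi_v]}-\mathbf{1}_{[u_v<\pi_v]}$ is the $v$-th entry of the antithetic mask.

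First I would analyze this mask as a function of $u_v$. It is piecewise constant, taking the values $-1,0,+1$ on the three subintervals of $(0,1)$ cut out by $\min\{\pi_v,1-\pi_v\}$ and $\max\{\pi_v,1-\pi_v\}$; the two nonzero pieces each have length $\ell:=\min\{\pi_v,1-\pi_v\}$, and on them the $v$-th coordinates of $(\hplessv,\hpgreaterv)$ equal $(1,0)$ and $(0,1)$ respectively. Integrating $u_v$ over these two pieces and using the identity $\sigma(|\phi_v|)\,\ell=\pi_v(1-\pi_v)$, which holds for either sign of $\phi_v$, collapses the four function evaluations into
\[
\E_{u_v}[\gv_{\text{U2G},v}\mid\uv_{-v}]=\tfrac12\,\pi_v(1-\pi_v)\,(\Delta_{\zv^{+},v}f+\Delta_{\zv^{-},v}f),
\]
where $\zv^{+}$ and $\zv^{-}$ agree off coordinate $v$ with $\hpgreaterv$ and $\hplessv$, and $\Delta_{\zv,v}f$ is the coordinate-toggling difference defined before the lemma.

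The hard part is the coupling concealed in this last line: $\hpgreaterv$ and $\hplessv$ need not agree off coordinate $v$, so $\zv^{+}_{-v}=\{\mathbf{1}_{[u_j>1-\pi_j]}\}_{j\neq v}$ and $\zv^{-}_{-v}=\{\mathbf{1}_{[u_j<\pi_j]}\}_{j\neq v}$ are genuinely different random vectors, and the two $\Delta$ terms cannot be identified pointwise. The resolution is to pass to the outer expectation over $\uv_{-v}$ and exploit that $P(u_j>1-\pi_j)=P(u_j<\pi_j)=\pi_j$, so both $\zv^{+}_{-v}$ and $\zv^{-}_{-v}$ have the same marginal law $\prod_{j\neq v}\text{Bern}(\pi_j)$. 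Since $\Delta_{\zv,v}f$ depends on $\zv$ only through its off-$v$ coordinates, the two terms then have equal expectation, each equal to $\E_{\uv}[\Delta_{\zv,v}f]$ with $\zv=\hpgreaterv$.

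Finally I would assemble the two steps. Averaging the conditional expectation over $\uv_{-v}$ replaces $\tfrac12(\Delta_{\zv^{+},v}f+\Delta_{\zv^{-},v}f)$ by $\E_{\uv}[\Delta_{\zv,v}f]$ with $\zv=\hpgreaterv$, giving $\pi_v(1-\pi_v)\,\E_{\uv}[\Delta_{\zv,v}f]$ for the $v$-th coordinate; stacking over $v\in[p]$ produces the Hadamard form $\piv(1-\piv)\odot\E_{\uv}[\Delta_{\zv}f]$ claimed in the lemma. As a consistency check I would confirm agreement with the general unbiasedness identity Eq.~\eqref{eq:multivariate}, valid because U2G lies in the family $\mathcal{G}$ of Definition~\ref{def:univariate}: its right-hand side is the true gradient Eq.~\eqref{eq:true-multi}, and the computation above is precisely the rewriting of that gradient through the antithetic pair $(\hplessv,\hpgreaterv)$.
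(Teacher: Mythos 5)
Your proof is correct, and it follows a genuinely different route from the paper's. The paper telescopes the scalar difference $f(\hpgreaterv)-f(\hplessv)$ through hybrid vectors $\zv^0,\zv^1,\ldots,\zv^p$ obtained by flipping one coordinate at a time from $\hpgreaterv$ to $\hplessv$, and then shows that every cross term $i\neq j$ has vanishing conditional expectation over $u_i$ (an antithetic cancellation), leaving only the $i=j$ term, which is evaluated directly. You avoid the telescoping altogether: conditioning on $\uv_{-v}$ and integrating over $u_v$ alone, you exploit the fact that the off-$v$ coordinates of both $\hpgreaterv$ and $\hplessv$ are frozen, so the masked difference collapses into $\tfrac{1}{2}\pi_v(1-\pi_v)\bigl(\Delta_{\zv^{+},v}f+\Delta_{\zv^{-},v}f\bigr)$, a two-term average with distinct backgrounds $\zv^{+}_{-v}$ and $\zv^{-}_{-v}$; you then equate their expectations via the equality in law $\zv^{+}_{-v}\overset{d}{=}\zv^{-}_{-v}\sim\prod_{j\neq v}\text{Bern}(\pi_j)$. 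Your route is shorter and, notably, makes explicit a step the paper leaves implicit: the paper's $i=j$ term produces a mixed background $(\tilde{z}_1,\ldots,\tilde{z}_{j-1},z_{j+1},\ldots,z_p)$, and rewriting its expectation as $\E_{\uv}[\Delta_{\zv,j}f]$ with $\zv=\hpgreaterv$ requires exactly the marginal-law identification you spell out, yet the paper simply asserts the final equality. What the telescoping buys in exchange is a transparent display of why no coordinate other than $v$ contributes to the $v$-th gradient entry; in your argument that fact is absorbed silently into the conditional integration over $u_v$.
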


Lemma~\ref{lemma:expect_grad} shows that the gradient is closely related to $\Delta_{\zv}f$, whose randomness comes from latent variable $\uv$ and data $(\Xmat, \yv)$. 
By analyzing the expectation of $\Delta_{\zv}f$, the following result establishes the expectation of stochastic gradients %
\begin{lemma}
\label{lemma:expectation_u2g}
Given $g_{\emph{\text{U2G}}}(\uv; \sigma(\phiv))$ in Eq.~\eqref{eq:grads-multi}, the expected gradient is
\begin{align}
\E_{\Xmat, \yv, \uv}[g_{\emph{\text{U2G}}}(\uv; \sigma(\phiv))_k] = \biggr[ \lambda - \biggr( \frac{(n-\E[\norm{\zv}_0 \Big| \|\zv\|_{0}  < n] -1)(\beta_{k}^{*})^2}{n}  & \nonumber \\
& \hspace{-23 em} + \frac{\sigma^2 + \E_{\uv}[\norm{\betav_{-\zv}^{*}}_{2}^2 \Big| \|\zv\|_{0}  < n]}{n}  \biggr)  p(\|\zv\|_0 < n)\biggr]\times \pi_k(1-\pi_k), \label{eq:expectation_u2g}
\end{align}
for any $k \in \{1,2,\cdots,p\}$ where $\piv = (\sigma(\phi_1), \cdots, \sigma(\phi_p))$ and $\zv = \hpgreaterv$.
\end{lemma}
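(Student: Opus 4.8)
The plan is to build directly on Lemma~\ref{lemma:expect_grad}, which already gives $\E_{\uv}[\gv_{\text{U2G}}(\uv;\sigma(\phiv))_k] = \pi_k(1-\pi_k)\,\E_{\uv}[\Delta_{\zv,k}f]$ with $\zv = \mathbf{1}_{[\uv>1-\sigma(\phiv)]}$. Taking a further expectation over $(\Xmat,\yv)$ and invoking the tower property, the whole lemma reduces to evaluating $\E_{\Xmat,\yv,\uv}[\Delta_{\zv,k}f]$ and then multiplying by the deterministic factor $\pi_k(1-\pi_k)$. So the first step is to get a workable closed form for $\Delta_{\zv,k}f$. Following the convention that $\Delta_{\zv,k}f$ flips coordinate $k$ from $0$ to $1$, write $\cZ$ for the support of $\zv$ (with $z_k=0$), set $P_{\cZ}=\Xmat_{\cZ}(\Xmat_{\cZ}^\top\Xmat_{\cZ})^{-1}\Xmat_{\cZ}^\top$ and $M=\Imat-P_{\cZ}$. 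Since the inner minimization in $f_{\Xmat,\yv}$ is an OLS fit, $f_{\Xmat,\yv}(\zv)=\tfrac1n\norm{M\yv}_2^2+\lambda\norm{\zv}_0$, and the standard extra-sum-of-squares identity $P_{\cZ\cup\{k\}}=P_{\cZ}+(M X_k)(M X_k)^\top/\norm{M X_k}_2^2$ yields
\bas{
\Delta_{\zv,k}f = \lambda - \frac{1}{n}\,\frac{(X_k^\top M\yv)^2}{X_k^\top M X_k}.
}

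Second, I would compute the expectation of this ratio by conditioning on $\uv$, which fixes $\cZ$ and hence the rank $r=n-\norm{\zv}_0$ of $M$. I split on the event $\{\norm{\zv}_0<n\}$: when $\norm{\zv}_0\ge n$ the submatrix $\Xmat_{\cZ}$ already spans $\bR^n$ almost surely, so $M\yv=0$, the residual is zero before and after adding $k$, and $\Delta_{\zv,k}f=\lambda$; this contributes only $\lambda\,p(\norm{\zv}_0\ge n)$. On the complementary event I use the Gaussian random design: the columns $X_1,\dots,X_p$ are i.i.d.\ $\cN(0,\Imat_n)$ and $\epsilonv\sim\cN(0,\sigma^2\Imat_n)$ is independent, so $X_k$ is independent of $M$ (which depends only on $\{X_j:j\in\cZ\}$ and $k\notin\cZ$). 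Substituting $\yv=\Xmat\betav^{*}+\epsilonv$ and using $M X_j=0$ for $j\in\cZ$,
\bas{
X_k^\top M\yv = \beta_k^{*}\,(X_k^\top M X_k) + \Big(\sum_{j\notin\cZ,\,j\ne k}\beta_j^{*}\,(M X_k)^\top X_j + (M X_k)^\top\epsilonv\Big).
}

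Third, I apply the tower property in two stages. Writing $Q=X_k^\top M X_k=\norm{M X_k}_2^2$ and conditioning on $(M,X_k)$, each cross term $(M X_k)^\top X_j$ is $\cN(0,Q)$ and $(M X_k)^\top\epsilonv$ is $\cN(0,\sigma^2 Q)$, mutually independent, so the bracketed term is centered Gaussian with variance $Q\,V$, where $V=\sigma^2+\sum_{j\notin\cZ,\,j\ne k}(\beta_j^{*})^2=\sigma^2+\norm{\betav_{-\zv}^{*}}_2^2-(\beta_k^{*})^2$ (the last equality because $k\notin\cZ$). Hence $\E[(X_k^\top M\yv)^2/Q\mid M,X_k]=(\beta_k^{*})^2 Q + V$, and integrating out $X_k$ with $\E[Q\mid M]=\mathrm{tr}(M)=n-\norm{\zv}_0$ gives $(\beta_k^{*})^2(n-\norm{\zv}_0-1)+\sigma^2+\norm{\betav_{-\zv}^{*}}_2^2$. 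Dividing by $n$, taking the conditional expectation over $\uv$ given $\{\norm{\zv}_0<n\}$, and recombining the $\lambda$ contributions from both events reproduces exactly the bracketed expression of \eqref{eq:expectation_u2g}; multiplying by $\pi_k(1-\pi_k)$ completes the proof.

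The main obstacle is the third step: correctly identifying the conditional laws of $(M X_k)^\top X_j$ and $(M X_k)^\top\epsilonv$ given $(M,X_k)$ as independent centered Gaussians with variances $\norm{M X_k}_2^2$ and $\sigma^2\norm{M X_k}_2^2$, so that the $\chi^2$-type denominator $Q$ cancels cleanly and the conditional expectation of the ratio collapses to $(\beta_k^{*})^2 Q+V$. A secondary care point is the bookkeeping on the degenerate event $\{\norm{\zv}_0\ge n\}$ together with the almost-sure full-column-rank of $\Xmat_{\cZ}$ when $\norm{\zv}_0<n$, which is what makes $P_{\cZ}$, $M$, and the partial-regression identity well defined in the first place.
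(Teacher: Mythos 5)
Your proposal is correct, and it reaches the identical bracketed expression in \eqref{eq:expectation_u2g}, but by a genuinely different route than the paper. The paper telescopes $\deltaf$ into $\lambda + T_1 + T_2 + T_3$, where $T_1$ and $T_3$ compare the OLS fit with the fit using the true coefficients on the enlarged and original supports, and $T_2$ compares the two true-coefficient fits; each term is a plain quadratic form whose expectation follows from projection-rank/trace identities (e.g.\ $\E\|P_{\tilde{\zv}}W\|_2^2 = (\sigma^2+\|\betav^*_{-\tilde{\zv}}\|_2^2)\|\tilde{\zv}\|_0$), and the argument requires a three-way case split $\|\zv\|_0 < n-1$, $\|\zv\|_0 = n-1$, $\|\zv\|_0 \geq n$, because the enlarged support saturates $\bR^n$ in the middle case. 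You instead derive the single closed form $\deltaf = \lambda - \tfrac{1}{n}\,(X_k^\top M\yv)^2/(X_k^\top M X_k)$ via the rank-one projection update, and then kill the random denominator by two-stage conditioning: given $(M, X_k)$, the noise part of $X_k^\top M\yv$ is centered Gaussian with variance exactly proportional to $Q = X_k^\top M X_k$, so $\E[(X_k^\top M\yv)^2/Q \mid M, X_k] = (\beta_k^*)^2 Q + V$, and $\E[Q \mid M] = \mathrm{tr}(M) = n - \norm{\zv}_0$ finishes it. What each approach buys: the paper's decomposition never touches ratios of random quadratic forms, so every step is elementary at the price of extra bookkeeping and the boundary case; your ratio formula is more compact, handles all of $\{\|\zv\|_0 < n\}$ uniformly (as you note, the formula degenerates gracefully to $\Delta = \lambda - \tfrac1n\|M\yv\|_2^2$-type cancellation when $\|\zv\|_0 = n-1$), and makes transparent where the constant $n - \norm{\zv}_0 - 1$ comes from ($n-\norm{\zv}_0$ from $\mathrm{tr}(M)$, the $-1$ from $V = \sigma^2 + \norm{\betav^*_{-\zv}}_2^2 - (\beta_k^*)^2$), but it leans on the exact Gaussian cancellation that you correctly flag as the crux. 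Your handling of the degenerate event $\{\|\zv\|_0 \geq n\}$ (full span almost surely, hence $\deltaf = \lambda$) matches the paper's use of its Lemma~\ref{lemma:tao}, and your use of Lemma~\ref{lemma:expect_grad} plus the tower property to factor out $\pi_k(1-\pi_k)$ is the same reduction the paper performs.
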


In the following proposition, based on the results of Lemma~\ref{lemma:expectation_u2g}, we show that if the sample size and  true coefficient magnitude are not too small, then with proper hyper-parameter $\lambda$ controlling the penalty strength, each element of the expected gradient points to the  direction that can recover the true active set.

\begin{proposition}
\label{prop:expected_grad}
Assume $\sum_{j=1}^p \sigma(\phiv_{j}) / (n - 1) \leq 1-\eta$, for certain $\eta \in (\frac 1n,1)$. If $n$ is sufficiently large such that $\sqrt{p\log(n)/2(n-1)^2} \leq \eta$ and $(\norm{\betav^{*}}_2^2+ \sigma^2) /(n-1)\min_{k \in \mathcal{A}}{(\beta_k^*)^2} \leq \eta $, then there exists $\lambda>0$ such that
\bas{
\E_{\Xmat, \yv, \uv}[g_{\emph{\text{U2G}}}(\uv; \sigma(\phiv))_j] < 0, ~\forall j\in \mathcal{A}; \quad \E_{\Xmat, \yv,\uv}[g_{\text{U2G}}(\uv; \sigma(\phiv))_j] > 0, ~\forall j\notin \mathcal{A}.
}
\end{proposition}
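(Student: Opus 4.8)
The plan is to reduce the two-sided sign claim to a single inequality between a lower bound for the active coordinates and an upper bound for the inactive ones, and then to place $\lambda$ in the resulting gap. Writing the expected gradient from Lemma~\ref{lemma:expectation_u2g} as $\E_{\Xmat,\yv,\uv}[g_{\text{U2G}}(\uv;\sigma(\phiv))_k] = B_k\,\pi_k(1-\pi_k)$ with
\[
B_k = \lambda - Q_k, \qquad Q_k := \left(\frac{(n-m-1)(\beta_k^*)^2}{n} + \frac{\sigma^2 + E_k}{n}\right) p(\norm{\zv}_0 < n),
\]
where $m := \E[\norm{\zv}_0 \mid \norm{\zv}_0 < n]$ and $E_k := \E_{\uv}[\norm{\betav_{-\zv}^*}_2^2 \mid \norm{\zv}_0 < n]$, I note that $\pi_k(1-\pi_k) > 0$, so the sign of each gradient component equals the sign of $B_k$. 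Hence it suffices to find $\lambda$ with $\max_{j\notin\mathcal{A}} Q_j < \lambda < \min_{j\in\mathcal{A}} Q_j$, which exists precisely when $\max_{j\notin\mathcal{A}}Q_j < \min_{j\in\mathcal{A}}Q_j$. Because $\beta_j^* = 0$ iff $j\notin\mathcal{A}$, the gap is driven by the leading $(\beta_j^*)^2$ term, which is present only for active coordinates.

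The core of the argument is three estimates. First, to keep the leading coefficient $n-m-1$ positive and large, I would show $m \leq \E[\norm{\zv}_0] = \sum_{j=1}^p \sigma(\phi_j)$: since $\{\norm{\zv}_0 < n\}$ truncates the count from above and $\E[\norm{\zv}_0] < n$ under the hypotheses, conditioning on this event can only decrease the mean. The assumption $\sum_{j}\sigma(\phi_j) \leq (1-\eta)(n-1)$ then yields $n - m - 1 \geq \eta(n-1) > 0$. Second, since $\betav_{-\zv}^*$ is a subvector of $\betav^*$, we have $E_k \leq \norm{\betav^*}_2^2$ uniformly in $k$. Third, to ensure $p(\norm{\zv}_0 < n) > 0$ and bounded away from $0$, I would apply Hoeffding's inequality to the sum of independent Bernoulli indicators $\norm{\zv}_0$: with mean at most $(1-\eta)(n-1)$, the deviation needed to reach $n$ is at least $1+\eta(n-1) > \eta(n-1)$, giving $p(\norm{\zv}_0 \geq n) \leq \exp(-2\eta^2(n-1)^2/p)$. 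The hypothesis $\sqrt{p\log(n)/2(n-1)^2} \leq \eta$ is exactly what makes this bound at most $1/n$, so $p(\norm{\zv}_0 < n) \geq 1 - 1/n > 0$.

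Combining these, for active coordinates I would discard the nonnegative $(\sigma^2+E_k)/n$ term to obtain $\min_{j\in\mathcal{A}} Q_j \geq \frac{\eta(n-1)\min_{j\in\mathcal{A}}(\beta_j^*)^2}{n}\,p(\norm{\zv}_0<n)$, while for inactive coordinates the $(\beta_j^*)^2$ term vanishes and the subvector bound gives $\max_{j\notin\mathcal{A}} Q_j \leq \frac{\sigma^2+\norm{\betav^*}_2^2}{n}\,p(\norm{\zv}_0<n)$. Cancelling the common positive factor $p(\norm{\zv}_0<n)/n$, the separation $\max_{j\notin\mathcal{A}}Q_j < \min_{j\in\mathcal{A}}Q_j$ reduces to $\sigma^2 + \norm{\betav^*}_2^2 < \eta(n-1)\min_{j\in\mathcal{A}}(\beta_j^*)^2$, which is precisely the remaining hypothesis $(\norm{\betav^*}_2^2 + \sigma^2)/((n-1)\min_{k\in\mathcal{A}}(\beta_k^*)^2) \leq \eta$. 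Any $\lambda$ strictly between the two bounds then makes $B_j < 0$ for $j\in\mathcal{A}$ and $B_j > 0$ for $j\notin\mathcal{A}$, which is the claim.

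The step I expect to require the most care is the handling of the conditioning on $\{\norm{\zv}_0 < n\}$, since the conditional quantities $m$, $E_k$, and the probability factor all share this event and must be controlled simultaneously. In particular I would verify rigorously that the truncation lowers the mean, equivalently that $\E[\norm{\zv}_0 \mid \norm{\zv}_0 \geq n] \geq \E[\norm{\zv}_0]$, which holds because $\E[\norm{\zv}_0] < n$, and I would confirm that the uniform subvector bound on $E_k$ is not spoiled when an active index $k$ contributes $(\beta_k^*)^2$ to $E_k$. The latter is harmless, because for active coordinates only the lower bound on $Q_k$ is used and the entire $E_k$ term is discarded.
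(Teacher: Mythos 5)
Your proposal is correct and takes essentially the same route as the paper's own proof: both start from Lemma~\ref{lemma:expectation_u2g}, bound the conditional mean $\E[\|\zv\|_0 \mid \|\zv\|_0 < n]$ by $\sum_{j}\sigma(\phi_j)$, bound $\E_{\uv}[\|\betav^*_{-\zv}\|_2^2 \mid \cdot\,]$ by $\|\betav^*\|_2^2$, use Hoeffding with the hypothesis $\sqrt{p\log(n)/2(n-1)^2}\leq\eta$ to control the tail of $\|\zv\|_0$, and place $\lambda$ in the resulting gap between the inactive upper bound and the active lower bound. The only difference is a refinement, not a new idea: you cancel the common factor $p(\|\zv\|_0<n)$ from both sides, whereas the paper bounds it asymmetrically ($\leq 1$ for $j\notin\mathcal{A}$, $\geq 1-1/n$ for $j\in\mathcal{A}$), so your separation condition is in fact slightly tighter than the paper's interval in Eq.~\eqref{eq:lambda_region}.
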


\begin{remark}
If the gradient points to the right direction element-wisely, then for each gradient step, in expectation, $\pi_j$ increases if and only if $j \in \mathcal{A}$. Therefore, if
\begin{align}
\varpi = \left(1-\min\left\{\sqrt{\frac{p\log(n)}{2 (n - 1)^2}},\frac{\norm{\betav^*}_2^2+ \sigma^2 }{(n-1)\min_{k \in \mathcal{A}}{(\beta_k^*)^2}}
\right\}\right)n > S, \label{eq:condition_initialization}
\end{align}
with initialization
\bas{
\sum_{j=1}^p \pi_j^{(0)} \leq \varpi - S
}
in expectation,  $\piv$ in Algorithm~\ref{alg:M1} converges to the indicator of the true active set.
\end{remark}
The proof of Proposition \ref{prop:expected_grad} provides a guidance in choosing hyperparameter $\lambda$ as
\ba{
\lambda \in \left(\frac{\norm{\betav^*}_2^2 + \sigma^2}{n}, \frac{n-1}{n} (\eta - \frac{1}{n}) \min_{j \in \mathcal{A}}{(\beta_j^*)^2} \right).
\label{eq:lambda_region}
}
Though in practice the true coefficient $\betav^*$ is unknown a priori, choosing $\lambda = \log(n)/(2n)$ as BIC falls in the region \eqref{eq:lambda_region} asymptotically, and serves as a good initial point for the cross validation in the finite sample case. %
Now, we study the convergence rate of the updates of Algorithm~\ref{alg:M1} in expectation, namely, with precise gradient each step. We show that these updates converge to the ground truth after $\cO(1/\epsilon)$ steps where $\epsilon > 0$ is the desired accuracy.
\begin{theorem}
\label{theorem:rate_population}
Let the update be $\phiv^{(t+1)}=\phiv^{(t)} - \rho \E_{\Xmat, \yv, \uv}[g_{\emph{\text{U2G}}}(\uv; \sigma(\phiv^{(t)}))]$ where $\rho > 0$ is the given step size. %
We assume that $\lambda \in \mathcal{I}$ where $\mathcal{I}$ is defined as in Eq.~\eqref{eq:lambda_region} %
Furthermore, the initialization $\phiv^{(0)}$ satisfies 
that
$\sum_{j=1}^p \sigma(\phi_j^{(0)}) \leq \varpi - S$ where $\varpi$ is defined in Eq.~\eqref{eq:condition_initialization}. Then, the following holds:
\begin{itemize}
\item[(a)] For any $j \in \cA$, as long as $t \geq t^1_j$ %
\begin{align*}
   \parenth{ 1 - \sigma(\phi_{j}^{(t)})}\brackets{ 1 - c_{1} \sigma^2(\phi_{j}^{(t)}) \parenth{1 - \sigma(\phi_{j}^{(t)})}} \leq 1 - \sigma(\phi_{j}^{(t+1)}) \leq \parenth{ 1 - \sigma(\phi_{j}^{(t)})}  & \\
   & \hspace{-14 em} \times \brackets{ 1 - C_{1}(\sigma(\phi_{j}^{(t)}))^2 \parenth{1 - \sigma(\phi_{j}^{(t)})}}.
\end{align*}
\item[(b)] For any $j \notin \cA$ and $t \geq t^2_j$ %
\begin{align*}
   \sigma(\phi_{j}^{(t)}) \brackets{ 1 - c_{2} \parenth{1 - \sigma(\phi_{j}^{(t)})}^2 \sigma(\phi_{j}^{(t)})} \leq \sigma(\phi_{j}^{(t+1)}) \leq   \sigma(\phi_{j}^{(t)})  & \\
   & \hspace{-8 em} \times \brackets{ 1 - C_{2} \parenth{1 - \sigma(\phi_{j}^{(t)})}^2 \sigma(\phi_{j}^{(t)})}.
\end{align*}
\end{itemize}
Here, with model parameters $\tau = \{ n, p, \sigma, \betav^{*}\}$, $c_{1}, c_{2}, C_{1}, C_{2}$ are some positive constants depending only on $\tau$ and $\rho$. $t_j^1$ and $t_j^2$ are constants depending on $\tau$, $\rho$ and initial $\phi_j^{(0)}$.
\end{theorem}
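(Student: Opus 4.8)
The plan is to decouple the vector recursion into $p$ scalar ones, convert each additive step in $\phi$ into the stated multiplicative step in $\pi=\sigma(\phi)$ via an exact sigmoid identity, and then certify the sign and magnitude of a coordinatewise drift coefficient uniformly along the trajectory. Start from Lemma~\ref{lemma:expectation_u2g}, which writes the $k$-th expected gradient as $\E_{\Xmat,\yv,\uv}[g_{\text{U2G}}(\uv;\sigma(\phiv))_k]=G_k(\phiv)\,\pi_k(1-\pi_k)$, where $\pi_k=\sigma(\phi_k)$ and $G_k(\phiv)=\lambda-(\cdots)\,p(\|\zv\|_0<n)$ is the bracketed factor, so the deterministic update acts coordinatewise as $\phi_k^{(t+1)}=\phi_k^{(t)}-\rho\,G_k(\phiv^{(t)})\,\pi_k^{(t)}(1-\pi_k^{(t)})$. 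The first task is to pin the sign of $G_k$ for all $t$: by Proposition~\ref{prop:expected_grad}, $\lambda\in\cI$ and the initialization give $G_j<0$ for $j\in\cA$ and $G_j>0$ for $j\notin\cA$. Following the Remark, each inactive $\pi_j$ is then nonincreasing, so $\sum_{j\notin\cA}\pi_j\le\varpi-S$ for every $t$; combined with $\sum_{j\in\cA}\pi_j\le S$ this preserves the invariant $\sum_j\pi_j\le\varpi$, hence the hypothesis of Proposition~\ref{prop:expected_grad}, for all $t$. The sign pattern of $G_k$ thus persists and $\pi_j\uparrow1$ ($j\in\cA$), $\pi_j\downarrow0$ ($j\notin\cA$), monotonically.

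Next I convert the additive step into the multiplicative forms (a)--(b) using the exact identity for $\sigma(\phi\pm\delta)$ expressed through $r=e^{\phi}$. For $j\in\cA$, writing $q=1-\pi=\sigma(-\phi)$ and $\delta=\rho|G_j|\pi q>0$ gives $q^{(t+1)}/q^{(t)}=(1+r)/(1+r e^{\delta})$; since $\pi\delta=\rho|G_j|\pi^2 q$, sandwiching $e^{\delta}$ between $1+\delta$ and $1+\delta+\delta^2$ (valid once $\delta$ is bounded) yields $q^{(t+1)}=q^{(t)}[1-\kappa\,\pi^2 q]$ with $\kappa$ trapped between fixed positive multiples of $\rho|G_j|$, which is exactly (a), the constants $c_1,C_1$ coming from the upper and lower bounds on $|G_j|$. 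The inactive case is symmetric: with $\delta=\rho G_j\pi q$ and $\pi^{(t+1)}/\pi^{(t)}=e^{-\delta}(1+r)/(1+r e^{-\delta})$, the same expansion gives $\pi^{(t+1)}=\pi^{(t)}[1-\kappa'\,(1-\pi)^2\pi]$, matching (b). I define the burn-in indices $t_j^1,t_j^2$ as the first time $\pi_j$ is past $1/2$ and the step $\delta$ is small enough for these two-sided exponential bounds to hold with the quoted multiplicative constants.

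It remains to turn $\kappa,\kappa'$ into honest constants $c_1,C_1,c_2,C_2$ depending only on $\tau=\{n,p,\sigma,\betav^*\}$ and $\rho$, which reduces to two-sided bounds $0<g_{\min}\le|G_k|\le g_{\max}$ on the drift coefficient. The upper bound is immediate because $\lambda$, $(\beta_k^*)^2$, and all the conditional expectations appearing in $G_k$ are bounded, and it produces $c_1,c_2$. The lower bound $g_{\min}>0$ is the main obstacle: $G_k$ depends on the entire vector $\phiv^{(t)}$ through $\E[\|\zv\|_0\mid\|\zv\|_0<n]$, $\E[\|\betav_{-\zv}^*\|_2^2\mid\|\zv\|_0<n]$, and $p(\|\zv\|_0<n)$, so bounding one coordinate's rate requires simultaneous control of all coordinates. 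I would close this by carrying the invariant $\sum_j\pi_j\le\varpi$ together with the established monotonicity to bound these functionals uniformly in $t$ after the burn-in --- at which point $\|\zv\|_0$ concentrates below $n$, the conditioning event has probability close to one, and the estimates used in the proof of Proposition~\ref{prop:expected_grad} pin $G_k$ inside a fixed interval bounded away from $0$. This yields $g_{\min}$ and hence $C_1,C_2$, completing the proof.
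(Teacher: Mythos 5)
Your proposal is correct and follows essentially the same route as the paper's proof: coordinatewise decoupling via Lemma~\ref{lemma:expectation_u2g}, sign persistence and monotonicity from Proposition~\ref{prop:expected_grad} together with the preserved invariant $\sum_j \sigma(\phi_j) \le \varpi$, burn-in times until $\phi_j$ crosses zero, and two-sided bounds on the drift factor $G_j$ (the immediate bound from Lemma~\ref{lemma:expectation_u2g} giving $c_1, c_2$, and the trajectory-dependent bound reusing Proposition~\ref{prop:expected_grad}'s estimates giving $C_1, C_2$). Your only deviation is mechanical: you pass from the additive step in $\phi_j$ to the multiplicative bounds in $\sigma(\phi_j)$ via the exact ratio $(1+r)/(1+r e^{\delta})$ with an $e^{\delta}$ sandwich, whereas the paper uses the mean-value inequalities $\sigma(x) + e^{-y} y \sigma'(x) \le \sigma(x+y) \le \sigma(x) + y \sigma'(x)$ for $x, y > 0$; both are elementary and produce the same form of constants.
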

\begin{remark}
(i) The upper bounds of Theorem~\ref{theorem:rate_population} demonstrate that when $j \in \cA$, $1 - \sigma(\phi_{j}^{(t)}) \leq \epsilon$ after $t = \mathcal{O}(\epsilon^{-1})$ steps, which is \emph{sub-linear}. Similarly, when $j \notin \cA$, it takes $t = \mathcal{O}(\epsilon^{-1})$ number of iterations for $\sigma(\phi_{j}^{(t)})$ to be within $\epsilon$ radius from 0. The lower bounds in Theorem~\ref{theorem:rate_population} indicate that these sub-linear complexities are tight. As a consequence, in expectation, the updates of Algorithm~\ref{alg:M1} converge to the global optima at the sub-linear rate $\mathcal{O}(\epsilon^{-1})$.

(ii) The results of Theorem~\ref{theorem:rate_population} also yield an insight into the choice of step size $\rho$. %
Based on the specific forms of $c_{1}$ and $c_{2}$ in the proof, we need the step size $\rho$ to satisfy
\begin{align}
    \rho < \min \left\{\frac{2}{\lambda}, \frac{2}{ \max_j\{(\beta_{j}^{*})^2\} - \lambda + (\sigma^2 + \|\betav^{*}\|_{2}^2)/n}\right\}. \label{eq:range_stepsize}
\end{align}
\end{remark}
The convergence properties we present in this section are under the expectation. The empirical performance of a low variance gradient estimator such as U2G can be close to the theoretical results, as shown in Section~\ref{sec:experiemnt}. Before that, we extend the proposed gradient methods from solving the frequentist objective \eqref{eq:bss3} to solving the $L_0$-regularized regression in the Bayesian paradigm.

\section{Bayesian \texorpdfstring{$L_0$}~-Regularized Regression}
\label{sec:vb}
The objective function in \Cref{eq:objective} and the gradient estimators in \Cref{sec:estimators} are compatible with a general objective function $f(\zv)$ with high dimensional binary vector $\zv$. 
In this section, 
we consider the best subset selection as a posterior inference problem, and use the  gradient estimators in \Cref{sec:estimators} to solve the new objective function. 

Consider the Bayesian linear regression with the spike-and-slab prior \citep{mitchell1988bayesian,george1997approaches}, a probabilistic model  with  the likelihood and prior as
\begin{gather}
y_i \sim \cN(\xv_i^{\top} (\alphav \odot \zv), \sigma^2), ~i \in [n] \notag \\ 
 \alphav \sim \cN(\alphav; \mathbf{0}, \Sigmamat_{\alphav}), \quad z_j \sim \text{Bern}(\sigma(-\lambda_0)), ~j \in [p].
\label{eq:bayes}
\end{gather}
Above, $\alphav \in \bR^p$ and $\zv \in \{0,1\}^p$ are latent variables with Gaussian and Bernoulli prior, respectively, and $\odot$ is the element-wise product. The hyper-parameter $\lambda_0$ controls the level of sparsity. Setting the regression variable $\betav = \alphav \odot \zv$, the prior for $\betav$ is a spike-and-slab prior which has a slab Gaussian component and a spike component at~$0$,
\bas{
p(\betav) = \prod_{j=1}^p \big[ \sigma(\lambda_0) \delta_0 + (1-\sigma(\lambda_0))\cN(0, \sigma_\alpha^2) \big].
}

We consider the \emph{maximum a posterior} (MAP) estimator for the best subset selection. The posterior distribution for the latent variable model in \Cref{eq:bayes} is
\bas{
\resizebox{0.92\hsize}{!}{
$p(\alphav, \zv \given \Xmat, \yv; \lambda_0, \Sigmamat_{\alphav}) \propto \exp \Big(-(2\sigma^2)^{-1}\norm{\yv - \Xmat (\alphav\odot \zv)}_2^2 -  \alphav^{\top} \Sigmamat_{\alphav}^{-1} \alphav/2 - \lambda_0 \norm{\zv}_0 \Big).$
}
}
Suppose the hyper-parameter $\Sigmamat_{\alphav}$ is $\sigma_{\alpha}^2\Imat$. The MAP estimator can be obtained  by  minimizing the negative log-posterior $- \log p(\alphav, \zv \given \Xmat, \yv; \lambda_0, \Sigmamat_{\alphav})$, $i.e.,$
\ba{
\min_{\alphav, \zv}   \frac{1}{2}\norm{\yv - \Xmat (\alphav\odot \zv)}_2^2 + \frac{\sigma^2}{2 \sigma_{\alpha}^2} \norm{\alphav}^2_2 + \sigma^2\lambda_0 \norm{\zv}_0.
\label{eq:bss_bayes}
}

By \Cref{eq:bss_bayes}, the MAP estimator of the spike-and-slab regression  is equivalent to the solution of the frequentist linear regression with additive $L_2$ and $L_0$ penalties \citep{polson2017bayesian}.  
When the variance $\sigma_\alpha^2$ of the slab component in the $\betav$ prior is large,  the ratio $\sigma^2/\sigma_\alpha^2$ is small and the MAP estimator of the Bayesian linear regression in \Cref{eq:bss_bayes} is close to the solution of the best subset selection in \Cref{eq:bss2}.

The MAP estimator is not directly computable because solving Eq.~\eqref{eq:bss_bayes} is a combinatorial problem. To overcome the computational challenge, we resort to variational inference (VI) to approximate the posterior distribution and the MAP estimator. To be consistent with VI nomenclature, here we deviate from the notation in Eq.~\eqref{eq:objective}, and use $p(\zv)$ as the prior and $q_{\phiv}(\alphav, \zv)$ as the variational distribution with parameter~$\phiv$.

The VI methods find an approximated posterior by minimizing the Kullback--Leibler (KL) divergence from $p(\alphav, \zv \given \Xmat, \yv))$ to $q_{\phiv}(\alphav, \zv)$, denoted as $D_{\kl}(q_{\phiv}(\alphav, \zv) || p(\alphav, \zv \given \Xmat, \yv))$. Since the true posterior is often unknown, equivalently we can maximize the evidence lower bound (ELBO) \citep{blei2017} as a tractable objective, defined as ~\looseness=-1
\ba{
\notag
\cL(\phiv) =&~\log p(\yv | \Xmat) - D_{\kl}(q_{\phiv}(\alphav, \zv) || p(\alphav, \zv | \Xmat, \yv)) \\
=&~\E_{q_{\phiv}(\alphav, \zv)} \log \big[p(\yv | \Xmat, \zv, \alphav)p(\alphav, \zv ; \lambda_0, \sigma_{\alpha}^2) / q_{\phiv}(\alphav, \zv) \big]. 
\label{eq:elbo-raw}
}

Due to the limited expressiveness of the variational distribution and the zero-forcing property of the KL divergence, variational method often underestimates the posterior uncertainty. Recent analysis, however, provides theoretical guarantees to the accuracy of point estimation. The consistency and asymptotic normality of the VI point estimation have been established for specific models \citep{bickel2013asymptotic, pati2017statistical, zhang2017theoretical, yin2019theoretical}. A general Bernstein-von Mises theorem has been proved that the variational posterior converges to the KL minimizer of a normal distribution, centered at the truth \citep{wang2018frequentist}. Hence, VI provides an accurate point estimate to the MAP solution, as validated in simulations in \Cref{sec:experiemnt}.

To further improve the inference accuracy, we propose a tightened ELBO that is closer to the evidence $\cL(\phiv)$. By \Cref{eq:elbo-raw}, the gap between ELBO and the evidence equals  the KL divergence  from  the posterior to the variational distribution.
We can then reduce the gap by controlling this KL divergence. 
With the chain rule of the KL divergence, 
\ba{
\resizebox{.9\hsize}{!}{$D_{\kl}(q_{\phiv}(\alphav, \zv) || p(\alphav, \zv |\Xmat, \yv)) = D_{\kl}(q_{\phiv}(\zv) || p( \zv | \Xmat, \yv)) + \bE_{q(\zv)} D_{\kl}(q(\alphav | \zv) || p(\alphav | \Xmat, \zv, \yv))$.}
\label{eq:chain-rule}
}
If we choose $q(\alphav | \zv) = p(\alphav | \Xmat, \zv,\yv) $,  the second term on the RHS of Eq.~\eqref{eq:chain-rule} becomes 0. Marginalizing out the latent variable $\alphav$, we get a tightened ELBO as
\ba{
\cL(\phiv) = \E_{q_{\phiv}(\zv)} \log \Big[p(\yv | \Xmat, \zv; \sigma_{\alpha}^2) p(\zv ; \lambda_0)/q_{\phiv}(\zv)\Big].
\label{eq:elbo}
}
To maximize \Cref{eq:elbo}, we choose a mean-field  distribution  $q_{\phiv}(\zv)=\prod_{j=1}^{p}\text{Bern}(z_j;\sigma(\phi_j))$. %

The ELBO in \Cref{eq:elbo} is a special case of the general optimization objective in  \Cref{eq:objective} with $f(\zv) = \log [p(\yv | \Xmat, \zv; \sigma_{\alpha}^2) p(\zv ; \lambda_{0})/q_{\phiv}(\zv)]$.
Accordingly, the unbiased gradient estimators in \Cref{sec:estimators} can be directly applied to maximizing the ELBO in \Cref{eq:elbo}. The variational objective, compared to the frequentist objective in Eq.~\eqref{eq:bss3}, does not require computing an OLS solution when evaluating $f(\zv)$, thus improving efficiency, especially when $n$ is large.

\section{Experimental Results}
\label{sec:experiemnt}

In this section, we study the performance of the gradient-based methods on a variety of synthetic and semi-synthetic data sets. Codes for the simulations in this paper are available at \url{https://github.com/mingzhang-yin/Probabilistic-Best-Subset}. \\

\noindent \textbf{Measurement Metrics.~~~} Denote  $\widehat{\betav}$ as the estimated coefficients, $\betav^*$ as the true coefficients, 
and $(\xv,y)$ as a random sample from the population.  We use the population SNR to measure the level of information in data, defined as
\bas{
\text{SNR} \coloneqq \var(\xv^{\top} \betav^*)/\var(\epsilon) = {\betav^*}^{\top} \Sigmamat \betav^* / \sigma^2.
}
The population SNR describes the degree of signal in the data generation. In addition to the SNR,  the degree of challenge of variable selection is influenced by the number of data $n$, the number of the covariates $p$ and the size of active set $S$ \citep{hazimeh2018fast}. 

The evaluation metrics throughout can be categorized into two groups: one group of metrics measures the out-of-sample predictive performance and the other group measures the recovery quality of the sparsity pattern \citep{bertsimas2016best, hastie2020best}. The metrics for the predictive performance that we use are
\begin{itemize}
\item \emph{ Relative risk (RR)} that measures how model prediction deviates from the oracle prediction, the perfect score being 0: 
\bas{
\text{RR}(\widehat{\betav}) = \frac{\E(\xv^{\top} \widehat{\betav} - \xv^{\top} \betav^*)^2}{\E(\xv^{\top} \betav^*)^2} = \frac{(\widehat{\betav} - \betav^*)^{\top} \Sigma (\widehat{\betav} - \betav^*)}{{\betav^*}^{\top} \Sigma \betav^*}.
}
\item \emph{ Relative test error (RTE)} that measures the relative test MSE compared with the oracle predictor, the perfect score being 1:
\bas{
\text{RTE}(\widehat{\betav}) = \frac{\E(y- \xv^{\top} \widehat{\betav})^2}{\E(y- \xv^{\top} \betav^*)^2} = \frac{(\widehat{\betav} - \betav^*)^{\top} \Sigma (\widehat{\betav} - \betav^*) + \sigma^2}{\sigma^2}.
}
\item \emph{Proportion of variance explained (PVE)} that measures the proportion of variance in the response variable explained by the model, the perfect score being $\text{SNR}$/(1 + $\text{SNR}$):
\bas{
\text{PVE}(\widehat{\betav}) = 1 - \frac{\E(y- \xv^T \widehat{\betav})^2}{\var(y)} = 1 - \frac{(\widehat{\betav} - \betav^*)^T \Sigma (\widehat{\betav} - \betav^*)+\sigma^2}{{\betav^*}^T\Sigma \betav^*+\sigma^2}.
}
\end{itemize}

To evaluate the sparse pattern recovery, we consider the \emph{size of estimated active set} \citep{linero2018bayesian} as well as the \emph{precision}, \emph{recall}, and \emph{F1} scores,  given by prec = TP/(TP + FP), rec = TP/(TP +FN), and F1 = 2 $\cdot$ prec $\cdot$ rec/(prec + rec), where TP denotes the number of predictors correctly flagged as influential, FP denotes the number of predictors incorrectly flagged as influential, and FN denotes the number of predictors incorrectly flagged as noninfluential. The F1 score is an overall summary that balances the precision and recall.\\

\noindent \textbf{Implementation Details.~~~} We compare the proposed gradient-based methods with several representative sparse variable selection methods. In particular, we consider \textsc{Lasso} \citep{tibshirani1996regression}, a convex penalty regularized method, SCAD \citep{fan2001variable}, a nonconvex penalty regularized method, and MIO \citep{bertsimas2016best}, Fast-BSS \citep{hazimeh2018fast}, two best subset selection methods. 

For the methods in comparison, \textsc{Lasso} is implemented by R package \texttt{glmnet} \citep{friedman2010regularization}. SCAD is implemented by R package \texttt{ncvreg} \citep{ncvreg}. We use the R package \texttt{bestsubset} \citep{bestsubset2018r} for the best subset selection with MIO \citep{bertsimas2016best} and use the R package \texttt{L0Learn} \citep{hazimeh2022l0learn} for the Fast-BSS. %
 The regression functions in \texttt{glmnet}, \texttt{ncvreg}, \texttt{L0Learn} packages fit the regularization hyperparameters over a path of  100 values in default. 
 If not specified, we use the default configurations of the existing R packages. %

For the proposed gradient-based methods, we set the number of Monte Carlo samples for estimating the gradient $\bE[\gv(\uv;\sigma(\phiv))]$ in Eq.~\eqref{eq:grads-multi-mc} as $K=20$ throughout the experiments. The gradient-based algorithms take less than 20 seconds to converge when the number of covariates is in thousands, running on a MacBook Pro laptop with a 2.4GHz GHz CPU. %
To determine the convergence, we compute the entropy for the $j$-th covariate as $H_j = -\pi_j \log(\pi_j)$ where $\pi_j = \sigma(\phi_j)$ is the probability of the Bernoulli distribution defined in Eq.~\eqref{eq:bss2}. We stop the training when the average of the $5\%$ largest entropies is below $0.1$. Under this stopping criterion,  all the probabilities $\pi_j$ are close to either zero or one. By the analysis in Eq.~\eqref{Sec:convergence}, we choose the hyperparameter $\lambda$ for the gradient-based methods on a grid of values  starting from $\log(n)/(2n)$, and use a constant step-size in SGD smaller than $2/\lambda$. 

We choose the hyperparameter for all methods by cross validation. The data is randomly split into training, validation and test sets. We report the results corresponding to the hyperparameter that leads to the lowest prediction error on the validation sets.

\subsection*{Experiment 1: Synthetic Data with Correlated Covariates}

We consider the example in \citet{fan2001variable} with increased dimension. The true coefficient  $$\betav^* = (3, 1.5, 0, 0, 2, \underbrace{0, \cdots, 0}_\text{195}) \in \bR^{200}.$$ The design matrix $\Xmat$ are $n$ i.i.d. samples generated from $\cN(\mathbf{0}, \Sigma)$ where $\Sigma_{ij} = \rho^{|i-j|}$ with the correlation parameter $\rho \in (0,1)$, and $\yv \sim \cN(\Xmat\betav^*, \sigma^2\Imat)$. $n=60$ in this example.

We first compare all the considered methods under a high and a low $\text{SNR}$ regime by setting the standard deviation of the noise as $\sigma = 1$ and $\sigma = 3$. As shown in \Cref{tab:exp1}, the non-$L_0$-based methods tend to select larger active sets than the $L_0$-regularized regression, reflected as a high recall and a low precision. Furthermore, 
 the best subset methods have lower test error than the non-$L_0$-based methods for both SNRs, possibly because the $L_0$ penalty has no shrinkage effect on the magnitude of the coefficients. For the compared best subset methods,  fast-BSS has much higher computational efficiency than MIO due to the cyclic coordinate descent and has better accuracy, which is similarly observed in \citet{hastie2020best}. For the gradient-based methods, U2G and U2G-VI perform on par with or better than ARM$_0$ and ARM$_0$-VI, while REINFORCE estimator has a high error in prediction and estimation because of high gradient variance. Based on this observation, we further compare \textsc{LASSO}, SCAD, Fast-BSS, U2G and U2G-VI systematically on a set of extensive experiments. ~\looseness=-1

\begin{table}[ht]
\centering
\caption{\small Results of Experiment 1 with $n=60, p=200, S=3$ and $\rho=0.5$. Reported results are the mean of 100 independent trials.} 
\begin{tabular}{cccccccc}
\toprule
&Precision&Recall& F1&Nonzero& RR& RTE&PVE \\
 \midrule
\multicolumn{3}{l}{ \small{$n = 60, p = 200, \sigma =1, \text{SNR} = 21.3$}}&&&& \\
 \textsc{Lasso} &0.780& 1.000& 0.852&4.65& 0.039& 1.830 &0.918\\

 SCAD &0.983& 1.000& 0.990& 3.07&0.013& 1.271 &0.943 \\ 

 MIO & 1.000 & 1.000 & 1.000 & 3.00 & 0.003 & 1.056 & 0.952 \\
 
  Fast-BSS & 1.000 & 1.000 & 1.000 & 3.00 &  0.003 & 1.055 & 0.952 \\
  
 REINFORCE & 0.089 & 0.657 & 0.153 & 32.4 & 1.601 & 35.01 & - \\

 ARM$_0$ & 0.992 & 1.000 & 0.996 & 3.03 & 0.003 & 1.067 & 0.952 \\

 U2G & 0.990 & 1.000 & 0.994 & 3.04 & 0.003 & 1.069 & 0.952 \\

 ARM$_0$(VI) & 0.950 & 1.000 & 0.971 & 3.21 & 0.005 & 1.107 & 0.950 \\

 U2G(VI) & 0.950 & 1.000 & 0.971 & 3.21 & 0.005 & 1.107 & 0.95 \\ 
\midrule
\multicolumn{3}{l}{ \small{$n = 60, p = 200, \sigma =3, \text{SNR} = 2.4$}}&&&& \\
 \textsc{Lasso} &0.747&0.850&0.745&4.32&0.284&1.671&0.503\\

 SCAD &0.722&0.777&0.721&3.51&0.214&1.506&0.552\\
 
 MIO &0.780&0.780&0.780&3.00&0.125&1.294&0.615\\
 
 Fast-BSS & 0.953 &   0.680 &  0.787 &  2.16 &  0.121  &  1.292  &  0.619 \\
 
 REINFORCE & 0.092 & 0.503 & 0.151 & 20.2 & 1.092 & 3.579 & - \\

 ARM$_0$ & 0.856 & 0.863 & 0.844 & 3.15 & 0.107 & 1.251 & 0.628 \\

 U2G& 0.971 & 0.850      & 0.896 & 2.65      & 0.063 &
       1.174& 0.690 \\

 ARM$_0$(VI) & 0.921 & 0.890 & 0.889 & 2.96 & 0.081 & 1.190 & 0.646 \\

 U2G(VI) & 0.913 & 0.883 & 0.885 & 3.00   &      0.068 & 1.190 & 0.686 \\
\bottomrule
\end{tabular}
\label{tab:exp1}
\end{table}

In \Cref{fig:EXP1}, we study how the F1 score and relative risk (RR) change with the SNR and the covariate correlation. These metrics reflect the accuracy in estimating the active set and in predicting the outcome of the test data, respectively. We fix the correlation parameter $\rho=0.5$ and sweep SNR between 1 and 10. Then, we fix $\text{SNR}=3$ and sweep $\rho$ between 0 and 0.8. ~\looseness=-1

For the active set recovery, we find the $L_0$-based methods generally have higher F1 scores than non-$L_0$-based methods across all the settings. Among the $L_0$-based methods, U2G-VI has the highest F1 score for most SNR and $\rho$. The F1 score of Fast-BSS drops fast when the covariate correlation $\rho$ increases. In contrast, the gradient-based methods are more robust to  high covariate correlation. For the prediction at the test time, we find that when the SNR is below 1.5, non-$L_0$-based methods have lower RR than U2G and U2G-VI. We hypothesize that when the SNR is very low, selecting a large set of predictors compensates for the error in the coefficient estimation \citep{hastie2020best}. A similar phenomenon has been observed that $L_0$ regularization tends to overfit when the SNR is very low \citep{mazumder2017subset}. For the SNR larger than 1.5 and across different $\rho$, U2G and U2G-VI have the lowest RR. 

\begin{figure}[ht]
	\centering{
		{{\includegraphics[width=0.47\textwidth]{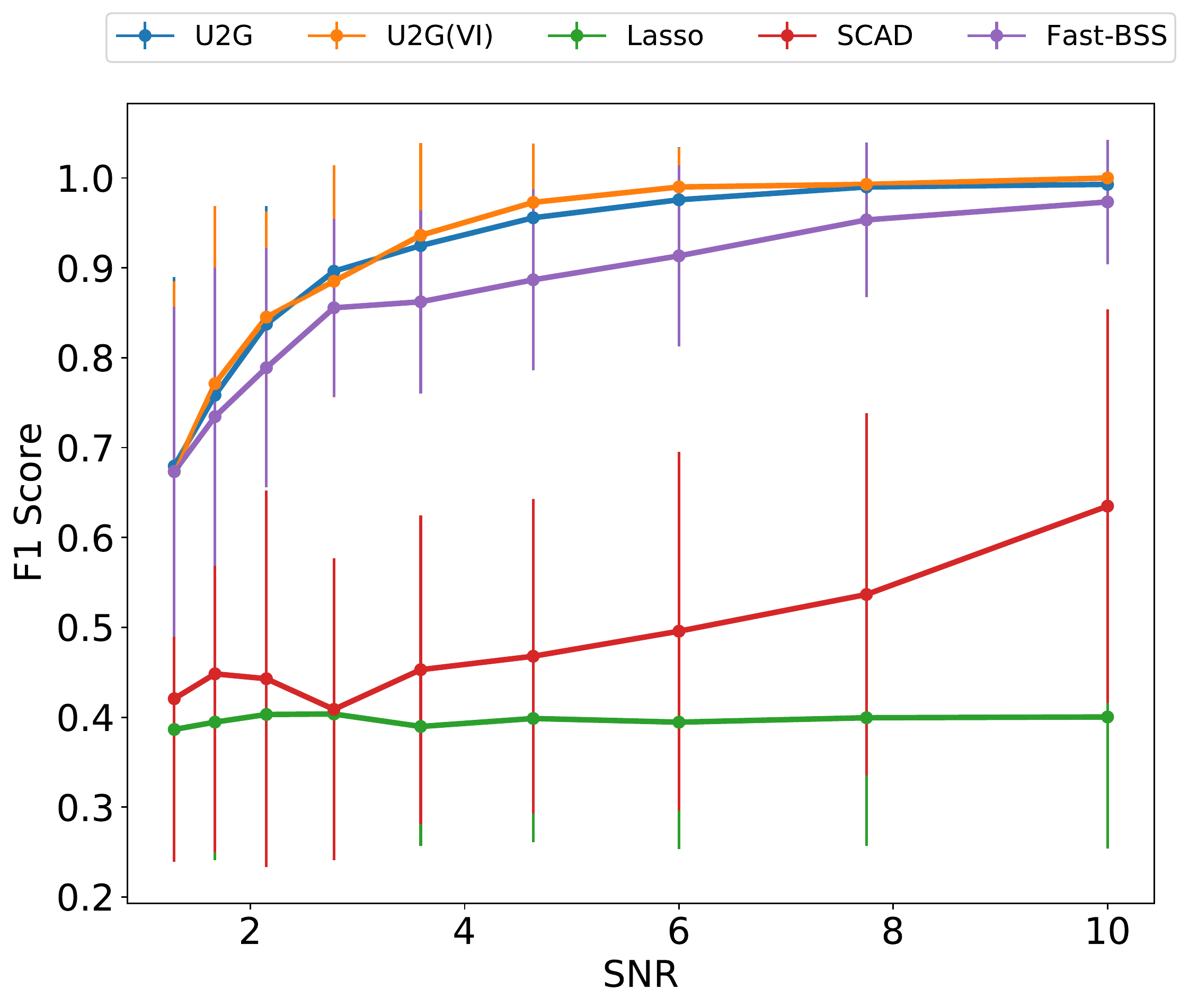} }} ~
		{{\includegraphics[width=0.48\textwidth]{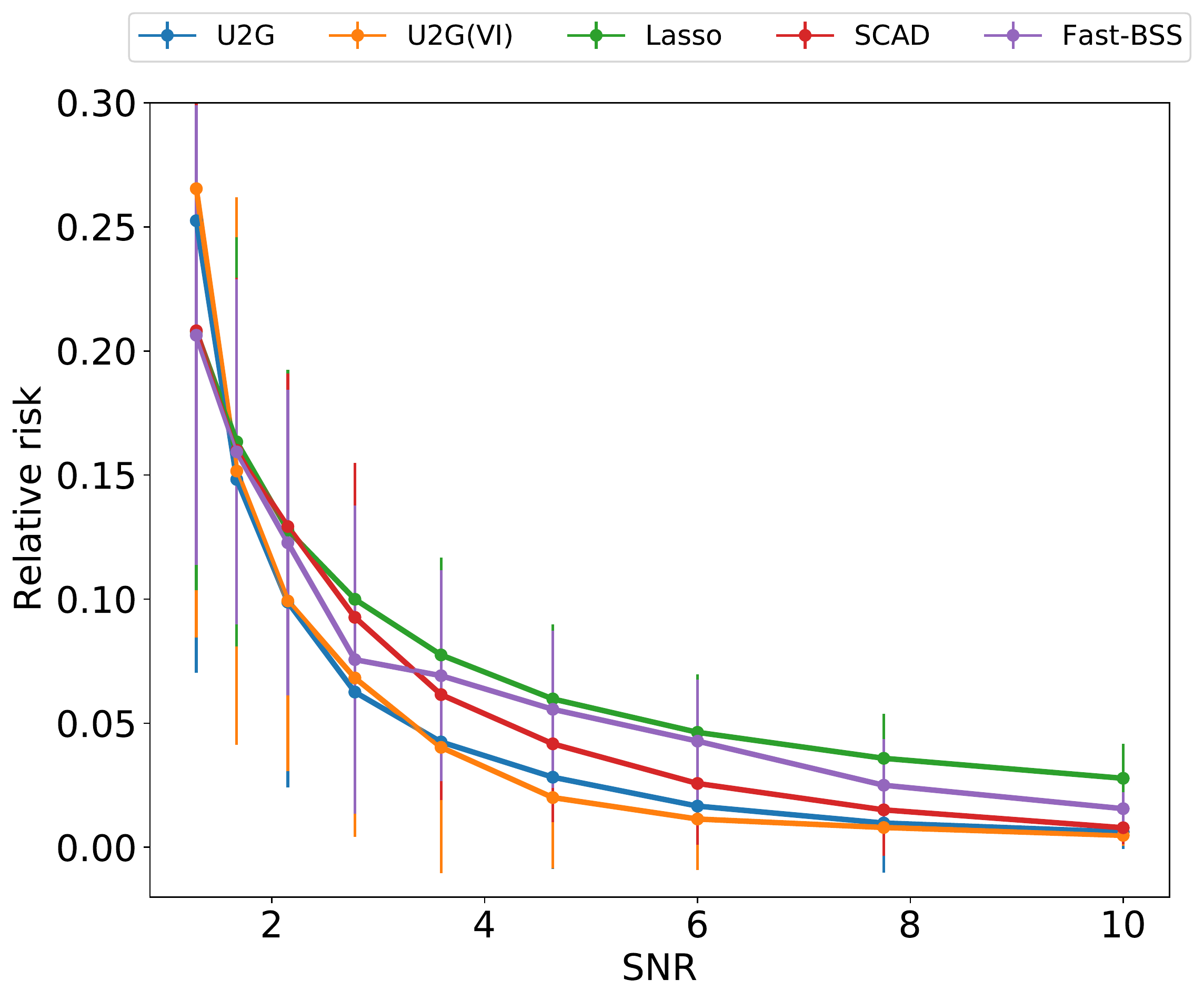} }}  \\
		
		{{\includegraphics[width=0.47\textwidth]{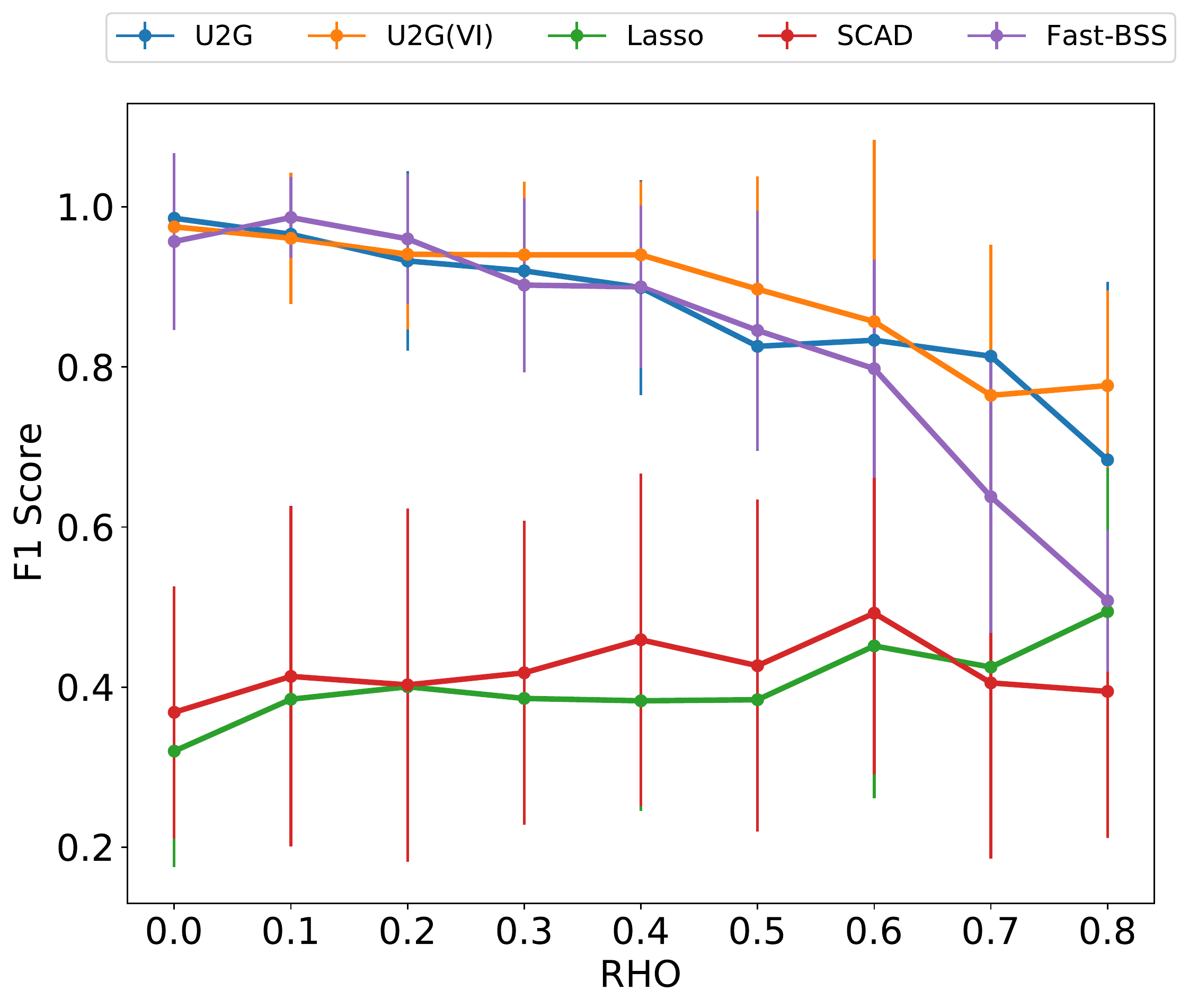} }}  ~~
		{{\includegraphics[width=0.48 \textwidth]{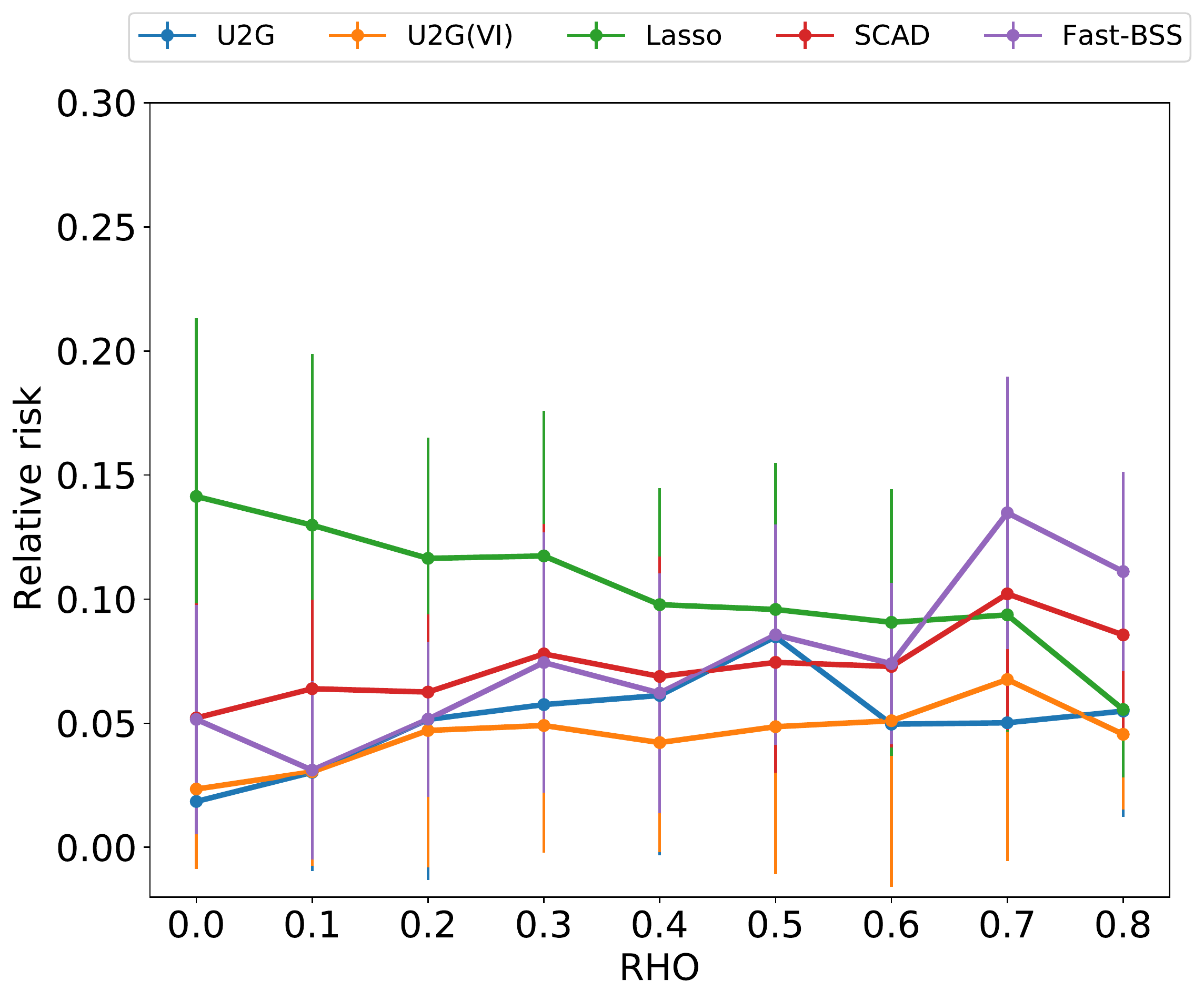} }} 
		\caption{  \small  Results for Experiment 1. \textbf{Top}: Variable selection across $\text{SNR}$ with $\rho=0.5$. The best subset methods based on $L_0$ penalty has significantly higher F1 scores in finding the true predictor set. For $\text{SNR} \geq 1.5$, U2G, U2G-VI have the highest F1 scores and the lowest prediction risks. \textbf{Bottom}: Variable selection across covariates correlation $\rho$ with $\text{SNR}=3$. The F1 scores of the best subset methods decrease with an increasing $\rho$ but are higher than that of SCAD and \textsc{LASSO} for all $\rho$. U2G and U2G-VI have the highest F1 scores in the high correlation regime.  U2G-VI has the lowest predictive risk for most of the $\rho$ values.}
		\label{fig:EXP1}
	}
\end{figure}

We show the regularization path of $L_0$
regression in Figure~\ref{fig:regu_path}, with $n=60, p=200, \sigma=1$, and $\rho=0$. When $\lambda$ decreases, the number of selected variable increases. The test error first decreases when the correct covariates join the selection, and then increases as additional incorrect covariates are selected. As the top panel shows, for a wide range of $\lambda$ values, the $L_0$-regularized regression recovers the true active set and has an estimated coefficient close to its true value without shrinkage.

\begin{figure}[ht]
\centering
\includegraphics[width=0.8\textwidth]{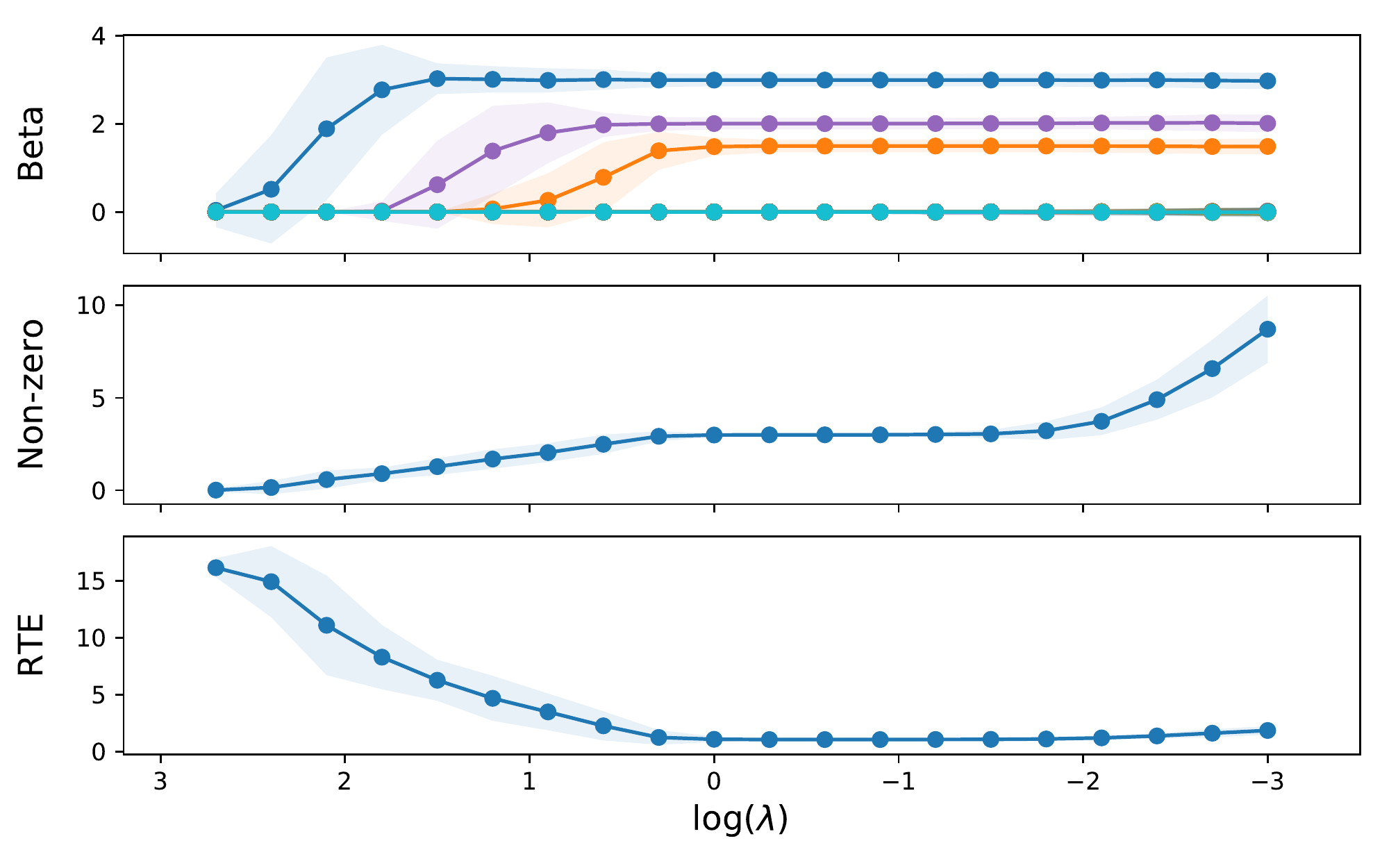}
\caption{\small Regularized path for $L_0$-regularized regression estimated by U2G gradient, with $n=60, p=200, \sigma=1$. The dotted curves are the mean of 100 independent trials and the shaded areas represent the standard deviation. }
\label{fig:regu_path}
\end{figure}

\vspace{10pt}
\noindent\textbf{\large Experiment 2: Synthetic Data with Independent Covariates} 
\vspace{5pt}

We consider the experiment in \citet{bertsimas2016best} and \citet{ hastie2020best}. The true coefficients have the first $10$ elements equal to $1$ as $\betav^* = (\underbrace{1, \ldots, 1}_{\text{10}}, \underbrace{0, \ldots, 0}_{\text{990}})$ and $p=1000$, $S=10$. The covariates $\xv_i \in \bR^{1000}, i \in [n]$ are sampled i.i.d. from a zero mean isotropic Gaussian distribution. $\text{SNR}=5$ in this example.

Table \ref{tab:exp2} contains the numerical results for the simulations. \textsc{LASSO} produces a large active set and high RR. SCAD has low prediction error but estimates an excessively large active set. Fast-BSS has high precision and recall but low prediction accuracy. In comparison, U2G and U2G-VI perform well for both target prediciton and set estimation.

\begin{table}[ht]
\centering
\caption{ \small Results of Experiment 2 with $n=100$, $p=1000$, $S=10$, $\text{SNR}=5$. Reported results are the mean of 100 independent trials.}  \label{tab:exp2} 
\begin{tabular}{cccccccc}
\toprule
&Precision&Recall& F1&Nonzero& RR& RTE&PVE \\
 \midrule

 \textsc{Lasso} &    0.162 &  0.995&  0.277 & 63.70  & 0.259 &  2.299 &  0.616 \\
   
 SCAD  &    0.272  &  1.000  &  0.422 &  39.10  &  0.050  &  1.245  &  0.792  \\

 Fast-BSS & 0.945  &  0.940  &  0.942  &  9.90 &  0.109  &  1.547  &  0.742\\

 U2G & 0.896 &  0.980   &      0.934  &  11.05  &       0.078  &  1.391 &  0.768 \\

 U2G(VI)&  0.923 &  1.000     &      0.950 &  10.90  &        0.050 &  1.256 &  0.791 \\
\bottomrule
\end{tabular}
\end{table}

\Cref{fig:EXP2} explores the influence of the  sample size $N$. For fixed SNR and dimension $p$, the sample size reflects the level of information contained in the observed data. For small $N$, the non-$L_0$-based methods have a higher F1 score and a lower RR. This indicates that the non-$L_0$-based methods are less affected by the scarcity of data, potentially because of the relaxation in the sparsity penalty. When $N$ increases, the best subset methods outperform the  non-$L_0$-based methods on the F1 score and are on par with SCAD on the RR. In Appendix \Cref{fig:simu2-K}, we study how the number of samples $K$ in estimating the gradient in Eq.~\eqref{eq:grads-multi-mc} influences the performance of U2G and  U2G-VI. We find the performance of U2G improves when $K$ increases from 1 to 10 and stays similar when $K$ further increases. The performance of U2G-VI is similar across different values of $K$. 

\begin{figure}[ht]
	\centering{
		{{\includegraphics[width=0.47\textwidth]{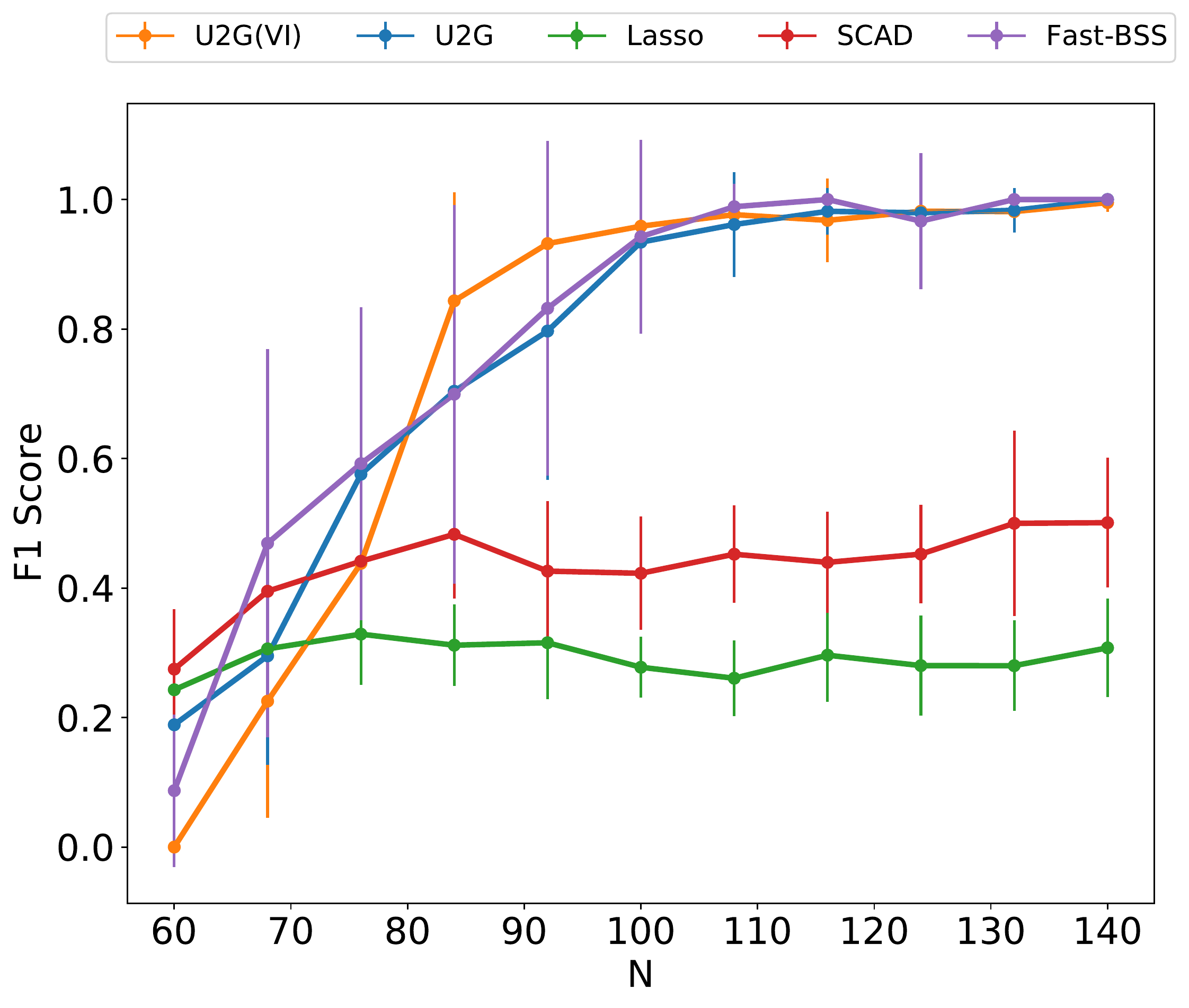} }} ~
		{{\includegraphics[width=0.47\textwidth]{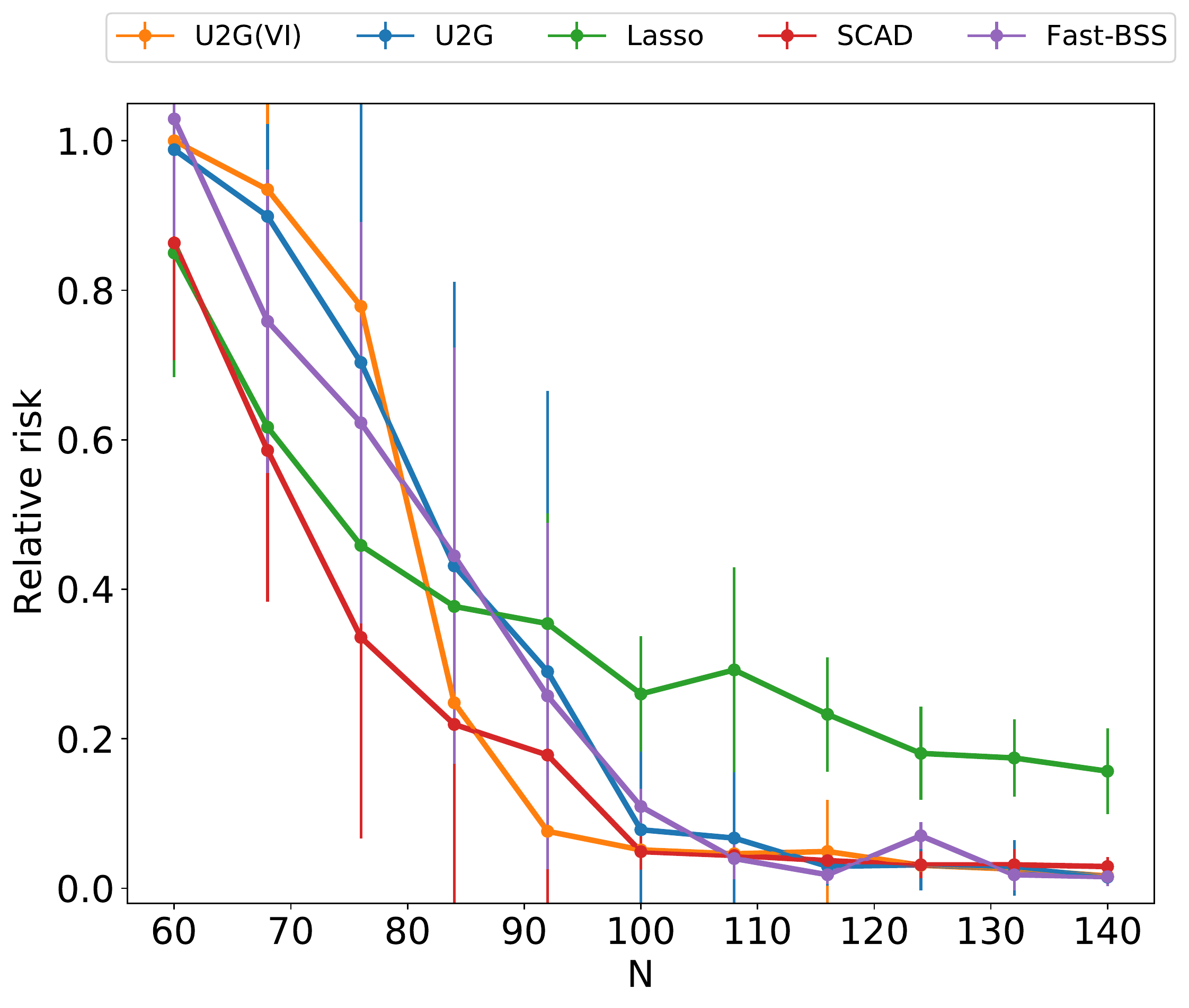} }} 
		\caption{  \small Results for Experiment 2. The change of F1 score and RR with the number of samples with $\text{SNR}=5$. For a small value of $n$, non-$L_0$-based methods have higher F1 scores and lower RR. For relatively large $N$ the F1 scores of U2G and U2G-VI are close to 1 and xare much higher than those of the non-$L_0$-based methods. When $N \geq 100$, the RRs of all compared methods  except \textsc{LASSO} are similarly low. }
		\label{fig:EXP2}
	}
\end{figure}

\subsection*{Experiment 3: Semi-synthetic Data}

We further benchmark our methods on Prostate, a real-world microarray dataset about prostate cancer \citep{singh2002gene}. The regularized models are widely used for gene selection as biomarkers in analyzing microarray data, which is often high-dimensional with a large number of genes and a small number of samples. The original Prostate dataset contains the expression profiles of 12,600 genes for 50 normal tissues and 52 prostate tumor tissues. Similar to \citet{bertsimas2016best}, we reduce the number of covariates by choosing 1000 genes that maximally correlate (in absolute value) with the tumor type. For the active set, we first choose five gene biomarkers correlated the most with the tumor type, which induces high multi-collinearity in the chosen genes. We also choose five gene biomarkers with pairwise correlation in $(-0.7,0.7)$ so that the multi-collinearity is moderate. The pairwise correlations are shown in Appendix Figure~\ref{fig:corr}. For each type of active set, we separately create a semi-synthetic data set $\yv \sim \cN(\Xmat\betav^*, \sigma^2\Imat)$, $\Xmat \in \bR^{102\times 1000}$, where the coefficients are one for the chosen covariates and zero for the other ones. %

We compare U2G and U2G-VI with \textsc{LASSO}, SCAD and Fast-BSS, as shown in Table~\ref{tab:prostate} and Appendix Table~\ref{tab:prostate2}. When the multi-collinearity is moderate in the true active set, the gradient-based method can recover the true active set with high probability. When the multi-collinearity is high,  
U2G and U2G-VI have much higher F1 scores and lower RR than the compared methods.  %

\begin{table}[t]
\centering
\caption{\small Results of the Prostate cancer dataset, with $n=102$, $p=1000$, $S = 5$, $\text{SNR}=5$, \emph{moderate} collinearity. Reported results are the average of 100 independent trials.}
\begin{tabular}{cccccccc}
\toprule
&Precision&Recall & F1&Nonzero& RR& RTE&PVE \\
 \midrule
 
 \textsc{Lasso} &  0.167 & 0.980 & 0.284 & 31.2 &  0.049 & 1.296 & 0.814 \\
 
 SCAD & 0.668 & 0.940 & 0.771 & 7.50 & 0.020 & 1.122 & 0.839 \\
 Fast-BSS & 0.823 & 0.900 & 0.853 & 5.73 & 0.037 & 1.186 & 0.802  \\

 U2G & 0.926 & 0.960  &      0.942 &  5.20     &    0.020 &  1.12
 &  0.840 \\
 U2G(VI) & 0.963 & 0.980  &      0.971 &  5.10    &     0.011 & 1.067 &  0.847 \\
\bottomrule
\end{tabular}
\label{tab:prostate}
\end{table}

\Cref{fig:EXP3} exhibits the performance over a path of SNR from 1 to 10. When the SNR is less than 2, the non-$L_0$-based methods have lower RR. When the SNR is moderately large, U2G and U2G-VI have the highest F1 score and  U2G-VI has the lowest RR. This is consistent with the observations with the synthetic data in Experiment 1.   ~\looseness=-1

\begin{figure}[ht]
	\centering{
		{{\includegraphics[width=0.47\textwidth]{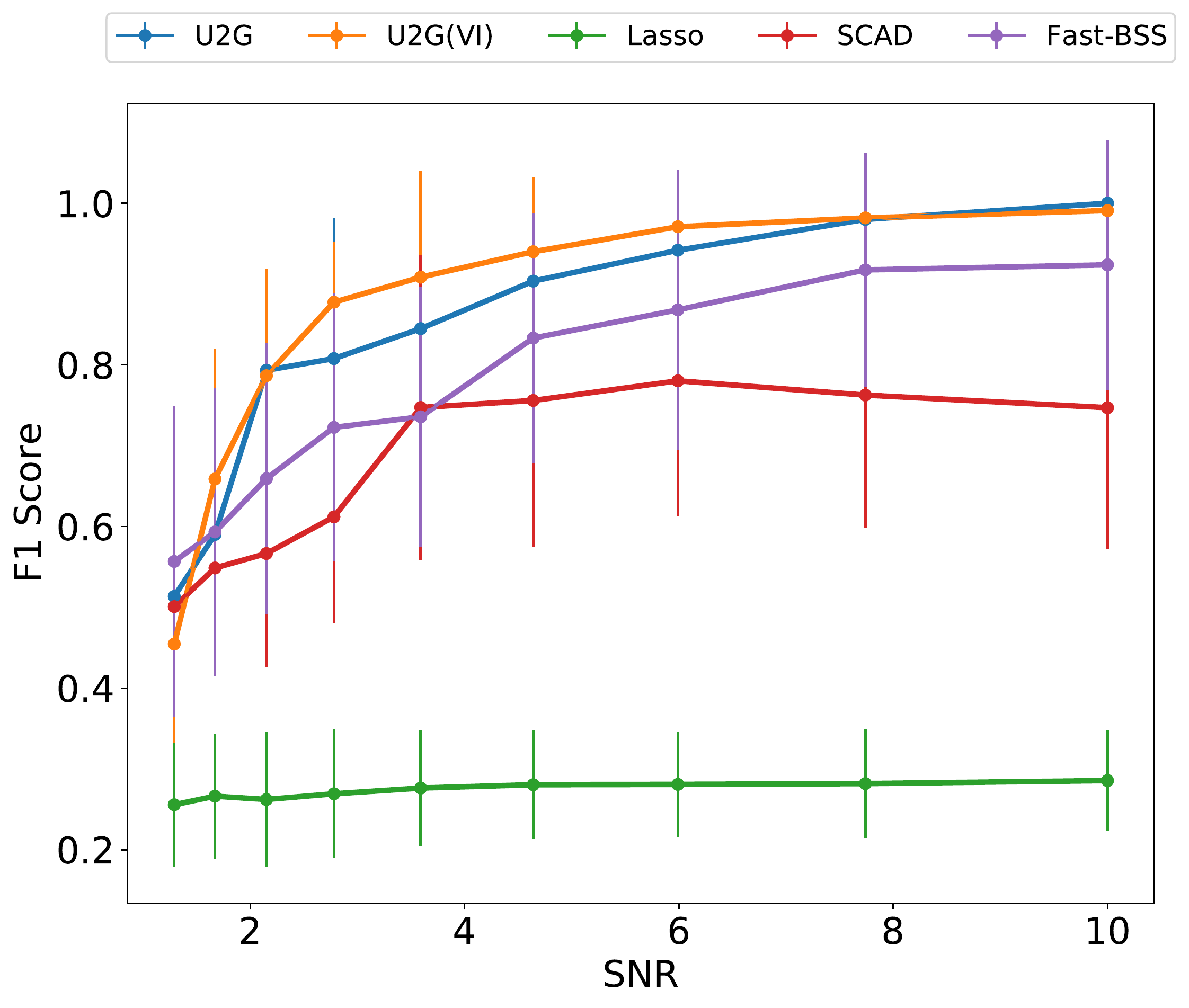} }} ~
		{{\includegraphics[width=0.48\textwidth]{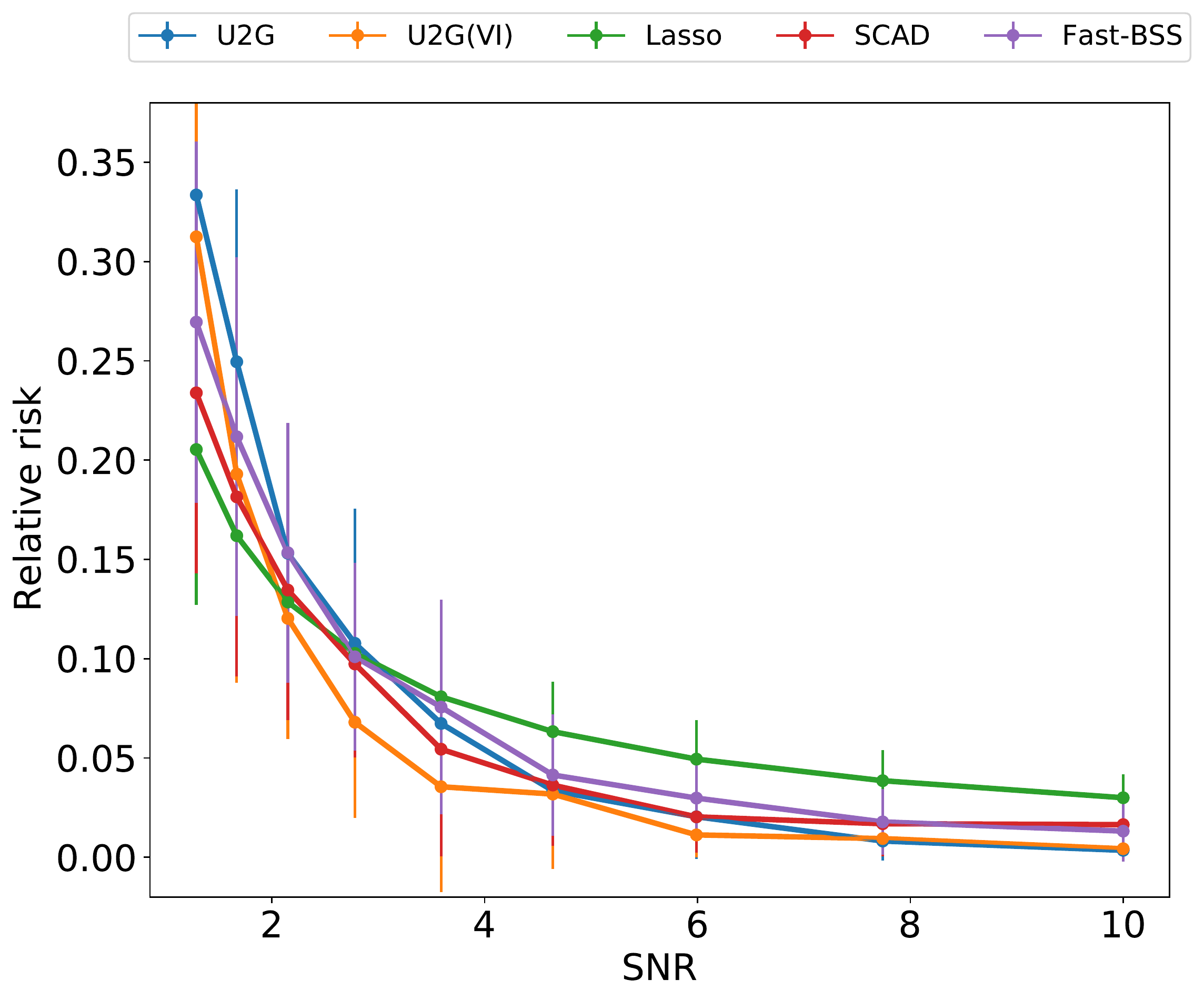} }} 
		\caption{  \small Performance measures on the Prostate cancer data with varying SNR from 1 to 10 with moderate collinearity. U2G-VI has the highest F1 score and the lowest RR when SNR$\geq 2$.}
		\label{fig:EXP3}
	}
\end{figure}

\subsection*{Experiment 4: Compressive Sensing}

We further study whether the $L_0$-based method improves the sparse signal recovery in compressive sensing. The traditional compressive sensing combines the random projection method with $L_1$-relaxation \citep{wainwright2019high}. It finds the sparse pattern of the observed signal under a set of orthonormal bases while maintaining the exact reconstruction under random projection by a measurement matrix. Following \citet{ji2008bayesian}, we consider $\thetav$ as the coordinates of observations in transformed space with length $p = 1024$, where $10$ elements are randomly picked as the signal with magnitude $\pm 1$.  In this example, most of the entries in the true signal are identically zero, which is called \emph{strong sparsity} \citep{carvalho2010horseshoe}. Construing $\Amat \in \bR^{n \times p}$ as a multiplication of the random projection matrix and orthonormal transformation matrix, each row of $\Amat$ is generated from isotropic Gaussian distribution $\cN(0,\Imat_p)$ and normalized to have the unit norm. We add the Gaussian white noise with a standard deviation $\sigma$ to the measurements $\yv$. Aligned with our probabilistic objective, we solve the Lagrangian form of 
\ba{
\min_{\thetav \in \bR^p}~\|\thetav\|_0, \qquad\text{such that}~ \Amat\thetav = \yv.
}

We compare U2G with basis pursuit (BP) \citep{chen2001atomic} and Bayesian compressive sensing (BCS) \citep{ji2008bayesian} in different $\text{SNR}$ settings by changing the magnitude of $\sigma$. 
\begin{table}[t]
\centering
\caption{\small Results of the signal reconstruction when SNR$_d$ is low, with $n=500$, $p=1000$, $S=10$, $\sigma =0.1$. %
$L_0$-based method has the best performance in recovering strong sparsity in the signal. }
\begin{tabular}{cccccccc}
\toprule
&Precision&Recall & F1&Nonzero& RR& RTE&PVE \\
 \midrule
BP &0.009&1.000&0.019&1024&0.298&298&0.702\\
BCS &0.029&1.000&0.057&336.1&2.380&2381& - \\
U2G &1.000&1.000&1.000&10.00&0.014&15.1&0.985\\
U2G(VI) & 1.000 & 1.000 & 1.000 & 10.00 & 0.018 & 18.6 & 0.981\\
\bottomrule
\end{tabular}
\label{tab:cs}
\end{table}
For U2G, we use $K = 5$ Monte Carlo samples in the gradient estimation. %
The numerical results are summarized in Table \ref{tab:cs} and Appendix Table~\ref{tab:cs2}. As shown in Appendix Figure \ref{fig:cs_high}, in the high $\text{SNR}$ regime, all three methods can reconstruct the sparse signal reasonably well, but the probabilistic best subset method can identify the locations of true signals, while BP and BCS identify excessively large active sets.  Consequently, the $L_0$-regularized method has higher predictive precision. This phenomenon is amplified when $\text{SNR}$ drops. When $\text{SNR}$ is low, as shown in Figure~\ref{fig:cs_low}, BP and BCS only recover \emph{weak sparsity} where the signals are dense yet most of the entries are small compared to several large ones.
In both high and low $\text{SNR}$ regimes, the gradient-based methods accurately recover the strong sparsity in the signal, and improve the predictive accuracy of BP and BCS by several orders of magnitude.

\begin{figure}[t]
\centering
 \includegraphics[width=0.8\textwidth]{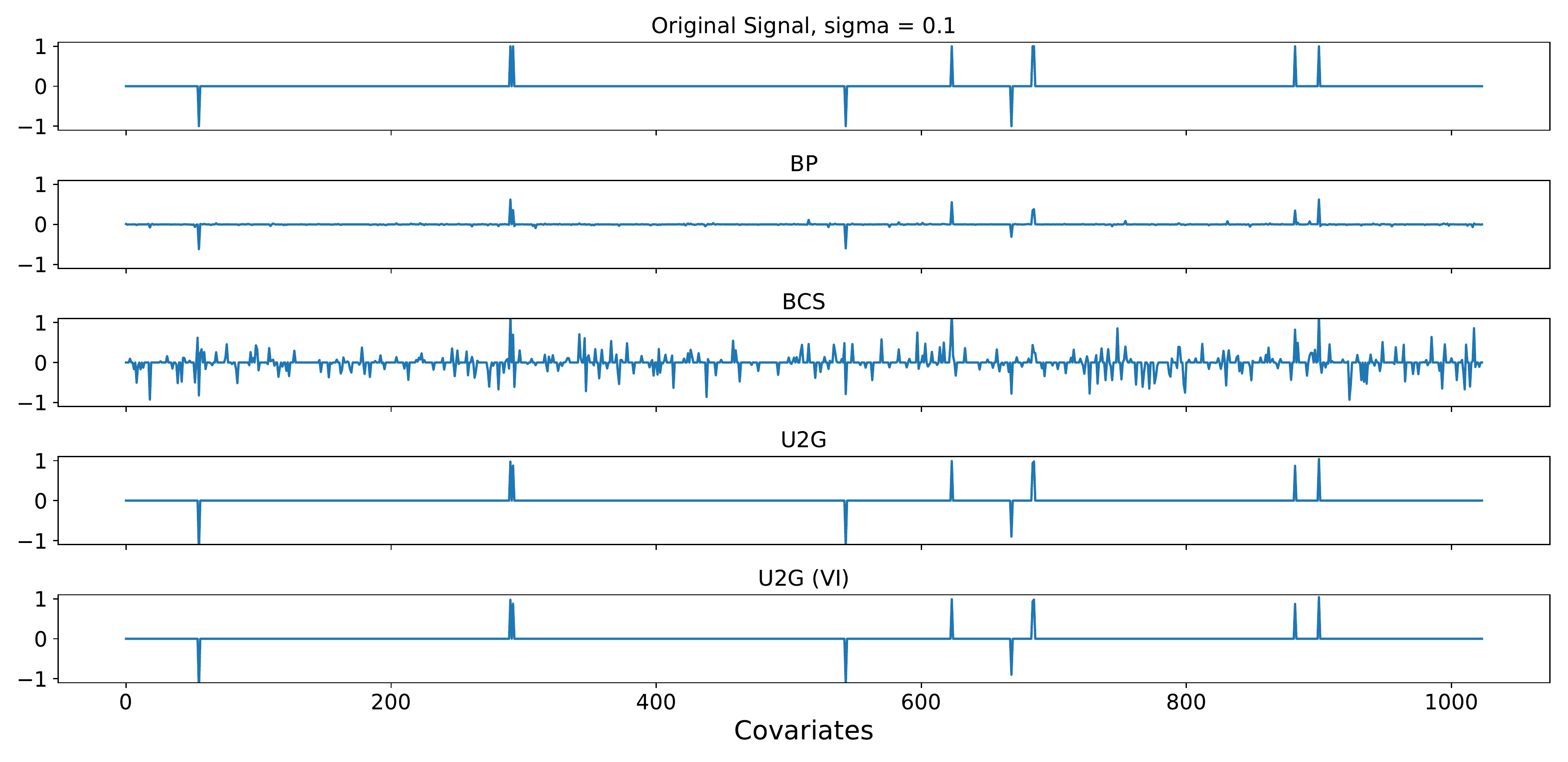}
\caption{\small The reconstruction of signals when $\text{SNR}$ is low, with $n = 500$, $p = 1024$, $\sigma = 0.1$.}
\label{fig:cs_low}
\end{figure}

\section{Discussion}
\label{sec:discussion}
We propose a probabilistic reformulation to solve the exact best subset selection problem using gradient-based optimization. In order to efficiently solve the $L_0$-regularized regression in high dimensional settings, a family of unbiased gradient estimators is proposed to approximate the exact gradient. Within this estimator family, we identify the U2G estimator as the one with minimal variance. Theoretically, the U2G estimator recovers the true sparse pattern in expectation. We also developed a variational method to solve the Bayesian best subset selection with the proposed gradient estimators. Empirically, the proposed gradient-based methods improve the sparsity estimation and predictive accuracy over convex and nonconvex relaxation methods and existing best subset selection tools. 

There are several future directions arising naturally. First, the proposed gradient-based methods are highly flexible. A future direction is to analyze the theoretical and empirical properties of the proposed best subset methods on statistical models such as the generalized linear models and deep neural networks. Second, the probabilistic reformulation in this paper is developed for binary variables. An important direction is to generalize it to categorical latent variables \citep{jang2016categorical,tucker2017rebar, yin2019arsm} and find the minimal variance estimator. Third, our theoretical analysis takes a first step with independent covariates and convergence under expectation. Future work is to study the convergence under more general design matrix assumptions.
Finally, in this paper, the unbiased gradient estimator is applied in the stochastic gradient descent framework. One future direction is to analyze the update rules with accelerated gradient and momentum  \citep{kingma2014adam,Ho_instability}. ~\looseness=-1

\appendix

\FloatBarrier

\bibliographystyle{plainnat}
\bibliography{reference}

\newpage 
\begin{center}
{\Large\bf Supplementary Material for ``Probabilistic Best Subset Selection via Gradient-Based Optimization''}
\end{center}

\vspace{3mm}
This supplementary document contains detailed proofs and derivations of the theoretical results presented in the main paper, and additional experimental results. In particular, Appendix~\ref{sec:aux_lemma} contains a necessary lemma for the proof of convergence properties. Appendix~\ref{sec:main_proof}  contains  proofs of the theoretical results presented in the main paper. Finally, Appendix~\ref{sec:additional} contains additional simulation results on the semi-synthetic data and compressive sensing. 

\section{Auxiliary Lemmas}
\label{sec:aux_lemma}
The following lemma describes the linear independence between random Gaussian vectors, which is useful when the sample size exceeds the number of covariates. Closely following the proof in \citet{taosingularity}, which contains a thorough discussion on the singularity of random matrix ensembles, we have the following result:
\begin{lemma}
\label{lemma:tao}
Let $X_j \in \bR^n$ are i.i.d. random Gaussian vectors with distribution $\cN(0,\Imat_n)$, for $j=1,\cdots,k$, $k\leq n$. Then $\{X_1, \cdots, X_k\}$ are linearly independent with probability one. 
\end{lemma}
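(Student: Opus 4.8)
The plan is to proceed by induction on $k$, the engine being the elementary measure-theoretic fact that a Gaussian vector $\cN(0,\Imat_n)$ is absolutely continuous with respect to Lebesgue measure on $\bR^n$, so it places zero probability on every set of Lebesgue measure zero; in particular on every proper linear subspace of $\bR^n$, since such a subspace has dimension strictly less than $n$ and hence is Lebesgue-null.

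First I would dispatch the base case $k=1$: the singleton $\{X_1\}$ fails to be linearly independent only when $X_1=0$, an event of probability zero because $X_1$ has a density. For the inductive step, assume the claim for $k-1$ (legitimate since $k-1 \le n-1$), so that $X_1,\dots,X_{k-1}$ are linearly independent with probability one. On that event they span a subspace $V=\mathrm{span}(X_1,\dots,X_{k-1})$ of dimension $k-1<n$, which is a proper subspace of $\bR^n$. Now $X_k$ is independent of $X_1,\dots,X_{k-1}$ and is absolutely continuous, so conditionally on the values of $X_1,\dots,X_{k-1}$ the probability that $X_k\in V$ is zero. Integrating out, $P\big(X_k\in \mathrm{span}(X_1,\dots,X_{k-1})\big)=0$, and a union bound gives
\[
P\big(\{X_1,\dots,X_k\}\ \text{dependent}\big)\ \le\ P\big(\{X_1,\dots,X_{k-1}\}\ \text{dependent}\big)\ +\ P\big(X_k\in V\big)\ =\ 0,
\]
completing the induction.

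The only real content — the step I would be most careful about — is the measure-zero assertion itself, namely that a proper linear subspace of $\bR^n$ is Lebesgue-null and that an absolutely continuous law assigns it probability zero. This is precisely where the hypothesis $k \le n$ is used, since it guarantees $\dim V = k-1 < n$. An equivalent route, should one prefer an algebraic formulation, is to note that linear dependence forces every $k\times k$ minor of the $n\times k$ matrix $[X_1,\dots,X_k]$ to vanish; since at least one such minor is a polynomial in the $nk$ Gaussian entries that is not identically zero, its zero set is Lebesgue-null in $\bR^{nk}$, and the jointly absolutely continuous law of the entries assigns it probability zero. Either way the result follows.
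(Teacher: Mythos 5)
Your proof is correct and is essentially the same argument as the paper's: both rest on the fact that a proper linear subspace of $\bR^n$ is Lebesgue-null, so the absolutely continuous law of $X_j$, independent of its predecessors, assigns zero conditional probability to $X_j$ lying in their span, and a union bound finishes the job. Organizing it as an induction on $k$ rather than the paper's single union bound over $j=2,\dots,k$ is only a cosmetic difference.
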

\begin{proof}
Let event $\mathcal{E}$ be the event that $\{X_1, \cdots, X_k\}$ are linearly dependent. Then $\mathcal{E}$ is equivalent to that $X_j$ lies in the span of $X_1, \cdots, X_{j-1}$ for some $j$. Thus
\bas{
p(\mathcal{E}) \leq \sum_{j=2}^k p(X_j \in V_j),
}
where $V_j\coloneqq \text{span}(X_1,\cdots, X_{j-1})$. For each $2\leq j \leq k$, conditional on vectors $X_1, \cdots, X_{j-1}$, the vector space $V_j$ is fixed, has positive codimension, and thus has measure zero. Since the distribution of $X_j$ is absolutely continuous, and is independent of $X_1, \cdots, X_{j-1}$, we have
$$p(X_j \in V_j | X_1, \cdots, X_{j-1})=0$$
for all $(X_1, \cdots, X_{j-1})$. Integrating over $(X_1, \cdots, X_{j-1})$, we have $p(X_j \in V_j) = 0$; therefore
\bas{
p(\mathcal{E}) \leq \sum_{j=2}^k p(X_j \in V_j)=0,
}
which proves that $\{X_1, \cdots, X_k\}$ are linearly independent with probability 1. As a consequence, we obtain the conclusion of the lemma.
\end{proof}

\section{Proofs}
\label{sec:main_proof}
In this appendix, we provide proofs for theoretical results in the paper.
\subsection{Proof of Theorem \ref{thm:continuous}} 
\label{subsec:proof:thm:continuous}
\begin{proof}
On the one hand, we show the optimal solution to problem \eqref{eq:bss2} is in the set of feasible solutions to problem \eqref{eq:bss3}. Assuming $(\alphav^*, \zv^*)$ is the optimal solution to problem \eqref{eq:bss2}, setting $p_j = z^*_j,~ j \in[p]$ and $\alphav = \alphav^*$ would be a feasible solution to problem \eqref{eq:bss3} which gives the same object value as what $(\alphav^*, \zv^*)$ achieves in problem \eqref{eq:bss2}.

On the other hand, we show the optimal solution to problem \eqref{eq:bss3} is in the set of feasible solutions to problem \eqref{eq:bss2}. Let $h(\zv) = \frac{1}{n}\norm{\yv - \Xmat (\alphav\odot \zv)}^2 + \lambda \norm{\zv}_0$, $f(\zv) = \min_{\alphav}h(\zv)$, $g(\zv) = \argmin_{\alphav}h(\zv)$, and assume $\piv^*$ is the optimal $\piv$ in problem \eqref{eq:bss3}. First, if $p(\zv | \piv^*)$ is a point mass density $\delta_{\zv^*}$ with $ \pi^*_j = z^*_j, j\in [p]$ and $\alphav^* = g(\zv^*)$, 
by setting $\zv = \piv^*$, $\alphav = \alphav^*$, it would give a feasible solution to problem \eqref{eq:bss2} with the same objective value as problem \eqref{eq:bss3}. 
Second, if $p(\zv | \piv^*)$ is not a point mass density, assume $\text{supp}[p(\zv|\piv^*)] = \{\zv_1, \cdots, \zv_K\}$.
We show by contradiction that all the points in $\text{supp}[p(\zv|\piv^*)]$ would give the same objective value $f(\zv)$. Otherwise there exist $\zv_s, \zv_l \in \text{supp}[p(\zv | \piv^*)]$ with $f(\zv_s) < f(\zv_l)$ and $f(\zv_s) \leq f(\zv_k)$ for $k \neq s,l$. By setting %
$\hat{\pi}_j = z_{ij}$, we would have $\E_{\zv \sim p(\zv |\hat{\piv})} f(\zv) < \E_{\zv \sim p(\zv |\piv^*)} f(\zv)$ which contradicts with the assumption that $\piv^*$ is optimal. 
Therefore, we have $f(\zv_1) = f(\zv_2)=\cdots = f(\zv_K)$. Hence all points in $\{\zv_k, g(\zv_k)\}_{k \in [K]}$ are feasible solutions to problem \eqref{eq:bss2} which give the same objective value as what the optimal solution gives in problem \eqref{eq:bss3}. 

In summary, we show problem \eqref{eq:bss2} and \eqref{eq:bss3} have the same global optima and the same objective value at such points, so they are equivalent problems.
\end{proof}

\subsection{Proof of Proposition~\ref{prop:R-ARM-U2G}} 
\label{subsec:proof:prop:R-ARM-U2G}
\begin{proof}
Without loss of generality, we assume that $f(1), f(0)>0$, $\pi = \sigma(\phi)\geq 1/2$, and let $\Delta = f(1) - f(0)$. \\
For the first inequality, direct calculation shows that
\bas{
\var[g_{\text{ARM}}] - \var[g_{R}] =& \E_u[g_{\text{ARM}}^2] - \E_u[g_{R}^2] \\
=& s_1 f(1)^2 + s_2 f(0)^2 + s_3 f(1)f(0) \\
=& (s_1 + s_2 + s_3) f(0)^2 + s_1 \Delta^2 + (2s_1 + s_3) f(0)\Delta \\
\leq& (s_1 + s_2 + s_3) f(0)^2 + (3s_1 + s_3) f(0)^2
}
where $s_1 = -\frac{5}{3}\pi^3 + 3\pi^2 - \frac{3}{2}\pi + \frac{1}{6}$, $s_2 = \frac{1}{3}\pi^3 - \frac{1}{2}\pi + \frac{1}{6}$, $s_3 = \frac{4}{3}\pi^3 - 2\pi^2 + \pi - \frac{1}{3}$.\\
Re-organizing the coefficients, we have
\bas{
\var[g_{\text{ARM}}] - \var[g_{R}] \leq -\big[\pi(1-\frac{\pi}{6})+\frac{21\pi+1}{6}(1-\pi)\big]f(0)^2 \le 0.
}
For the second inequality, we find that
\bas{
\var[g_{\text{U2G}}] - \var[g_{\text{ARM}}] =& \E_u[g_{\text{U2G}}^2] - \E_u[g_{\text{ARM}}^2] = - \frac{(1-\pi)^3}{6}(f(1)-f(0))^2 \le 0.
}
As a consequence, we obtain the conclusion of the proposition.
\end{proof}

\subsection{Proof of Proposition \ref{prop:umvue}}
\label{sec:umvue}
\begin{proof}
We consider a constrained optimization problem
\bas{
\min_g \int_0^1 g^2(u)du, \quad \text{subject to }~\E[g(u)]= \mu,
}
where $\mu = \pi(1-\pi)(f_1 - f_0)$. For simplicity, we omit the conditional notation on $\pi$ if it is clear. The integration can be decomposed into three intervals $[0, 1-\pi], (1-\pi, \pi], (\pi, 1]$, so we can rewrite Eq.~\eqref{eq:linear_form_estimator} into a piece-wise function.
\[
g(u)=
\begin{cases}
g_1(u):=a(u)f_1+b(u)f_0, & u\in [0, 1-\pi]\\ 
g_2(u):=a(u)f_1+b(u)f_1, & u\in (1-\pi, \pi]\\ 
g_3(u):=a(u)f_0+b(u)f_1, & u\in (\pi, 1]
\end{cases}
\]
And we would like to minimize
\[ \bE[g^2(u)] = \bE_{u\in [0, 1-\pi]} [g_1^2(u)] + \bE_{u\in [1-\pi, \pi]} [g_2^2(u)] + \bE_{u\in [\pi, 1]} [g_3^2(u)].\]
If there exists $\pi$ and a positive measure subset $\mathcal{S}_{\pi} \subset (1-\pi, \pi]$ where $|a(u;\pi)+b(u;\pi)| \ge \epsilon_{\pi} > 0$ and $f_1 \neq 0$, then
\bas{
\bE [g^2(u;\pi)] \ge \bE [g_2^2(u;\pi)] \ge \epsilon_{\pi}^2 |\mathcal{S}_{\pi}| f_1^2. 
}
This means that
\bas{
\lim_{\Delta \to 0}~\var[g(u; \pi)] =& \lim_{\Delta \to 0}~\bE [g^2(u; \pi)] - \mu^2 \\
=& \lim_{\Delta \to 0}~\bE [g^2(u; \pi)] \ge \epsilon_{\pi}^2 |\mathcal{S}_{\pi}| f_1^2 > 0,
}
which contradicts the condition~\eqref{eq:var_condition}. Therefore if the estimator has positive SNR for $\pi \in (0,1)$, it has to satisfy $a(u)+b(u)=0$ almost surely (a.s.) in $(1-\pi, \pi]$ or has $f_1 = 0$, where in both cases $g(u)=0$ a.s. for $u \in (1-\pi, \pi]$. Assume $g(u)=0$ for $u \in (1-\pi, \pi]$ a.s., we have
\bas{
\var[g(u;\pi)] =& (1-\pi)[\int_0^{1-\pi} \frac{1}{1-\pi}g^2(u)du + \int_\pi^{1} \frac{1}{1-\pi} g^2(u)du] - \mu^2 \\
\geq & \frac{1}{1-\pi} \Big\{[\int_0^{1-\pi} g(u) du]^2 + [\int_\pi^1 g(u) du]^2\Big\} - \mu^2 \\
=& \frac{1}{1-\pi} (2s^2 -2\mu s + \mu^2) - \mu^2 \\
\geq & \frac{2\pi-1}{2(1-\pi)}\mu^2,
}
where $s = \int_0^{1-\pi} g(u) du$; both inequalities are equalities if and only if $g(u) = \frac{\mu}{2(1-\pi)}$ for $u \in [0,1-\pi] \cup (\pi,1]$. Together with the premise $g(u) = 0$ for $u \in (1-\pi, \pi]$ a.s., we get the U2G estimator. The same argument holds for $\pi < 0.5$ because of the symmetry. 
\end{proof}

\subsection{Proof of Lemma~\ref{lemma:expect_grad}} 

\begin{proof}
\sloppy
In order to ease the presentation, we denote $\zv = \hpgreaterv, \tilde{\zv}=\hplessv$, $\pi_j = \sigma(\phi_j)~\forall j \in \{1,\cdots,p\}$. Then we have $\gv_{\text{U2G}}(\uv;\sigma(\phiv)) = \frac{f(\zv) - f(\tilde{\zv})}{2}\sigma(|\phiv|)\odot(\hpgreaterv - \hplessv)$.
Construct a sequence of binary code $\zv^0=\zv, \zv^1, \cdots, \zv^p=\tilde{\zv}$ by flipping one dimension of $\zv$ to the value in $\tilde{\zv}$ at a time, $i.e.$,
$
\zv^i = (\tilde{z}_1, \cdots, \tilde{z}_i, z_{i+1}, \cdots, z_p)'.
$
Hence $f_{\Xmat, \yv}(\zv)-f_{\Xmat, \yv}(\tilde{\zv}) = \sum_{i=1}^{p} (f_{\Xmat, \yv}(\zv^{i-1})-f_{\Xmat, \yv}(\zv^{i}))$.
We prove the statement for the gradient vector element-wisely. Consider the $j^{th}$ dimension of the gradient vector
\ba{
\resizebox{.9\hsize}{!}{
$ \bE_{\uv} [g(\uv)_j] = \frac{\sigma(|\phi_j|)}{2}  \bE_{\uv} \sum_{i=1}^{p} (f_{\Xmat, \yv}(\zv^{i-1})-f_{\Xmat, \yv}(\zv^{i})) (\mathbf{1}_{[u_j > \sigma(-\phi_j)]} - \mathbf{1}_{[u_j < \sigma(\phi_j)]})$
}
 \label{eq:grad_element}
}
Note that $\zv^{i-1}$ and $\zv^{i}$ only differ on the $i^{th}$ dimension, and different dimensions of $\uv$ are independent. Consider the $i^{th}$ element of the  summation in Eq.\eqref{eq:grad_element} and W.L.O.G. we first assume the logit $\phi_i \geq 0$. For $i\ne j$, due to the symmetry of the sigmoid function, we have
\bas{
& ~  \bE_{\uv} \frac{\sigma(|\phi_j|)}{2}(f_{\Xmat, \yv}(\zv^{i-1})-f_{\Xmat, \yv}(\zv^{i}))
(\mathbf{1}_{[u_j > \sigma(-\phi_j)]} - \mathbf{1}_{[u_j < \sigma(\phi_j)]}) \\
=&~ \frac{\sigma(|\phi_j|)}{2}  \bE_{\uv_{-i}} \left[ \bE_{u_i}[ f_{\Xmat, \yv}(\zv^{i-1})-f_{\Xmat, \yv}(\zv^{i})) (\mathbf{1}_{[u_j > \sigma(-\phi_j)]} - \mathbf{1}_{[u_j < \sigma(\phi_j)]})|\uv_{-i} ] \right] \\
=&~ \frac{\sigma(|\phi_j|)}{2}  \bE_{\uv_{-i}} \left[ (\mathbf{1}_{[u_j > \sigma(-\phi_j)]} - \mathbf{1}_{[u_j < \sigma(\phi_j)]})\bE_{u_i}[ f_{\Xmat, \yv}(\zv^{i-1})-f_{\Xmat, \yv}(\zv^{i})) |\uv_{-i} ] \right] \\
=&~   \bE_{\uv_{-i}}  \Big[ (\mathbf{1}_{[u_j > \sigma(-\phi_j)]} - \mathbf{1}_{[u_j < \sigma(\phi_j)]})  \Big( \int_{0}^{\sigma(-\phi_i)} \big(f_{\Xmat, \yv}(\zv^{i-1}|z_i=0) - f_{\Xmat, \yv}(\zv^i|z_i=1)\big)du_i\\
&~  + \int_{\sigma(\phi_i)}^{1}\big(f_{\Xmat, \yv}(\zv^{i-1}|z_i=1) - f_{\Xmat, \yv}(\zv^i|z_i=0)\big)du_i  \Big) | \uv_{-i}   \Big]\frac{\sigma(|\phi_j|)}{2} \\
=&~  \bE_{\uv_{-i}}  \Big[ (\mathbf{1}_{[u_j > \sigma(-\phi_j)]} - \mathbf{1}_{[u_j < \sigma(\phi_j)]})  \Big(  \big(f_{\Xmat, \yv}(\zv^{i-1}|z_i=0) - f_{\Xmat, \yv}(\zv^{i}|z_i=1)\big)(1-\sigma(\phi_i)) \\
&~    +  \big(f_{\Xmat, \yv}(\zv^{i-1}|z_i=1) - f_{\Xmat, \yv}(\zv^{i}|z_i=0)\big)(1-\sigma(\phi_i))  \Big)| \uv_{-i}  \Big]\frac{\sigma(|\phi_j|)}{2} \\
=&~ 0.
}
 Whereas for $i=j$, we have
\bas{
&~\frac{\sigma(|\phi_j|)}{2}   \bE_{\uv} (f_{\Xmat, \yv}(\zv^{j-1})-f_{\Xmat, \yv}(\zv^{j})) (\mathbf{1}_{[u_j > \sigma(-\phi_j)]} - \mathbf{1}_{[u_j < \sigma(\phi_j)]}) \\
=&~ \frac{\sigma(\phi_j)}{2}  \bE_{\uv_{-j}} \left[ \bE_{u_j}[ (f_{\Xmat, \yv}(\zv^{j-1})-f_{\Xmat, \yv}(\zv^{j})) (\mathbf{1}_{[u_j > \sigma(-\phi_j)]} - \mathbf{1}_{[u_j < \sigma(\phi_j)]})|\uv_{-j} ] \right] \\
=&~  \frac{\sigma(\phi_j)}{2}  \bE_{\uv_{-j}} \left[ \left( \int_{0}^{\sigma(-\phi_j)} (f_{\Xmat, \yv}(\zv^{j-1}|z_j=0) - f_{\Xmat, \yv}(\zv^j|z_j=1))(-1)du_j \right. \right.\\
    &+\left.\left.\left. \int_{\sigma(\phi_j)}^{1} (f_{\Xmat, \yv}(\zv^{j-1}|z_j=1) - f_{\Xmat, \yv}(\zv^j|z_j=0))du_j \right) \right| \uv_{-j}  \right] \\
=&~ \bE_{\uv_{-j}}  \Big[ \sigma(\phi_j)(1-\sigma(\phi_j))  [f_{\Xmat, \yv}(\zv^{j-1}|z_j=1) - f_{\Xmat, \yv}(\zv^j|z_j=0)]  \Big]\\
=&~ \pi_j (1-\pi_j)\E_{\uv}[\Delta_{\zv,j}f ]. 
}
The same derivation holds true when the logit $\phi_i \leq 0$. Hence for each dimension there is only one non-zero element in the summation of Eq.\eqref{eq:grad_element}. 
Rewriting the result in vector form proves the lemma.
\end{proof}

\subsection{ Proof of Lemma~\ref{lemma:expectation_u2g}}

\begin{proof}
Based on Lemma~\ref{lemma:expect_grad}, it suffices to compute the expectation of $\E_{\Xmat, \yv, \uv}[\Delta_{\zv}f ]$ to obtain the conclusion of Lemma~\ref{lemma:expectation_u2g}, where $\Delta_{\zv}f = (\Delta_{\zv,1}f, \cdots, \Delta_{\zv,p}f)$. For any $k \in [p]$, direct application of conditional expectation formulations leads to
\begin{align}
    \E_{\Xmat, \yv, \uv}[\Delta_{\zv, k}f ] = \E[\Delta_{\zv, k}f \Big| \|\zv\|_0 < n - 1] p(\|\zv\|_0< n-1) \label{eq:key_conditional_expectation} \\
    & \hspace{-24em} + \E[\Delta_{\zv, k}f \Big| \|\zv\|_0 = n - 1] p(\|\zv\|_0 = n-1) + \E[\Delta_{\zv, k}f \Big| \|\zv\|_0 \geq n] p(\|\zv\|_0 \geq n). \nonumber
\end{align}
Conditioned on the event $\|\zv\|_{0} < n - 1$, we denote projection matrix $P_{\zv} = \Xmat_{\zv} (\Xmat_{\zv}^{\top} \Xmat_{\zv})^{-1} \Xmat_{\zv}^{\top}$ for any $\zv$ and the OLS estimator $\hat{\alphav}_{\zv} = (\Xmat_{\zv}^{\top} \Xmat_{\zv})^{-1} \Xmat_{\zv}^{\top}\yv$.
Under that event, given the definition of $\deltaf$ we obtain that
\begin{align}
    \deltaf & = \lambda + \frac 1n \norm{ \yv - \Xmat_{\tilde{\zv}}\hat{\alphav}_{\tilde{\zv}}}_{2}^2 - \frac 1n \norm{ \yv - \Xmat_{\zv}\hat{\alphav}_{\zv}}_{2}^2 \nonumber \\
    & = \lambda + \underbrace{\frac 1n \norm{ \yv - \Xmat_{\tilde{\zv}}\hat{\alphav}_{\tilde{\zv}}}_{2}^2 - \frac{1}{n} \norm{ \yv - \Xmat_{\tilde{\zv}}\betav^*_{\tilde{\zv}}}_{2}^2}_{: = T_{1}}  \nonumber \\
    & + \underbrace{\frac 1n \norm{ \yv - \Xmat_{\tilde{\zv}}\betav^*_{\tilde{\zv}}}_{2}^2 - \frac 1n  \norm{ \yv - \Xmat_{\zv}\betav^*_{\zv}}_{2}^2}_{: = T_{2}} + \underbrace{\frac 1n  \norm{ \yv - \Xmat_{\zv}\betav^*_{\zv}}_{2}^2 - \frac 1n \norm{ \yv - \Xmat_{\zv}\hat{\alphav}_{\zv}}_{2}^2}_{: = T_{3}}, \label{eq:lemma_key_equation}
\end{align} 
where $\tilde{\zv} \in \{0,1\}^{ p}$ is such that $\zv$ and $\tilde{\zv}$ only differ on dimension $k$, $i.e.$, $\zv_k=0, \tilde{\zv}_k=1, \zv_j=\tilde{\zv}_j, \forall j\ne k$. Note that, $\|\tilde{\zv}\|_{0} = \|\zv\|_{0} + 1 \leq n - 1$ when $\|\zv\|_{0} < n - 1$; therefore, the OLS estimator $\hat{\alphav}_{\widetilde{\zv}} = (\Xmat_{\widetilde{\zv}}^{\top} \Xmat_{\widetilde{\zv}})^{-1} \Xmat_{\widetilde{\zv}}^{\top}\yv$ is valid. Regarding term $T_{1}$ in equation~\eqref{eq:lemma_key_equation}, direct computation shows that
\begin{align}
    T_{1} = - \frac 1n (\Xmat_{-\tilde{\zv}} \betav^*_{-\tilde{\zv}} + \epsilonv)^{\top} P_{\tilde{\zv}} (\Xmat_{-\tilde{\zv}} \betav^*_{-\tilde{\zv}} + \epsilonv) = -\frac 1n \norm{P_{\tilde{\zv}} (\Xmat_{-\tilde{\zv}} \betav^*_{-\tilde{\zv}} + \epsilonv)}_2^2.
\end{align}
By Lemma~\ref{lemma:tao}, conditioned on $\uv$, the rank of $X_{\tilde{\zv}}$ equals $\|\tilde{\zv}\|_0$ with probability one. Taking the expectation of $T_{1}$ with respect to $\Xmat, \yv$, a key observation is that for any $X_{\tilde{\zv}}$ with full column rank,
\begin{align}
   \bE_{\Xmat, \yv}[T_1|X_{\tilde{\zv}}, \|\zv\|_{0} < n - 1] = - \frac{(\sigma^2 + \norm{\betav_{-\tilde{\zv}}^{*}}_{2}^2)\norm{\tilde{\zv}}_{0}}{n}.
   \label{eq:condexpT1_Xz}
\end{align}
Therefore, we obtain that 
\begin{align}
    \mathbb{E}_{\Xmat, \yv}[T_1\Big| \|\zv\|_0 < n - 1] = \bE_{\Xmat, \yv}[\bE[T_1|X_{\tilde{\zv}}, \|\zv\|_{0} < n - 1]] = - \frac{(\sigma^2 + \norm{\betav_{-\tilde{\zv}}^{*}}_{2}^2)\norm{\tilde{\zv}}_{0}}{n}.
    \label{eq:expT1}
\end{align}
The above result leads to
\begin{align*}
    \mathbb{E}_{\Xmat, \yv, \uv}[T_1\Big| \|\zv\|_0 < n - 1] = - \mathbb{E}_{\uv} \biggr[\frac{(\sigma^2 + \norm{\betav_{-\tilde{\zv}}^{*}}_{2}^2)\norm{\tilde{\zv}}_{0}}{n} \ | \ \|\zv\|_{0} < n -1 \biggr].
\end{align*}
For the term $T_{3}$ in equation~\eqref{eq:lemma_key_equation}, similar argument proves that 
\begin{align*}
    \mathbb{E}_{\Xmat, \yv, \uv}[T_3\Big| \|\zv\|_0 < n - 1] = \mathbb{E}_{\uv} \biggr[\frac{(\sigma^2 + \norm{\betav_{-\zv}^{*}}_{2}^2)\norm{\zv}_{0}}{n} \ | \ \|\zv\|_{0} < n -1 \biggr].
\end{align*}
For the term $T_{2}$ in equation~\eqref{eq:lemma_key_equation}, direct calculation shows that
\begin{align*}
    T_{2} = -(\beta_{k}^{*})^2 \frac{1}{n} \|X_{k}\|^2 - \frac{2}{n} \beta_{k}^{*} X_{k}^{\top} R_{\tilde{\zv}},
\end{align*}
where $R_{\tilde{\zv}} = \Xmat_{-\tilde{\zv}} \betav^*_{-\tilde{\zv}} + \epsilonv$. Therefore, %
$\E_{\Xmat, \yv, \uv}[T_2\Big| \|\zv\|_0 < n - 1] = -(\beta_{k}^{*})^2$.

Putting the above results together, we find that
\begin{align}
    \E_{\Xmat, \yv, \uv}[\Delta_{\zv, k}f \Big| \|\zv\|_0 < n-1] = \lambda - \frac{1}{n} \biggr( (\beta_{k}^{*})^2 (n - \E_{\uv} [ \|\zv\|_{0} \ | \ \|\zv\|_{0} < n-1] - 1) & \nonumber \\
    & \hspace{-18 em} + \sigma^2 + \E_{\uv}[\norm{\betav_{-\zv}^{*}}_{2}^2\ | \ \|\zv\|_{0} < n-1] \biggr). \label{eq:first_expectation}
\end{align}
Conditioned on the event $\|\zv\|_{0} = n - 1$, recall that
\begin{align}
    \deltaf = \lambda - \frac 1n \norm{ \yv - \Xmat_{\zv}\hat{\alphav}_{\zv}}_{2}^2 = \lambda - \frac{1}{n} \norm{ (I_{n} - P_{z}) W_{\zv}}_{2}^2, \nonumber
\end{align}
where $W_{z} = \Xmat_{-\zv} \betav^*_{-\zv} + \epsilonv$. Conditioned on $\uv$, $W_{z} \sim \mathcal{N}(0, (\sigma^2+\|\betav_{-\zv}^*\|_2^2)I_{n})$. Therefore, we obtain that
\begin{align*}
    \E_{\Xmat, \yv} \biggr[ \frac{1}{n} \norm{ (I_{n} - P_{z}) W_{\zv}}_{2}^2 \ | \ \|\zv\|_{0} = n - 1 \biggr] = \frac{\sigma^2 + \| \betav_{-\zv}^{*}\|_{2}^2}{n}. 
\end{align*}
The above inequality shows that
\begin{align}
    \E_{\Xmat, \yv, \uv}[\Delta_{\zv, k}f \Big| \|\zv\|_0 = n - 1] = \lambda - \frac{\sigma^2 + \E_{\uv} [\| \betav_{-\zv}^{*}\|_{2}^2 \ | \ \|\zv\|_{0} = n - 1]}{n}. \label{eq:second_expectation}
\end{align}
Finally, we compute $\E_{\Xmat, \yv, \uv}[\Delta_{\zv, k}f \Big| \|\zv\|_0 \geq n]$. Under the setting $\|\zv\|_{0} \geq n$, we have $\|\tilde{\zv}\|_{0} \geq n + 1$. By Lemma~\ref{lemma:tao}, conditioned on $\uv$, with probability one, we have $\min_{\alphav}\frac{1}{n}~ \norm{\yv - \Xmat_{\zv}\alphav}_2^2 = 0$ and $\min_{\alphav}\frac{1}{n}~ \norm{\yv - \Xmat_{\tilde{\zv}}\alphav}_2^2 = 0$. Therefore, conditioned on $\uv$ and $\|\zv\|_{0} \geq n$, with probability one, we obtain that
\begin{align*}
    \deltaf & = \lambda.
\end{align*}
It directly leads to
\begin{align}
    \E_{\Xmat, \yv, \uv}[\Delta_{\zv, k}f \Big| \|\zv\|_0 \geq n] = \lambda. \label{eq:third_expectation}
\end{align}
Plugging the results of equations~\eqref{eq:first_expectation},~\eqref{eq:second_expectation}, and~\eqref{eq:third_expectation} into the equation~\eqref{eq:key_conditional_expectation}, we obtain the conclusion of the lemma. 
\end{proof}

\subsection{Proof of Proposition~\ref{prop:expected_grad}}

\begin{proof}
We consider cases whether the covariates are in the true active set separately. 

\textbf{Case 1 - $\{j:\beta_j^*=0\}$:} when $\|\zv\|_0< n$, by Lemma~\ref{lemma:expectation_u2g}, we have
\bas{
\E_{\Xmat, \yv, \uv}[g_{\emph{\text{U2G}}}(\uv; \sigma(\phiv))_j] &= \biggr(\lambda - \frac{\sigma^2 + \E_{\uv}[\norm{\betav_{-\zv}^{*}}_{2}^2\Big| \|\zv\|_0 < n]}{n} \biggr) \pi_{j} ( 1 - \pi_{j}) \\ %
 &\geq \biggr( \lambda - \frac{\sigma^2 + \|\betav^{*}\|_{2}^2}{n} \biggr) \pi_{j} ( 1 - \pi_{j}),
}
where the  inequality is due to $\|\betav_{- \zv}^{*}\|_{2}^2 \leq \|\betav^{*}\|_{2}^2$ for all $\zv$. %
Therefore, if $\lambda > (\sigma^2 + \norm{\betav^*}_2^2)/n$, we have $\E_{\Xmat,\yv,\uv}[g_{\text{U2G}}(\uv; \sigma(\phiv))_j] > 0$ for all $j\notin \mathcal{A}$. 

\textbf{Case 2 - $\{j:\beta_j^* \neq 0\}$:} by Hoeffding's inequality for sub-Gaussian random variables, and by the assumption $ p \leq 2\eta^2(n - 1)^2 / \log(n)$, we have
\bas{
p(\|\zv\|_0 \geq n - 1) \leq \exp\left(-\frac{2(n - 1 - \sum_{k=1}^p \pi_j)^2}{p}\right) \leq \frac 1n.
}
Furthermore, simple algebra shows that $$\E\Big[\|\zv\|_0 \Big| \|\zv\|_0< n - 1\Big]  \leq \E[\|\zv\|_0] = \sum_{k=1}^p\pi_k.$$ 
Collecting the previous results and using the result of Lemma~\ref{lemma:expectation_u2g}, we find that

\begin{align}
\E_{\Xmat, \yv, \uv}[g_{\emph{\text{U2G}}}(\uv; \sigma(\phiv))_j] & \leq \biggr( \lambda - \frac{(\beta_{j}^{*})^2 (n - 1)}{n} (1 - \frac{1}{n}) + \frac{(\beta_{j}^{*})^2 (\sum_{k = 1}^{p} \pi_{k})}{n} \biggr) \pi_{j} (1 - \pi_{j}) \notag \\
& \leq \biggr(\lambda - (\beta_{j}^{*})^2 \biggr[\frac{(n-1)^2}{n^2} - \frac{\sum_{k = 1}^{p} \pi_{k}}{n}\biggr]\biggr) \pi_{j} (1 - \pi_{j})  \notag \\
& \leq \biggr( \lambda - (\beta_{j}^{*})^2 \frac{n - 1}{n} (\eta - \frac{1}{n}) \biggr) \pi_{j}(1 - \pi_{j}).
\end{align}

Therefore, as long as $\lambda <\frac{n-1}{n} (\eta - \frac{1}{n}) \min_{k \in \mathcal{A}}{(\beta_k^*)^2} $, we have $\E_{\Xmat, \yv, \uv}[g_{\emph{\text{U2G}}}(\uv; \sigma(\phiv))_j] < 0$ for all $j \in \mathcal{A}$.

Combining the two cases,  by setting
\ba{
\lambda \in \left(\frac{\norm{\betav^*}_2^2 + \sigma^2}{n}, \frac{n-1}{n} (\eta - \frac{1}{n}) \min_{k \in \mathcal{A}}{(\beta_k^*)^2} \right) : = \mathcal{I},
\label{eq:lambda_range}
}
the proposition is proved.
\end{proof}

\subsection{Proof of Theorem \ref{theorem:rate_population}}

\begin{proof}
For the simplicity of the presentation, we denote 
\begin{align*}
    G(\phiv) := \E_{\Xmat, \yv, \uv}[g_{\emph{\text{U2G}}}(\uv; \sigma(\phiv))]/ [\piv (1 - \piv)],
\end{align*}
where $\piv = \sigma(\phiv)$ and the division is element-wise. Furthermore, we denote 
\begin{align*}
    \eta = \min\left\{\sqrt{\frac{p\log(n)}{2 (n - 1)^2}},\frac{\norm{\betav^*}_2^2+ \sigma^2 }{(n-1)\min_{k \in \mathcal{A}}{(\beta_k^*)^2}}
\right\}.
\end{align*}
We now state our proof with two parts.

(a) Since $\lambda \in \mathcal{I}$ and $\sum_{j=1}^p \sigma(\phi_j^{(0)}) \leq \varpi - S$, we have $G(\phi_{j}^{(t)}) < 0$ for all $j \in \cA$. It shows that the updates $\phi_{j}^{(t+1)}$ are monotonically increasing, $i.e.$, $\phi_{j}^{(t+1)} > \phi_{j}^{(t)}$ for any $t \geq 0$ and $j \in \cA$. For any $t \geq 0$, we obtain that
\begin{align*}
    \phi_{j}^{(t+1)} = \phi_{j}^{(0)} - \rho \biggr(\sum_{i = 0}^{t} \E_{\Xmat, \yv, \uv}[g_{\emph{\text{U2G}}}(\uv; \sigma(\phiv^{(i)}))_{j}] \biggr).
\end{align*}
We now study the lower bound of $t$ such that $\phi_{j}^{(t+1)} \geq 0$ for the first time. In order to study that, we assume that $\phi_{j}^{(i)} < 0$ for all $0 \leq i \leq t$. Based on the proof of Proposition~\ref{prop:expected_grad}, for any $0 \leq i \leq t$, we have
\begin{align*}
    \E_{\Xmat, \yv, \uv}[g_{\emph{\text{U2G}}}(\uv; \sigma(\phiv^{(i)}))_{j}] & \leq \biggr( \lambda - (\beta_{j}^{*})^2 \frac{n - 1}{n} (\eta - \frac{1}{n}) \biggr) \sigma(\phi_{j}^{(i)})(1 - \sigma(\phi_{j}^{(i)})) \\
    & \leq \biggr( \lambda - (\beta_{j}^{*})^2 \frac{n - 1}{n} (\eta - \frac{1}{n}) \biggr) \sigma(\phi_{j}^{(0)})(1 - \sigma(\phi_{j}^{(0)})),
\end{align*}
where the second inequality is due to the fact that $\sigma_{j}^{(0)} \leq \sigma_{j}^{(i)} < 0$ for all $0 \leq i \leq t$. Collecting the above results, we find that
\begin{align*}
    \phi_{j}^{(t+1)} \geq \phi_{j}^{(0)} - \rho t \biggr( \lambda - (\beta_{j}^{*})^2 \frac{n - 1}{n} (\eta - \frac{1}{n}) \biggr) \sigma(\phi_{j}^{(0)})(1 - \sigma(\phi_{j}^{(0)})).
\end{align*}
Therefore, as long as $t \geq \frac{\phi_{j}^{(0)}}{\rho \biggr( \lambda - (\beta_{j}^{*})^2 \frac{n - 1}{n} (\eta - \frac{1}{n}) \biggr) \sigma(\phi_{j}^{(0)})(1 - \sigma(\phi_{j}^{(0)}))} : = T_{1}$, we have $\phi_{j}^{(t+1)} \geq 0$. 

By using the inequality $\sigma(x + y) \leq \sigma(x) + y \sigma'(x)$ for any $x, y > 0$, for any $j \in \cA$ and $t \geq T_{1}$ we obtain that
\begin{align}
    1 - \sigma(\phi_{j}^{(t+1)}) & \geq 1 - \sigma(\phi_{j}^{(t)}) + \rho G(\phi_{j}^{(t)}) \sigma(\phi_{j}^{(t)}) (1 - \sigma( \phi_{j}^{(t)})) \sigma'(\phi_{j}^{(t)}) \nonumber \\
    & = \parenth{ 1 - \sigma(\phi_{j}^{(t)})} \brackets{ 1 + \rho \sigma^2(\phi_{j}^{(t)}) G(\phi_{j}^{(t)}) \parenth{1 - \sigma(\phi_{j}^{(t)})}}. \label{eq:lower_bound_first}
\end{align}
Based on the result of Lemma~\ref{lemma:expectation_u2g}, we find that
\begin{align}
    G(\phi_{j}^{(t)}) \geq \lambda - \frac{\sigma^2 + \|\betav^{*}\|_{2}^2 + (n - 1) (\beta_{i}^{*})^2}{n}, \label{eq:useful_bound_first}
\end{align}
which is due to $\norm{\betav_{-\zv}^{*}}_{2}^2\leq \norm{\betav^*}$ and $\norm{\zv}_0 \geq 0$ for all $\zv$.
Plugging this inequality into equation~\eqref{eq:lower_bound_first}, we achieve the conclusion of the lower bound in part (a) with $c_{1} = \rho[(\sigma^2 + \|\betav^{*}\|_{2}^2 + (n - 1) (\beta_{i}^{*})^2)/n - \lambda] > 0$.

Regarding the upper bound, we first prove that 
\begin{align}
    \sigma(x + y) \geq \sigma(x) + \exp(-y)y \sigma'(x) \label{eq:lemma_pop_active}
\end{align}
for all $x, y > 0$. In fact, from the mean-value theorem, there exists $\xi \in (x, x+ y)$ such that $\sigma'(\xi) = \parenth{ \sigma(x + y) - \sigma(x)}/y$. Since the function $\sigma'(.)$ is monotonically decreasing in $(0, \infty)$, we have $\sigma'(\xi) \geq \sigma'(x+y)$. Simple calculation yields that $\sigma'(x+y)/\sigma'(x) \geq \exp(-y)$ for all $x, y > 0$. Therefore, we obtain the conclusion of inequality~\eqref{eq:lemma_pop_active}. Given the inequality~\eqref{eq:lemma_pop_active}, for any $t \geq T_{1}$ we have
\begin{align}
    1 - \sigma(\phi_{j}^{(t+1)}) & \leq 1 - \sigma(\phi_{j}^{(t)}) \nonumber \\
    & + \exp \parenth{ \rho \sigma(\phi_{j}^{(t)}) (1 - \sigma(\phi_{j}^{(t)})) G(\phi_{j}^{(t)})} \rho G(\phi_{j}^{(t)}) \sigma^2(\phi_{j}^{(t)}) (1 - \sigma(\phi_{j}^{(t)}))^2. \label{eq:upper_bound_first} 
\end{align}
Since $G(\phi_{j}^{(t)}) < 0$ and $\phi_{j}^{(t)} \geq 0$, an application of the bound~\eqref{eq:useful_bound_first} and standard inequality $\sigma(\phi_{j}^{(t)})(1 - \sigma(\phi_{j}^{(t)})) \leq 1/4$ leads to
\begin{align}
    \sigma(\phi_{j}^{(t)}) (1 - \sigma(\phi_{j}^{(t)})) G(\phi_{j}^{(t)}) \geq \frac{1}{4}\biggr(\lambda - \frac{\sigma^2 + \|\betav^{*}\|_{2}^2 + (n - 1) (\beta_{i}^{*})^2}{n}\biggr). \label{eq:another_bound_first}
\end{align}
Furthermore, using the proof argument of Proposition~\ref{prop:expected_grad}, we find that
\begin{align}
    G(\phi_{j}^{(t)}) \leq \lambda - (\beta_{i}^{*})^2 \frac{n - 1}{n} (\eta - \frac{1}{n}), \label{eq:another_bound_second}
\end{align}
where $\eta = \min\left\{\sqrt{p\log(n)/2n^2},(\norm{\betav^*}_2^2+ \sigma^2 )/(n-1)\min_{k \in \mathcal{A}}{(\beta_k^*)^2}
\right\}$. Plugging the bounds~\eqref{eq:another_bound_first} and~\eqref{eq:another_bound_second} into the equation~\eqref{eq:upper_bound_first}, we obtain the conclusion of the upper bound in part (a) with $C_{1}$ is given by
\begin{align*}
    C_{1} = \rho \exp \parenth{ \frac{\rho}{4} \biggr(\lambda - \frac{\sigma^2 + \|\betav^{*}\|_{2}^2 + (n - 1) (\beta_{i}^{*})^2}{n}\biggr)} \biggr( (\beta_{i}^{*})^2 \frac{n - 1}{n} (\eta - \frac{1}{n}) - \lambda \biggr) > 0.
\end{align*}
As a consequence, we obtain the conclusion of the upper bound in part (a). \\

(b) The proof of part (b) follows the same line of proof argument as that in part (a). Indeed, since $\lambda \in \mathcal{I}$ and $\sum_{j=1}^p \sigma(\phi_j^{(0)}) \leq \varpi - S$, we have $G(\phi_{j}^{(t)}) > 0$ for all $j \notin \cA$. It shows that the updates $\phi_{j}^{(t+1)}$ are monotonically decreasing, $i.e.$, $\phi_{j}^{(t+1)} < \phi_{j}^{(t)}$ for any $t \geq 0$ and $j \notin \cA$. Similar to part (a), we first find a lower bound on $t$ such that $\phi_{j}^{(t+1)} \leq 0$. In fact, we assume that $\phi_{j}^{(i)} > 0$ for all $0 \leq i \leq t$. Then, based on the proof of Proposition~\ref{prop:expected_grad}, for any $0 \leq i \leq t$, we find that
\begin{align*}
    \E_{\Xmat, \yv, \uv}[g_{\emph{\text{U2G}}}(\uv; \sigma(\phiv^{(i)}))_{j}] & \geq \biggr( \lambda - \frac{\sigma^2 + \|\betav^{*}\|_{2}^2}{n} \biggr) \sigma(\phi_{j}^{(i)})(1 - \sigma(\phi_{j}^{(i)})) \\
    & \geq \biggr( \lambda - \frac{\sigma^2 + \|\betav^{*}\|_{2}^2}{n} \biggr) \sigma(\phi_{j}^{(0)})(1 - \sigma(\phi_{j}^{(0)})),
\end{align*}
where the second inequality is due to the fact that $\sigma_{j}^{(0)} \geq \sigma_{j}^{(i)} > 0$ for any $0 \leq i \leq t$. Therefore, we obtain that
\begin{align*}
    \phi_{j}^{(t+1)} \leq \phi_{j}^{(0)} - \rho t \biggr( \lambda - \frac{\sigma^2 + \|\betav^{*}\|_{2}^2}{n} \biggr) \sigma(\phi_{j}^{(0)})(1 - \sigma(\phi_{j}^{(0)})).
\end{align*}
The above inequality demonstrates that as long as $t \geq \frac{\phi_{j}^{(0)}}{\rho \biggr( \lambda - \frac{\sigma^2 + \|\betav^{*}\|_{2}^2}{n} \biggr) \sigma(\phi_{j}^{(0)})(1 - \sigma(\phi_{j}^{(0)}))} : = T_{2}$, we have $\phi_{j}^{(t+1)} \leq 0$.

Now, we can check that $\sigma(x - y) \geq \sigma(x) - y \sigma'(x)$ for all $x < 0$ and $y >0$. Using this inequality, for any $j \notin \mathcal{A}$ and $t \geq T_{2}$ we obtain that
\begin{align*}
    \sigma(\phi_{j}^{(t+1)}) \geq \sigma(\phi_{j}^{(t)}) - \rho G(\phi_{j}^{(t)}) \sigma(\phi_{j}^{(t)}) (1 - \sigma( \phi_{j}^{(t)})) \sigma'(\phi_{j}^{(t)}). 
\end{align*}
From the result of Lemma~\ref{lemma:expectation_u2g}, it is clear that $G(\phi_{j}^{(t)}) \leq \lambda$ for all $j \notin \mathcal{A}$. Therefore, we have
\begin{align*}
    \sigma(\phi_{j}^{(t+1)}) \geq \sigma(\phi_{j}^{(t)}) \biggr[1 - \lambda \rho (1 - \sigma(\phi_{j}^{(t)}))^2 \sigma(\phi_{j}^{(t)}) \biggr].
\end{align*}
We achieve the conclusion of the lower bound in part (b) with $c_{2} = \rho \lambda > 0$. 

Moving to the upper bound, with similar argument as that of equation~\eqref{eq:lemma_pop_active}, we can check that
\begin{align*}
    \sigma(x - y) \geq \sigma(x) - \exp(-y)y \sigma'(x),
\end{align*}
for any $x < 0$ and $y > 0$. Given that inequality, for any $t \geq T_{2}$ we obtain that
\begin{align}
    \sigma(\phi_{j}^{(t+1)}) & \leq \sigma(\phi_{j}^{(t)}) \nonumber \\
    & \hspace{-2 em} - \exp \parenth{ - \rho \sigma(\phi_{j}^{(t)}) (1 - \sigma(\phi_{j}^{(t)})) G(\phi_{j}^{(t)})} \rho G(\phi_{j}^{(t)}) \sigma(\phi_{j}^{(t)}) (1 - \sigma( \phi_{j}^{(t)})) \sigma'(\phi_{j}^{(t)}). \label{eq:upper_bound_second} 
\end{align}
Based on the result of Lemma~\ref{lemma:expectation_u2g}, we can check that $\sigma(\phi_{j}^{(t)}) (1 - \sigma(\phi_{j}^{(t)}) G(\phi_{j}^{(t)}) \leq \lambda/ 4$ and $G(\phi_{j}^{(t)}) \geq \lambda - (\sigma^2 + \|\betav^{*}\|_{2}^2)/n$. Putting the above results together, we have
\begin{align*}
    \sigma(\phi_{j}^{(t+1)}) \leq \sigma(\phi_{j}^{(t)}) \brackets{1 - C_{2} (1 - \sigma(\phi_{j}^{(t)}))^2 \sigma(\phi_{j}^{(t)})},
\end{align*}
for any $t \geq T_{2}$ where $C_{2} = \rho \exp \parenth{ - \rho \lambda/ 4} \biggr(\lambda - (\sigma^2 + \|\betav^{*}\|_{2}^2)/n\biggr)>0$. As a consequence, we reach the conclusion of the upper bound in part (b).
\end{proof}

\section{Additional Results}
\label{sec:additional}
In this appendix, we provide additional experimental results. For the Prostate cancer dataset, Figure~\ref{fig:corr} shows the pairwise correlations of the two types of chosen covariates in the true active set. Table~\ref{tab:prostate2} summarizes the results for the variable selection methods when the multi-collinearity in the true active set is high.

\begin{figure}[ht]
    \centering
    \subfloat[Moderate multi-collinearity]
    {{\includegraphics[width=7cm]{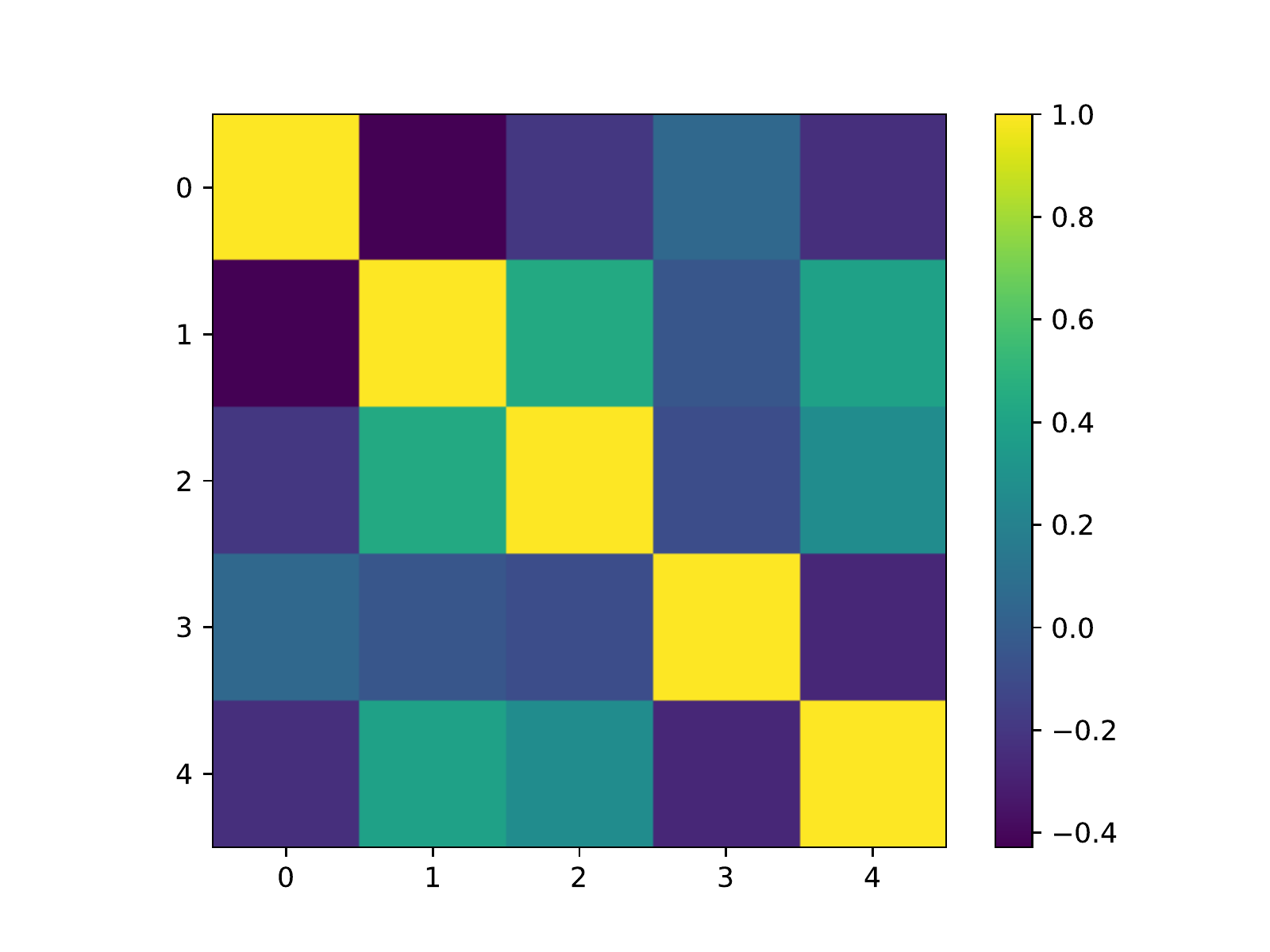} }} \hspace{-12mm}
    \subfloat[High multi-collinearity] 
    {{\includegraphics[width=7cm]{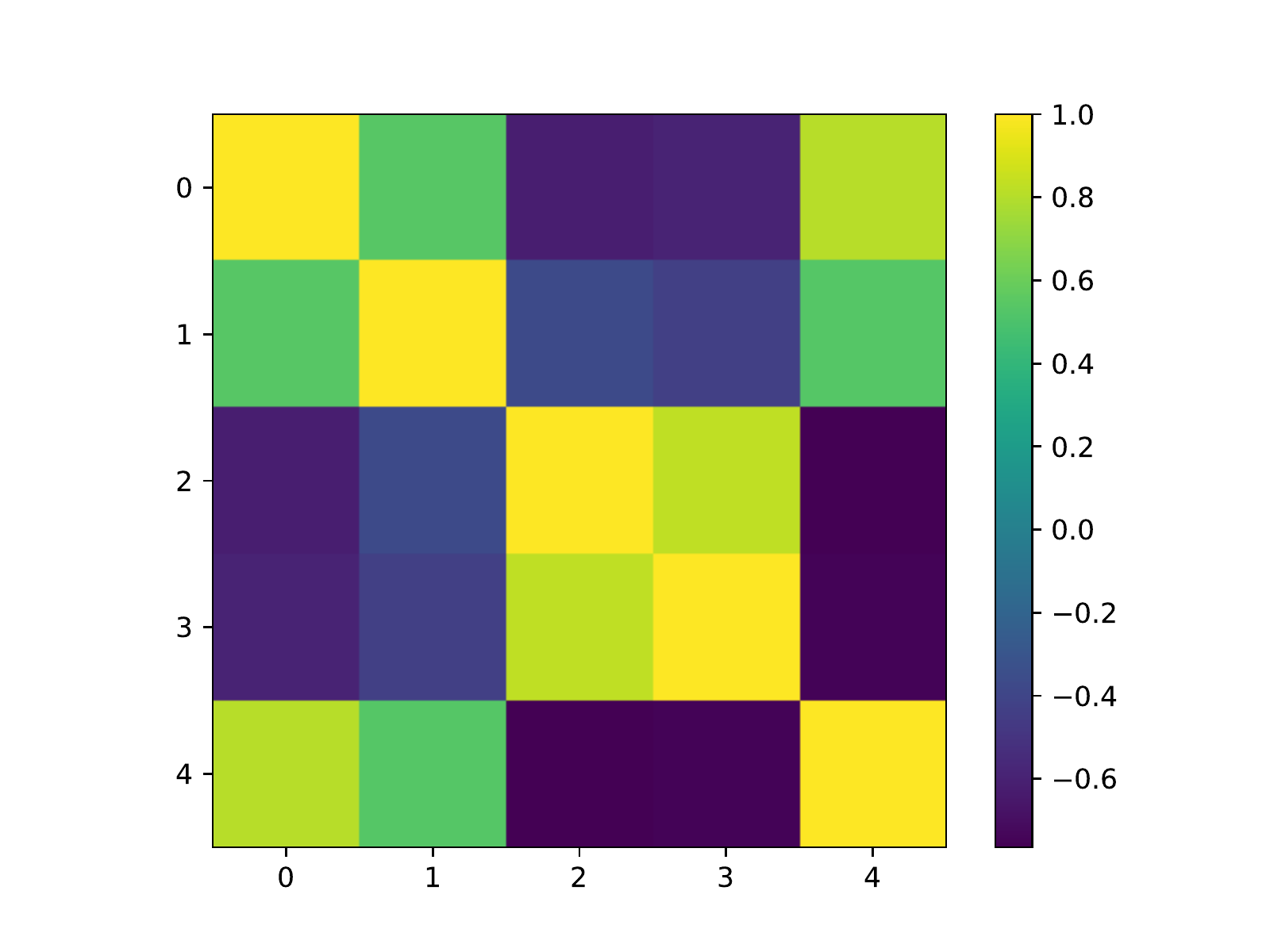} }}%
    \caption{\small Two types of correlations between the covariates in the true active sets, Prostate cancer dataset. }%
    \label{fig:corr}%
\end{figure}

\begin{figure}[ht]

	\centering{
		{{\includegraphics[width=0.47\textwidth]{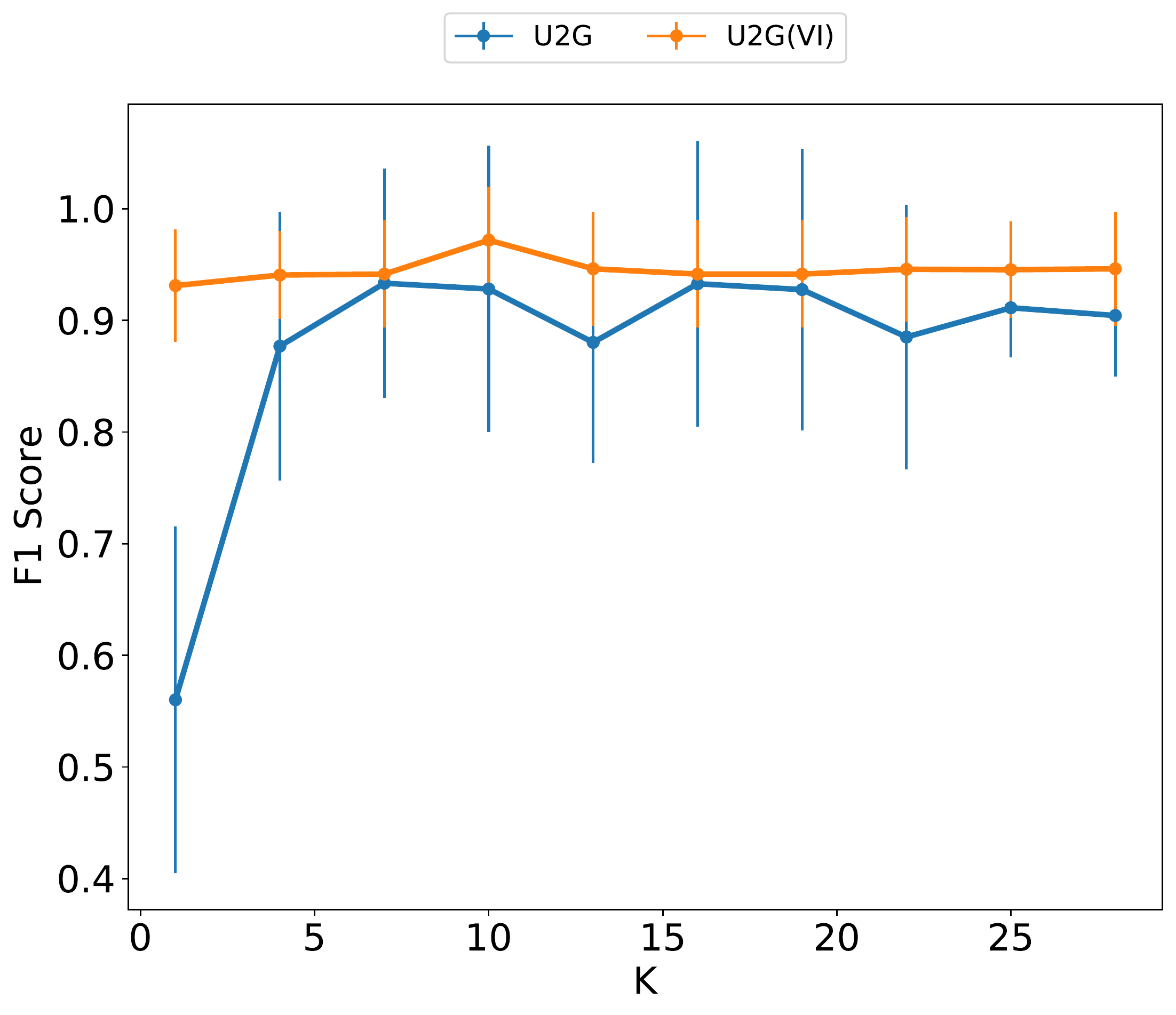} }} ~
		{{\includegraphics[width=0.47\textwidth]{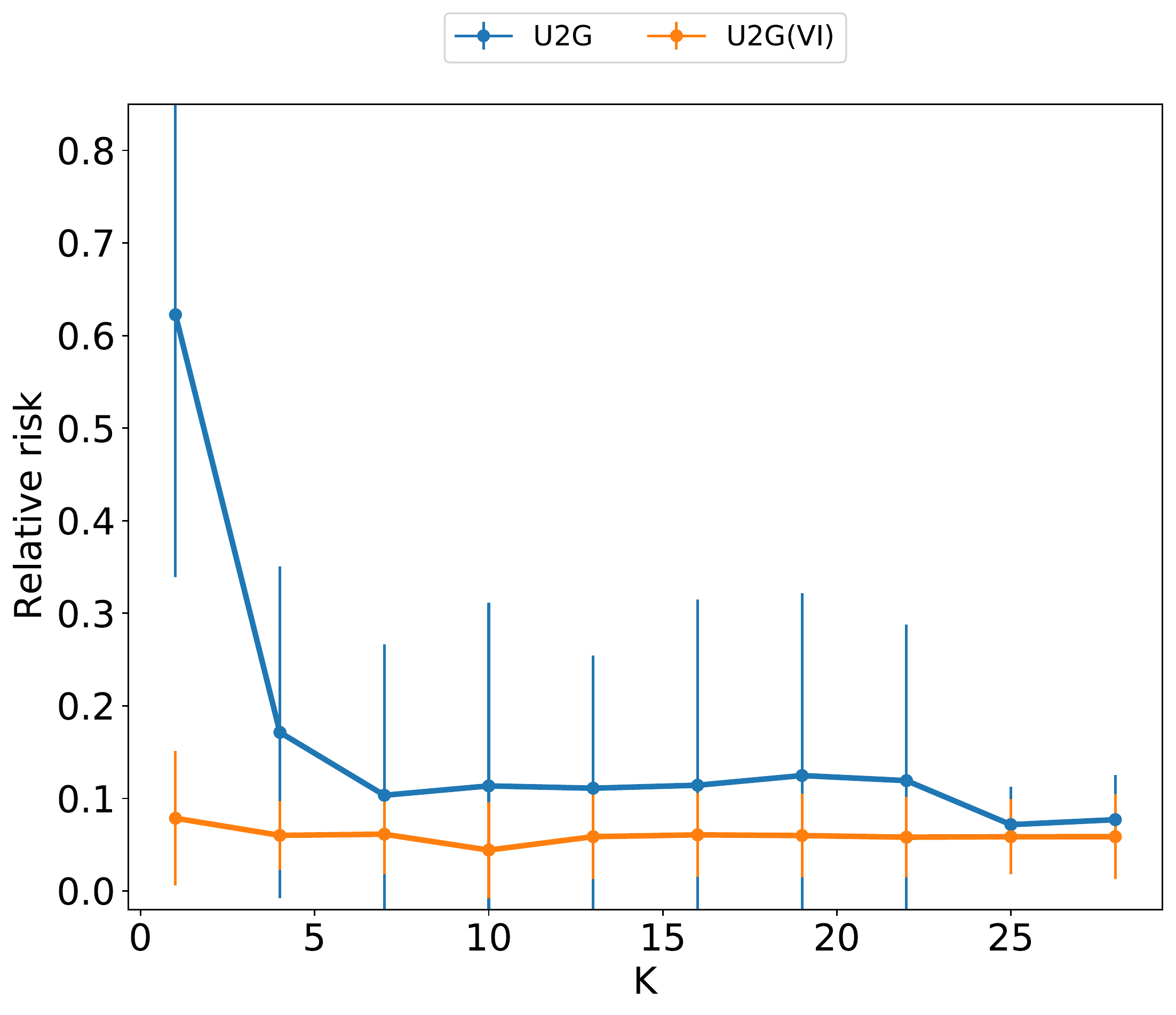} }} 
		\caption{  \small The F1 score and RR of U2G and U2G-VI over different number of samples K in estimating the gradient. U2G performance increases with K when K is small and U2G-VI has similar performance across different K. }
		\label{fig:simu2-K}
	}
\end{figure}

\begin{table}
\centering
\caption{ \small Results of the variable selection simulation study, with $n=100$, $p=1000$, $S=10$, $\text{SNR}=5$. Reported results are the mean of 100 independent trials.}  \label{tab:exp2-appendix} 
\begin{tabular}{cccccccc}
\toprule
&Precision&Recall& F1&Nonzero& RR& RTE&PVE \\
 \midrule

 \textsc{Lasso} &    0.162 &  0.995&  0.277 & 63.70  & 0.259 &  2.299 &  0.616 \\
   
 SCAD  &    0.272  &  1.000  &  0.422 &  39.10  &  0.050  &  1.245  &  0.792  \\

 MIO &0.674&0.674&0.674&10.00&0.444&3.220&0.463 \\
 
 Fast-BSS & 0.945  &  0.940  &  0.942  &  9.90 &  0.109  &  1.547  &  0.742\\ 
 
  REINFORCE & 0.011 & 0.497 & 0.021 & 471.6 & 1.085 & 6.426 & - \\ 

 ARM$_0$ & 0.906 & 0.906 & 0.904 & 10.01 & 0.154 & 1.771 & 0.705 \\

 U2G & 0.896 &  0.980   &      0.934  &  11.05  &       0.078  &  1.391 &  0.768 \\

 ARM$_0$(VI) & 0.954 & 0.967 & 0.960 & 10.13 & 0.075 & 1.374 & 0.771 \\

 U2G(VI)&  0.923 &  1.000     &      0.950 &  10.90  &        0.050 &  1.256 &  0.791 \\
\bottomrule
\end{tabular}
\end{table}

\begin{table}[ht]
\centering
\caption{\small Results of the Prostate cancer dataset, with $n=102, p=1000, S = 5$, \emph{high} multi-collinearity. Reported results are the average of 100 independent trials.}
\begin{tabular}{cccccccc}
\toprule
&Precision&Recall & F1&Nonzero& RR& RTE&PVE \\
 \midrule
 \textsc{LASSO} & 0.091 & 0.712 & 0.160 & 44.9 & 0.090 & 1.454 & 0.759 \\  

  SCAD & 0.270 & 0.368 & 0.306 & 7.04 & 0.427 & 3.144 & 0.478 \\

Fast-BSS & 0.336 & 0.302 & 0.312 & 4.58 & 0.194 & 1.982 & 0.672 \\
  U2G & 0.852 & 0.796 & 0.818 & 4.73 & 0.058 & 1.292 & 0.785 \\  

  U2G(VI) & 0.936 & 0.812 & 0.863 & 4.34 & 0.057 & 1.286 & 0.786 \\ 
\bottomrule
\end{tabular}
\label{tab:prostate2}
\end{table}

For the compressive sensing, as shown in Figure~\ref{fig:cs_high} and Table~\ref{tab:cs2}, when $\text{SNR}$ is high, all methods can achieve low prediction error but BP and BCS cannot set the coefficients of the non-signal covariates as exact zero. As shown in Figure~\ref{fig:cs_low}, when $\text{SNR}$ is low, methods with $L_1$ penalty cannot recover strong sparsity, but our method with $L_0$ penalty can.

\begin{figure}[ht]
\centering
 \includegraphics[width=0.8\textwidth]{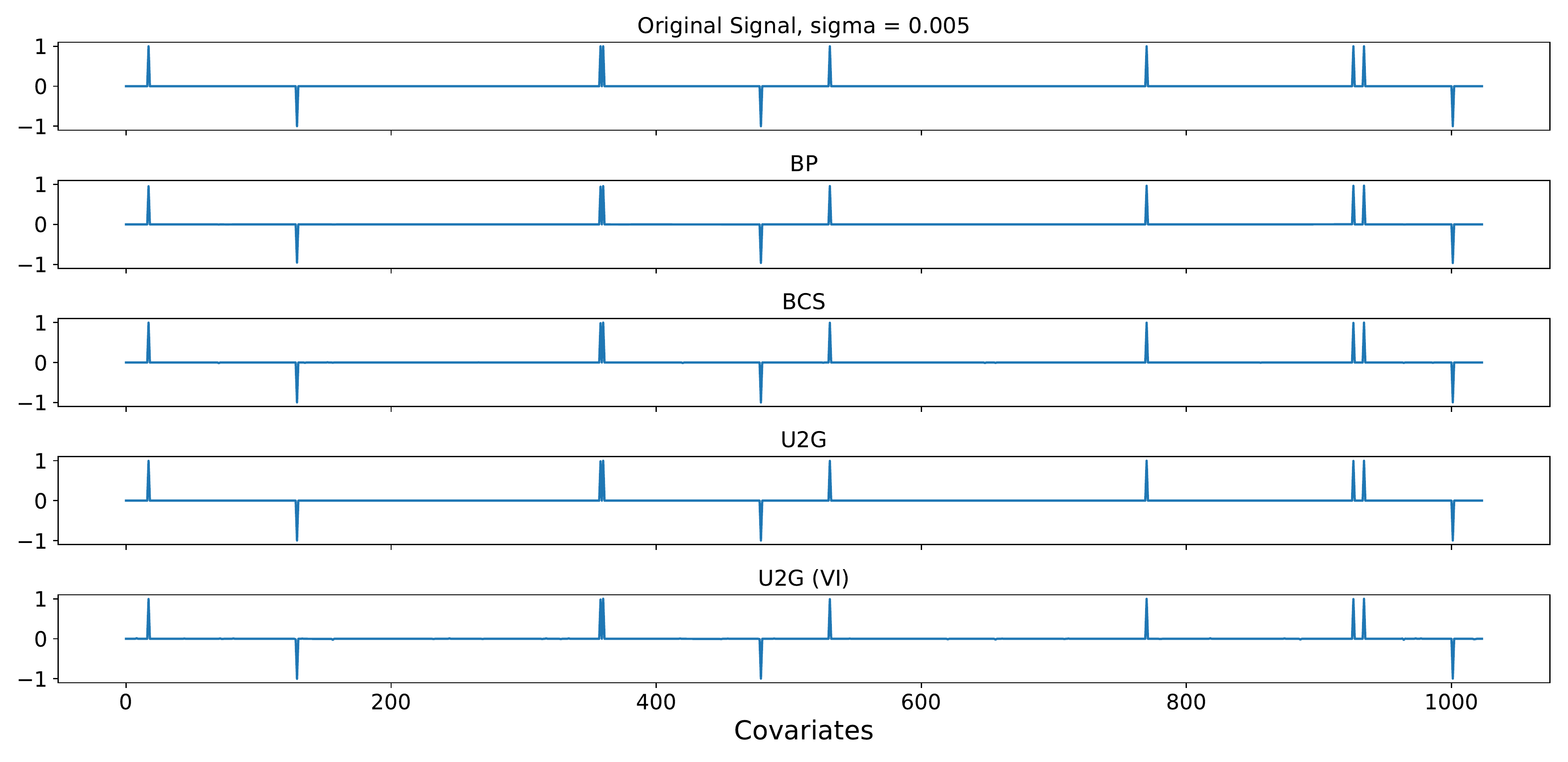}
\caption{\small Results of compressive sensing when SNR$_d$ is high, with $n = 500$, $p = 1024$, $\sigma = 0.005$.}
\label{fig:cs_high}
\end{figure}

\begin{table}[ht]
\centering
\caption{\small Results of  compressive sensing when the $\text{SNR}$ is high, with $n = 500$, $p = 1024$, $\sigma = 0.005$. }
\begin{tabular}{cccccccc}
\toprule
&Precision&Recall & F1&Nonzero& RR& RTE&PVE \\
 \midrule
BP &0.009&1.000&0.019&1024&1e-3&508.0&0.9987\\
BCS &0.322&1.000&0.488&31&1e-4&55.4&0.9998\\
U2G &1.000&1.000&1.000&10&2e-5&9.2&1.0000\\
U2G(VI) & 1.000 & 1.000 & 1.000 & 10 & 3e-5 & 13.7 & 1.0000\\
\bottomrule
\end{tabular}
\label{tab:cs2}
\end{table}

\end{document}